\newcommand{\E}{\mathbb{E}}
\def\ba{\begin{array}}
\def\ea{\end{array}}
\def\0{{\bf 0}}
\def\b{{\bf b}}
\def\e{{\bf e}}
\def\h{{\bf h}}
\def\x{{\bf x}}
\def\y{{\bf y}}
\def\v{{\bf v}}
\def\E{{\mathbb E}}
\newcommand{\ket}[1]{| #1 \rangle}
\newcommand{\bra}[1]{\langle #1|}
\newcommand{\cs}[1]{{\color{blue} CS: #1}}
\newcommand{\be}{\begin{equation}}
\newcommand{\ee}{\end{equation}}
\newcommand{\bea}{\begin{eqnarray}}
\newcommand{\eea}{\end{eqnarray}}
\newcommand{\bes}{\begin{equation*}}
\newcommand{\ees}{\end{equation*}}
\newcommand{\beas}{\begin{eqnarray*}}
\newcommand{\eeas}{\end{eqnarray*}}
\newtheorem*{rep@theorem}{\rep@title}
\newcommand{\newreptheorem}[2]{%
\newenvironment{rep#1}[1]{%
 \def\rep@title{#2 \ref{##1} (restated)}%
 \begin{rep@theorem}}%
 {\end{rep@theorem}}}
\newtheorem{thm}{Theorem}[section]
\newtheorem{cor}[thm]{Corollary}
\newtheorem{lem}[thm]{Lemma}
\newtheorem{prop}[thm]{Proposition}
\newtheorem{defn}[thm]{Definition}
\newtheorem{rem}[thm]{Remark}
\newtheorem{eg}[thm]{Example}
\newtheorem{fact}[thm]{Fact}
\newtheorem{prob}[thm]{Problem}
\newtheorem*{thm*}{Theorem}
\newtheorem*{lem*}{Lemma}
\newtheorem*{prop*}{Proposition}
\newtheorem*{definition*}{Definition}
\newtheorem*{prob*}{Problem}
\newtheorem*{ques*}{Question}
\numberwithin{equation}{section}
\title{ Quantum and classical query complexities of functions of matrices }
\author[1,2]{Ashley Montanaro\thanks{ashley.montanaro@bristol.ac.uk}}
\author[3]{Changpeng Shao\thanks{changpeng.shao@amss.ac.cn}}
\affil[1]{School of Mathematics, University of Bristol, Bristol BS8 1UG, UK}
\affil[2]{Phasecraft Ltd. Bristol BS1 4XE, UK}
\affil[3]{Academy of Mathematics and Systems Science, Chinese Academy of Sciences, Beijing, 100190 China}
\date{\today}
\begin{document}

\maketitle

\begin{abstract}
Let $A$ be an $s$-sparse Hermitian matrix, $f(x)$ be a univariate function, and $i, j$ be two indices. In this work, we investigate the query complexity of approximating $\bra{i} f(A) \ket{j}$. We show that for any continuous function $f(x):[-1,1]\rightarrow [-1,1]$, the quantum query complexity of computing $\bra{i} f(A) \ket{j}\pm \varepsilon/4$ is lower bounded by $\Omega(\widetilde{\deg}_\varepsilon(f))$. The upper bound is at most quadratic in $\widetilde{\deg}_\varepsilon(f)$ and is linear in $\widetilde{\deg}_\varepsilon(f)$ under certain mild assumptions on $A$.
Here the approximate degree $\widetilde{\deg}_\varepsilon(f)$ is the minimum degree such that there is a polynomial of that degree approximating $f$ up to additive error $\varepsilon$ in the interval $[-1,1]$. We also show that the classical query complexity is lower bounded by $\widetilde{\Omega}((s/2)^{(\widetilde{\deg}_{2\varepsilon}(f)-1)/6})$ for any $s\geq 4$. Our results show that the quantum and classical separation is exponential for any continuous function of sparse Hermitian matrices, and also
imply the optimality of implementing smooth functions of sparse Hermitian matrices by quantum singular value transformation. As another hardness result, we show that entry estimation problem (i.e., deciding $\bra{i} f(A) \ket{j}\geq \varepsilon$ or $\bra{i} f(A) \ket{j}\leq -\varepsilon$) is BQP-complete for any continuous function $f(x)$ as long as its approximate degree is large enough.
The main techniques we used are the dual polynomial method for functions over the reals, linear semi-infinite programming, and tridiagonal matrices.
\end{abstract}

\section{Introduction}

The quantum query (or black-box) model is an important model in studying quantum advantage over classical algorithms. In this model, we are given an oracle that allows us to query the inputs $x_i, i\in[n]$ in superposition, and the goal is to determine some properties of an unknown object $f(x_1,\ldots,x_n)$ (e.g., an unknown function) using as few queries as possible. 
Many famous quantum algorithms fit naturally into the query model, such as the period-finding problem (a key subroutine of Shor's algorithm \cite{shor1999polynomial}) and Grover's algorithm \cite{grover1996fast}, Simon's algorithm \cite{simon1997power}, Deutsch-Jozsa's algorithm \cite{deutsch1992rapid}, Bernstein-Vazirani's algorithm \cite{bernstein1993quantum}, etc. 
Another reason that makes the query model important is that one can prove nontrivial lower bounds and so can bound the potential quantum advantage \cite{bennett1997strengths}.

Quantum singular value transformation (QSVT) is a unified framework that encompasses many major quantum algorithms \cite{gilyen2018quantum}. Given a univariate bounded real-valued polynomial $P(x)$ and a Hermitian matrix $A$ with a bounded operator norm (e.g., $\|A\|\leq 1$) that is encoded into a quantum circuit, one can construct a new quantum circuit that implements $P(A)$ by QSVT. The complexity of this procedure is mainly dominated by the degree of $P(x)$.
More generally, for any function $f(x)$ that has an efficiently computable polynomial approximation, we can also apply QSVT to approximately implement $f(A)$.
Functions of matrices play an important role in exploring quantum advantage. For example, if $f(x)=e^{ixt}$, then $f(A)$ corresponds to Hamiltonian simulation; if $f(x)=1/x$, then $f(A)$ corresponds to the matrix inversion problem. These problems are BQP-complete and believed to be intractable for classical computers. For these problems, many efficient quantum algorithms were proposed, some of which use QSVT. 
As mentioned above, the complexity of the algorithms based on QSVT is mainly determined by the degree of some polynomials.
It was proved in \cite[Theorem 73]{gilyen2018quantum} that the lower bound of the complexity of QSVT is roughly $\Omega(\max_{x} |f'(x)|)$, which suggests that the dependence on the degree is optimal for many functions that are relevant to applications.

In this work, we consider the problem of computing an entry of functions of matrices (with classical output) directly, via an arbitrary method that may not use QSVT. We prove a quantum query complexity lower bound based on the approximate degree, which implies a (tight) lower bound on QSVT in terms of the approximate degree directly. We will also study the classical query complexity of the same problem and show that the quantum-classical separation is exponential.


Before stating our main results, we first recall some concepts.

\begin{defn}[Functions of matrices \cite{higham2008functions}]
\label{defn:Functions of matrices}
Let $A$ be a Hermitian matrix with eigenvalue decomposition $A= U D U^\dag$, and let $f(x)$ be a univariate function. We define $f(A) := U f(D) U^\dag$, where $f$ is applied to the diagonal entries of $D$.
\end{defn}

\begin{defn}[$\varepsilon$-approximate degree]
\label{defn:min deg}
Let $f(x): [-1,1] \rightarrow [-1,1]$ be a function and $\varepsilon\in[0,1]$, we define the $\varepsilon$-approximate degree of $f(x)$ as
\[
\widetilde{\deg}_{\varepsilon}(f) = \min\{d\in\mathbb{N}: |f(x) -g(x)| \leq \varepsilon \text{ for all } x\in [-1,1], g(x) \text{ is a polynomial of degree } d \}.
\]
\end{defn}

The approximate degree can be defined similarly for any function from an interval $[a,b]$ to the real field $\mathbb{R}$, e.g., see \cite{aggarwal_et_al:LIPIcs.CCC.2022.22,sachdeva2014faster}. In this work, our focus is on the interval $[-1,1]$. One reason is that with appropriate scaling, we can change the domain and the range of $f(x)$. Another reason is fitting the framework of quantum computing. Namely, when we design quantum algorithms for functions of matrices, we usually focus on bounded functions on the interval $[-1,1]$, e,g., see \cite[Theorem 56]{gilyen2018quantum}. 

\begin{defn}[Oracles for sparse matrices]
\label{defn:oracles}
Let $A=(A_{ij})_{n\times n}$ be a sparse Hermitian matrix. We assume that we can query $A$ through the following two oracles:
\be
\mathcal{O}_1 : \ket{i} \ket{j} \mapsto \ket{i} \ket{\nu_{ij}},
\label{oracle1}
\ee
where $\nu_{ij}$ is the position of the $j$-th nonzero entry of the $i$-th row. The second oracle is
\be
\mathcal{O}_2: \ket{i,j} \ket{0} \mapsto \ket{i,j} \ket{A_{ij}}.
\label{oracle2}
\ee
We assume that we can also use the above two oracles in superposition.
\end{defn}


In this work, we use $\ket{1},\dots,\ket{n}$ to denote the standard basis for $\mathbb{C}^n$. So the $(i,j)$-th entry of $A$ is $\langle i|A|j\rangle$, which is more intuitive. This is a little different from the standard quantum state notation, which usually starts from $\ket{0}$. By sparse, we mean the number of nonzero entries in each row of $A$ is polylogarithmic in the dimension.
In the classical case, we also assume that we have oracles similar to (\ref{oracle1}), (\ref{oracle2}) to query matrix $A$. The difference is that we cannot use them in superposition now. 
The main problem we will study in this paper is the following.

\begin{prob}
\label{problem}
Let $f(x):[-1,1]\rightarrow [-1,1]$ be a univariate function and let $A$ be a sparse Hermitian matrix with $\|A\|\leq 1$ and two query oracles (\ref{oracle1}), (\ref{oracle2}). Let $i,j$ be two indices and $\varepsilon\in(0,1]$ be the accuracy. What is the query complexity of computing $\langle i|f(A)|j\rangle \pm \varepsilon?$\footnote{Generally, let $f(x):[-\alpha,\alpha] \rightarrow [-M,M]$. If $M>1$, then it is possible that $\langle i|f(A)|j\rangle$ can be very large, such as $f(x)=x^d$. In this case, the relative error is more appropriate, i.e.,  computing $\langle i|f(A)|j\rangle \pm M \varepsilon$. This is equivalent to computing $\langle i|\tilde{f}(A)|j\rangle \pm \varepsilon$, where $\tilde{f}(x)=f(\alpha x)/M$ is from $[-1,1]$ to $[-1,1]$. This is another reason why we consider functions from $[-1,1]$ to $[-1,1]$.}
\end{prob}

For the above problem, we are more concerned about the lower bounds analysis.
In the above, the quantum/classical query complexity refers to the minimum number of queries to (\ref{oracle1}), (\ref{oracle2}) 
needed for a quantum/classical algorithm that approximates $\langle i|f(A)|j\rangle$ for every sparse Hermitian matrix $A$, with error probability at most 1/3.
For Problem \ref{problem}, our main result in the quantum case is described as follows.

\begin{thm}[Lower bound for quantum algorithms]
\label{intro:key theorem}
For any continuous function $f(x):[-1,1]\rightarrow [-1,1]$, there is a 2-sparse Hermitian matrix $A$ with $\|A\|\leq 1$ and two indices $i,j$ such that
$\Omega(\widetilde{\deg}_\varepsilon(f))$ queries to $A$ are required in order to 
compute $\langle i|f(A)|j\rangle \pm \varepsilon/4$.
\end{thm}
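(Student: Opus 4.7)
The plan is to apply the polynomial method to a carefully chosen family of 2-sparse Hermitian matrices $\{A(x)\}_{x\in[-1,1]}$ that embeds $f$ into a matrix element.

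First, I would construct $A(x)$ with $\|A(x)\|\leq 1$ such that $\langle i|f(A(x))|j\rangle$ equals either $f(x)$ or a simple transformation (the odd or even part), with matrix entries depending on $x$ as constant-degree polynomials. The simplest candidate is the $2\times 2$ matrix $A(x)=\sm{0 & x\\ x & 0}$, which is 1-sparse (hence 2-sparse) and for which $\langle 1|f(A(x))|2\rangle=(f(x)-f(-x))/2$ and $\langle 1|f(A(x))|1\rangle=(f(x)+f(-x))/2$. A more refined construction uses a larger tridiagonal (Jacobi-type) matrix so that $\langle i|f(A(x))|j\rangle=f(x)$ exactly while spreading the dependence on $x$ across multiple entries so that no single query can trivially reveal the answer.

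Second, I would invoke the polynomial method: after $T$ quantum queries to $\mathcal{O}_1,\mathcal{O}_2$, the expected value of any bounded observable is a polynomial of degree at most $2T$ in the matrix entries, hence a polynomial in $x$ of degree $O(T)$. Boosting the algorithm via the median of $O(1)$ independent runs makes the expectation $O(\varepsilon)$-close to the true value $\langle i|f(A(x))|j\rangle$. This yields a polynomial of degree $O(T)$ approximating $f$ (or its odd/even part) within $O(\varepsilon)$ on $[-1,1]$, forcing $T=\Omega(\widetilde{\deg}_{O(\varepsilon)}(f))$. Using the relation $\widetilde{\deg}_\varepsilon(f)\leq \widetilde{\deg}_\varepsilon(f_{\text{even}})+\widetilde{\deg}_\varepsilon(f_{\text{odd}})$, at least one of the two parts must carry approximate degree $\Omega(\widetilde{\deg}_\varepsilon(f))$, so the desired bound follows by choosing $i,j$ appropriately.

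The main obstacle is applying the polynomial method cleanly in the face of the oracle $\mathcal{O}_2$ returning real-valued entries: naively a single query could reveal a continuous matrix entry in full precision, which would contradict any nontrivial query lower bound for the tiny $2\times 2$ example. The tridiagonal (Jacobi-type) construction circumvents this by encoding $x$ implicitly through the spectral data of $A(x)$ rather than storing it in a single entry, ensuring the algorithm's output really is a low-degree polynomial in $x$. Verifying that the resulting embedding preserves the approximate degree up to a constant factor is precisely the role of the dual polynomial method and the linear semi-infinite programming analysis highlighted in the abstract.
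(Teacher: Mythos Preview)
There is a genuine gap. Your polynomial-method step asserts that after $T$ queries to $\mathcal{O}_1,\mathcal{O}_2$ the output is a polynomial of degree $O(T)$ in the matrix entries, and hence in $x$. This is false for the oracle model in Definition~\ref{defn:oracles}: $\mathcal{O}_2$ returns $\ket{A_{ij}}$ as a basis state encoding a real number, so a single query can recover an entry to arbitrary precision and the amplitudes are not polynomials in $A_{ij}$. You correctly flag this for the $2\times2$ example, but your proposed fix --- hiding $x$ ``in the spectral data'' of a larger tridiagonal $A(x)$ --- does not help: each subdiagonal entry is still a real number readable exactly by one query, and from $O(1)$ such entries one can reconstruct $x$ (or enough of it) and compute $f(x)$ classically. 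Nothing in the construction forces the algorithm's output to be a low-degree polynomial in $x$.

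The paper avoids this entirely by never parameterising the matrix by a real number. It reduces to \emph{parity} on a Boolean string $(x_1,\dots,x_n)$: one builds a 2-sparse graph on vertices $(i,t)$ with $t\in\{0,1\}$, edge $(i-1,t)\!-\!(i,t\oplus x_i)$ carrying a \emph{fixed} weight $b_i$. The graph is two disjoint weighted paths, and which path connects $(0,0)$ to $(n,1)$ depends only on the parity. The weights $b_i$ are chosen via Theorem~\ref{thm:tridiagonal matrix and approximate degree} so that on a single path the tridiagonal matrix $A'$ satisfies $\langle 1|f(A')|n\rangle=\varepsilon$ with $n\approx\widetilde{\deg}_\varepsilon(f_{\rm odd})$ (or the analogous even-part statement). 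Thus $\langle 0,0|f(A)|n,1\rangle$ is $0$ or $\varepsilon$ according to the parity, and approximating it to $\pm\varepsilon/2$ solves parity, which needs $\Omega(n)$ queries. Here queries to $A$ are genuinely queries to the Boolean bits $x_i$ (the weights are known in advance; only the connectivity is unknown), so the standard parity lower bound applies. The heart of the argument is not the polynomial method applied to $f$, but the dual-polynomial/LSIP machinery that produces the tridiagonal matrix with the prescribed corner entry --- the step you defer to at the end is in fact the entire proof.
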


As an application, consider $f(x)=e^{ixt}$, then the imaginary part of $f(x)$ satisfies $ \widetilde{\deg}_\varepsilon(\sin(xt))=\Omega(t)$ \cite[Lemma 57]{gilyen2018quantum}. 
By the above theorem, we re-obtain the quantum no fast-forwarding theorem~\cite{berry2007efficient}, also see the analysis below Corollary \ref{intro:min degree lower bounds} for more details. 
As another application, we consider $f(x)=1/x$. Although this function does not satisfy the condition of the above theorem, using a similar argument to the proof of Theorem \ref{intro:key theorem}, we can also obtain a lower bound of $\Omega(\kappa)$, where $\kappa$ is the condition number. We state the result as follows.\footnote{We can also consider the polynomial that approximates $1/x$ in the interval $[-1,1]\backslash [-\delta,\delta]$, where $\delta$ is usually $1/\kappa$. The degree is close to $O(\delta^{-1} \log (1/\varepsilon\delta))$ \cite[Corollary 69]{gilyen2018quantum}. One problem is that it is not clear if this degree is minimal. If it is, then we can obtain a lower bound of $\Omega(\kappa \log (\kappa/\varepsilon))$ by Theorem \ref{intro:key theorem}. }


\begin{thm}
\label{intro: quantum query complexity of matrix inversion}
There is a 2-sparse Hermitian matrix $A$ with $\|A\|\leq 1$ and condition number $\kappa$ such that $\Omega(\kappa)$ quantum queries are required to compute $\langle i|A^{-1}|j\rangle$ up to relative/additive error $\varepsilon < 1/4$.
\end{thm}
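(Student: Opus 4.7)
The plan is to replay the three-step proof of Theorem \ref{intro:key theorem} --- approximate-degree lower bound via the dual polynomial method, tridiagonal matrix construction via linear semi-infinite programming, and polynomial-method lower bound via Beals et al.\ --- on a restricted spectral domain where $1/x$ is continuous. The key observation is that $\|A\|\le 1$ together with condition number at most $\kappa$ is equivalent to the spectrum of $A$ being contained in $S_\kappa := [-1,-1/\kappa]\cup[1/\kappa,1]$, and on $S_\kappa$ the function $1/x$ is continuous and bounded in magnitude by $\kappa$.

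I would proceed in three steps mirroring the proof of Theorem \ref{intro:key theorem}. First, produce a dual polynomial witness certifying that any polynomial $p$ with $|p(x)-1/x|\le \varepsilon$ for all $x\in S_\kappa$ has degree $\Omega(\kappa)$ when $\varepsilon<1/4$. Second, feed this witness through the LSIP/tridiagonal construction to obtain a 2-sparse tridiagonal Hermitian matrix $A$ whose spectrum is contained in $S_\kappa$, so that $\|A\|\le 1$ and the condition number is at most $\kappa$. Third, apply the polynomial method: a $T$-query quantum algorithm computing $\langle i|A^{-1}|j\rangle \pm \varepsilon$ induces a degree-$2T$ polynomial approximating $1/x$ on the spectrum of $A$ (hence on $S_\kappa$), yielding $T=\Omega(\kappa)$.

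The main obstacle is the first step: a direct Markov/Bernstein estimate on $p'$ only gives $\Omega(\sqrt{\kappa})$ --- consistent with the Sachdeva--Vishnoi $O(\sqrt{\kappa}\,\log(1/\varepsilon))$ upper bound for approximating $1/x$ on $[1,\kappa]$ --- and obtaining the tight $\Omega(\kappa)$ bound requires the refined dual-polynomial analysis from Theorem \ref{intro:key theorem}, specialized so that the witness is supported on $S_\kappa$ and simultaneously constrained by the tridiagonal matrix structure used in step two. The relative/additive error equivalence in the statement follows by arranging the construction so that $|\langle i|A^{-1}|j\rangle|=\Theta(1)$, which makes the two error notions coincide up to constants; alternatively, since $|\langle i|A^{-1}|j\rangle|\le \kappa$, an additive error $\varepsilon$ is never weaker than a relative error $\varepsilon$ (modulo rescaling), so the same $\Omega(\kappa)$ bound applies to both.
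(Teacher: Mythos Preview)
Your step three is where the argument breaks. The polynomial method of Beals et al.\ produces, from a $T$-query quantum algorithm, a degree-$2T$ multilinear polynomial in the \emph{queried oracle variables} (here, the matrix entries), not a univariate polynomial in a spectral parameter $x$ that approximates $1/x$ on the spectrum of $A$. There is no mechanism by which a query algorithm for a single fixed matrix $A$ yields a polynomial approximating $f$ on $S_\kappa$. In the paper's proof of Theorem~\ref{intro:key theorem}, the polynomial method enters only indirectly, through the known $\Omega(n)$ lower bound for \textsc{Parity}: the hidden parity bits $x_1,\dots,x_n$ are encoded in the \emph{structure} of the sparse graph (which vertex $(i-1,b)$ connects to), and computing $\langle 0,0|f(A)|n,1\rangle$ decides parity. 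Your description of step three misidentifies this reduction.

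Even if you replace step three with the correct parity reduction, your steps one and two are an unnecessary detour. You acknowledge that proving $\widetilde{\deg}_\varepsilon(1/x\restriction_{S_\kappa})=\Omega(\kappa)$ is the main obstacle and that you do not have a proof. The paper bypasses this entirely. Its proof of Theorem~\ref{intro: quantum query complexity of matrix inversion} does \emph{not} go through an approximate-degree bound for $1/x$: it reuses the parity graph from Theorem~\ref{intro:key theorem} with all weights set to $1/2$, so that the adjacency matrix (on each connected component) is $\tfrac{1}{2}$ times the path tridiagonal matrix of Lemma~\ref{lem:a simple fact}. That lemma computes explicitly that the $(1,n)$-th entry of the inverse is $\pm 1$ (hence $\pm 2$ after the $1/2$ scaling, so $\langle 0,0|A^{-1}|n,1\rangle\in\{0,\pm 1/2\}$ depending on the parity), and that the condition number is $\Theta(n)$. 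Distinguishing $0$ from $\pm 1/2$ with error $<1/4$ solves parity, giving $\Omega(n)=\Omega(\kappa)$. No dual-polynomial witness, no LSIP, and no control over where the discretization nodes land in $S_\kappa$ is needed.
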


Note that for the matrix inversion problem, it was proved in \cite{harrow2009quantum} that the quantum query complexity of approximating $\bra{\x} M \ket{\x}$ is lower bounded by $\widetilde{\Omega}(\kappa^{1-\delta})$ for any $\delta>0$, where $M = \ket{0}\bra{0}\otimes I$ and $\ket{\x}$ is a normalized state proportional to  $A^{-1}\ket{0}$. In the above theorem, we considered a different matrix inversion problem and also proved a lower bound in terms of the condition number.

For an upper bound on the query complexity, by quantum singular value transformation \cite[Theorem 56]{gilyen2018quantum}, it is not hard to prove the following result. 
Below, for a matrix $A=(A_{ij})_{n\times n}$, we define $\|A\|_{\max}=\max_{i,j} |A_{ij}|$. Note that if $\|A\|\leq 1$, then $\|A\|_{\max}\leq 1$.

\begin{thm}[Upper bound of quantum algorithms]
\label{intro:theorem upper bound}
Let $A$ be an $s$-sparse Hermitian matrix with $\|A\|\leq 1 - \delta$ for some $\delta>0$, and
let $f:[-1,1]\rightarrow [-1,1]$ be a continuous function. Then for any two efficiently preparable quantum states $\ket{x}, \ket{y}$ there is a quantum algorithm that computes $\langle x|f(A)|y\rangle \pm \varepsilon$ with query complexity $O( \frac{C}{\varepsilon}  \widetilde{\deg}_\varepsilon(f))$, where $C=\frac{ s\|A\|_{\max}}{\delta} \log( \frac{s\|A\|_{\max}}{\varepsilon}\widetilde{\deg}_\varepsilon(f))$.
\end{thm}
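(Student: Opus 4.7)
The plan is to compose several standard quantum primitives: polynomial approximation, the sparse-access block-encoding, uniform singular value amplification, QSVT, and finally the Hadamard test with amplitude estimation. By Definition~\ref{defn:min deg} there is a polynomial $P$ of degree $d := \widetilde{\deg}_\varepsilon(f)$ with $\sup_{x \in [-1,1]} |P(x) - f(x)| \leq \varepsilon$. Since $A$ is Hermitian with spectrum in $[-(1-\delta), 1-\delta] \subset [-1,1]$, functional calculus gives $\|P(A) - f(A)\| \leq \varepsilon$, hence $|\bra{x} P(A) \ket{y} - \bra{x} f(A) \ket{y}| \leq \varepsilon$, so it suffices to estimate $\bra{x} P(A) \ket{y}$ to additive accuracy $O(\varepsilon)$.

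Next, from the oracles $\mathcal{O}_1, \mathcal{O}_2$ I build the textbook $(\alpha, O(\log n), 0)$-block-encoding $U_A$ of $A$ with subnormalization $\alpha := s \|A\|_{\max}$ at $O(1)$ query cost, whose top-left block is $A/\alpha$ with spectrum in $[-(1-\delta)/\alpha, (1-\delta)/\alpha]$. Because the gap from this spectrum to the boundary of $[-1,1]$ is $\delta/\alpha$, the uniform singular value amplification of \cite{gilyen2018quantum} converts $U_A$ into a $(1, O(\log n), \varepsilon_1)$-block-encoding $\widetilde U_A$ of $A$ at cost $O\!\left(\frac{\alpha}{\delta} \log(1/\varepsilon_1)\right)$ queries to $U_A$. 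Setting $\varepsilon_1 = \Theta(\varepsilon/d)$, the cost per invocation of $\widetilde U_A$ becomes $O\!\left(\frac{\alpha}{\delta} \log\frac{\alpha d}{\varepsilon}\right)$ queries.

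I then apply QSVT (Theorem~56 of \cite{gilyen2018quantum}) to $\widetilde U_A$ with the rescaled polynomial $P/(2(1+\varepsilon))$ (sup-norm at most $1/2$ on $[-1,1]$), obtaining a block-encoding of $P(A)/(2(1+\varepsilon))$ with $d$ applications of $\widetilde U_A$, for a total of $O\!\left(\frac{d\alpha}{\delta} \log\frac{\alpha d}{\varepsilon}\right)$ queries. Let $V$ be the composed circuit. The amplitude of $\ket{0}^{\otimes b}\ket{x}$ in $V\ket{0}^{\otimes b}\ket{y}$ is the known scalar $1/(2(1+\varepsilon))$ times $\bra{x}P(A)\ket{y}$, up to an $O(\varepsilon)$ error inherited from the block-encoding approximations. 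Using the assumed efficient preparations of $\ket{x}$ and $\ket{y}$ together with the Hadamard test and amplitude estimation, I extract the real and imaginary parts of this amplitude to additive precision $\varepsilon$ using $O(1/\varepsilon)$ invocations of $V$, undo the rescaling, and recover $\bra{x} f(A) \ket{y} \pm O(\varepsilon)$. Multiplying the two cost contributions yields $O\!\left(\frac{d}{\varepsilon} \cdot \frac{\alpha}{\delta} \log\frac{\alpha d}{\varepsilon}\right)$ queries, matching the stated bound with $C = \frac{s\|A\|_{\max}}{\delta} \log\!\left(\frac{s\|A\|_{\max}}{\varepsilon} d\right)$.

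The main obstacle is the careful bookkeeping of subnormalizations and error parameters across the pipeline. The subnormalization $\alpha$ of the initial block-encoding, the $1$-subnormalization produced by amplification, and the $1/2$-rescaling required to meet the QSVT hypothesis must compose into a single, explicitly known scalar multiplying $\bra{x}P(A)\ket{y}$, so that the scalar can be inverted after amplitude estimation; and $\varepsilon_1$ must be chosen small enough that the amplification error, amplified by the $d$ QSVT calls, remains $O(\varepsilon)$ in operator norm, while keeping the final log factor at $\log(\alpha d/\varepsilon)$ rather than worse.
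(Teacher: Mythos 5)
Your proposal follows the same pipeline as the paper: build an $(s\|A\|_{\max},O(\log n),0)$-block-encoding of $A$ from the sparse oracles (Lemma 48 of \cite{gilyen2018quantum}), use uniform singular value amplification (Theorem 30) and the spectral gap $\delta$ to obtain a $1$-subnormalized block-encoding, run QSVT (Theorem 56) on a $1/2$-rescaled minimal-degree approximant, and read off the matrix element with an overlap estimator and amplitude estimation. The structure and the final bound both match.

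However, one concrete parameter choice in your error budget is wrong. Theorem 56 turns an $(1,q,\varepsilon_1)$-block-encoding of $A$ into an $(1,q+2,4d\sqrt{\varepsilon_1})$-block-encoding of $P(A)$ (after the $1/2$-rescaling); the error propagates as $d\sqrt{\varepsilon_1}$, not as $d\,\varepsilon_1$. With your choice $\varepsilon_1=\Theta(\varepsilon/d)$, the QSVT output error is $\Theta(\sqrt{\varepsilon d})$, which is not $O(\varepsilon)$ and in fact grows with $d$. You must take $\varepsilon_1=\Theta\!\left((\varepsilon/d)^2\right)$, which is exactly what the paper does by setting $\varepsilon'=\tfrac12(\varepsilon/4d)^2$. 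Fortunately this does not change the claimed complexity: the amplification cost becomes $O\!\left(\tfrac{s\|A\|_{\max}}{\delta}\log\tfrac{s\|A\|_{\max}d^2}{\varepsilon^2}\right)=O\!\left(\tfrac{s\|A\|_{\max}}{\delta}\log\tfrac{s\|A\|_{\max}d}{\varepsilon}\right)$, so the stated $C$ and the final $O\!\left(\tfrac{C}{\varepsilon}\widetilde{\deg}_\varepsilon(f)\right)$ bound survive. Since you explicitly called out the error-tracking issue as the ``main obstacle'' and then gave an $\varepsilon_1$ that does not actually meet the constraint you stated, this is the one step that needs to be repaired before the argument closes.
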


In the above theorem, for many functions such as $e^{ixt}, x^d, e^{xt}, 1/x$, the constraint $\|A\|\leq 1 - \delta$ can be relaxed to $\|A\|\leq 1$. In addition, by quantum phase estimation, we can propose a quantum algorithm with complexity roughly $O(\widetilde{\deg}_\varepsilon(f)^2)$ without making assumptions.
We will discuss this in more detail in Section \ref{section:A quantum algorithm for matrix function}.

Theorem \ref{intro:key theorem} can be viewed as an analogy to the polynomial method for continuous functions of matrices. Recall that for a Boolean function $f(x_1,\ldots,x_n):\{0,1\}^n \rightarrow \{0,1\}$, given a quantum oracle to query $x_i$ in superposition, it is known from the polynomial method that computing $f(x_1,\ldots,x_n)$ costs at least $\Omega(\widetilde{\deg}_\varepsilon(f))$ queries \cite{beals2001quantum}. Here $\widetilde{\deg}_\varepsilon(f)$ is defined as the minimal degree of polynomial $P(x_1,\ldots,x_n)$ such that $|P(x) - f(x)|\leq \varepsilon$ and $|P(x)| \leq 1$ for all $x\in\{0,1\}^n$. 
The polynomial method has been successful in proving certain tight lower bounds \cite{beals2001quantum,kutin2005quantum}.
Unfortunately, for Boolean functions, the approximate degree is an imprecise measure of quantum query complexity. Namely, there exist total Boolean functions $f$ such that $\widetilde{\deg}_{1/3}(f) \leq n^\alpha$ while the quantum query complexity of computing $f$ is at least $n^\beta$, where $\alpha,\beta$ are constants with $\beta > \alpha$, e.g., see \cite{ambainis2006polynomial,aaronson2016separations}. 
For functions of matrices, Theorems \ref{intro:key theorem} and \ref{intro:theorem upper bound} indicate that up to a logarithmic term, the approximate degree characterises the quantum query complexity precisely when $s\|A\|_{\max}, \delta, \varepsilon$ are constants.

For Problem \ref{problem}, our main result in the classical case is described as follows.

\begin{thm}[Lower bound of classical algorithms]
\label{introthm:LB of classical}
Let $f(x):[-1,1]\rightarrow[-1,1]$ be a continuous function, then for any $s\geq 4$ there exists an $s$-sparse Hermitian matrix $A$ with $\|A\|\leq 1$ and two indices $i, j$ such that 
\[
\Omega\left(\frac{(s/2)^{(\widetilde{\deg}_{2\varepsilon}(f)-1)/6}}{ \log(s) \widetilde{\deg}_\varepsilon(f)}  \right)
\]
classical queries to $A$ are required in order to compute $\bra{i} f(A) \ket{j} \pm \varepsilon/4$.
\end{thm}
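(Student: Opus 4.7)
The plan is to amplify the quantum hard instance of Theorem~\ref{intro:key theorem} into a classically hard one by hiding the tridiagonal structure inside a randomly-labelled graph, in the spirit of the welded-trees exponential separation of Childs et al. The overall reduction has three steps: extract a distinguishing pair of tridiagonal matrices from the quantum lower bound, embed them into a much larger $4$-sparse matrix whose global structure is hidden behind a random vertex labelling, and then argue that any classical algorithm learning only local neighbor information needs exponentially many queries to decode the hidden bit.

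Step 1. Set $d = \widetilde{\deg}_{2\varepsilon}(f)$. From the construction underlying Theorem~\ref{intro:key theorem}, extract a pair of $(d+1)\times(d+1)$ tridiagonal Hermitian matrices $A_0, A_1$, each with $\|A_b\|\le 1$, and two indices $i, j$ such that $|\langle i|f(A_0)|j\rangle - \langle i|f(A_1)|j\rangle| \ge \varepsilon/2$; hence an $\varepsilon/4$-approximation of $\langle i|f(A_b)|j\rangle$ recovers the hidden bit $b\in\{0,1\}$.

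Step 2. Let $N = \lceil 2^{d/6}/\widetilde{\deg}_\varepsilon(f)\rceil$ and construct a $4$-sparse Hermitian matrix $\widehat{A}_b$ of dimension $\Theta(Nd)$ by threading each nontrivial off-diagonal of $A_b$ through a length-$N$ wire gadget. Label the internal wire vertices by a uniformly random permutation $\pi$ of the full vertex set, and tune the wire edge weights so that, restricted to the $(d+1)$ portal vertices, the action of $f(\widehat{A}_b)$ coincides with that of $f(A_b)$ up to error $\varepsilon/4$. The natural realization is an isospectral Jacobi wire, designed via a three-term orthogonal-polynomial recurrence so that the wire acts as an effective weighted edge on the portal subspace while keeping the total matrix $4$-sparse and norm-bounded by $1$.

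Step 3. Carry out the classical lower bound by an adversary argument. Since $\mathcal{O}_1, \mathcal{O}_2$ reveal only labels and values of local neighbors under the hidden permutation $\pi$, after $q < cN$ queries (for a suitable constant $c$) the portion of the graph seen by any classical algorithm is, with high probability, a union of disconnected subgraphs of size at most $q$ that does not connect $i$ and $j$. Conditioned on this event, by the symmetry of $\pi$ the transcript distributions under $\widehat{A}_0$ and $\widehat{A}_1$ coincide, so no classical algorithm can distinguish them with advantage $2/3$. Yao's minimax principle then yields the desired $\Omega(N) = \Omega(2^{d/6}/\widetilde{\deg}_\varepsilon(f))$ lower bound.

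The main obstacle is Step~2: producing the isospectral wire so that $\langle i|f(\widehat{A}_b)|j\rangle$ on the portals faithfully tracks $\langle i|f(A_b)|j\rangle$, while simultaneously keeping $\widehat{A}_b$ $4$-sparse, Hermitian, norm-bounded by $1$, and classically unrecognizable after random labelling. The orthogonal-polynomial analysis transferring $f(\widehat{A}_b)$ from the enlarged space to the portal subspace is delicate, and the interplay between the admissible wire length, the requirement that the gadget be spectrally transparent for polynomials up to degree $\widetilde{\deg}_\varepsilon(f)$, and the adversary's need for a random labelling of size comparable to $N$ is what ultimately drives both the exponent $1/6$ and the $1/\widetilde{\deg}_\varepsilon(f)$ overhead in the final bound.
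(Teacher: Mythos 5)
Your route is genuinely different from the paper's: the paper proves this theorem by a reduction from the Forrelation problem through Feynman's clock construction, whereas you attempt a welded-trees-style graph-hiding argument. Unfortunately your Step~2 is not merely ``delicate'' but, as described, cannot work for an arbitrary continuous $f$. If you replace each off-diagonal of the $(d+1)\times(d+1)$ tridiagonal matrix $A_b$ by a wire of length $N$, the two portal indices $i$ and $j$ end up at graph distance $\Omega(Nd)$ in $\widehat{A}_b$. Then for every polynomial $g$ of degree less than $Nd$ one has $\langle i|g(\widehat{A}_b)|j\rangle = 0$ identically, so $\langle i|f(\widehat{A}_b)|j\rangle$ is controlled entirely by the Chebyshev tail of $f$ beyond degree $Nd \gg \widetilde{\deg}_\varepsilon(f)$ and is forced to be $O(\varepsilon)$ with no dependence on $b$ at all. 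There is no choice of ``isospectral Jacobi wire'' weights (norm-bounded by $1$) that makes the wire spectrally transparent for polynomial degrees $\ll N$; the resonance trick that makes continuous-time quantum walk traverse a long path quickly is specific to $e^{iHt}$ and does not extend to arbitrary $f$. Your Step~3 is also problematic: a wire is a path, not a randomly branching graph, so a classical adversary simply follows the path from $i$; there is no welded-trees-style exploration barrier, and no clear source for the exponent $1/6$ or the $1/\widetilde{\deg}_\varepsilon(f)$ factor, which you acknowledge you do not derive.

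The paper instead builds $A = \sum_{t=1}^{N-1} b_t (\ket{t}\bra{t-1}\otimes U_t + \ket{t-1}\bra{t}\otimes U_t^\dag)$ from the Forrelation circuit $H^{\otimes n} D_1 H^{\otimes n} D_2 H^{\otimes n}$, decomposed into $N = 3(n+1)$ local 2-sparse layers so that $A$ is 4-sparse. On the history-state subspace $\mathrm{span}\{\ket{\psi_t}\}$, $A$ acts as the tridiagonal matrix of Equation~(\ref{a special matrix}), so Theorem~\ref{thm:tridiagonal matrix and approximate degree} fixes the weights $b_t$ to make $\bra{\psi_{N-1}}f(A)\ket{\psi_0} = \varepsilon$ once $N = \widetilde{\deg}_\varepsilon(f_{\mathrm{odd}}) + O(1)$. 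Crucially, the hard-to-learn structure is hidden in the diagonal oracles $D_1, D_2$, not in a graph labelling, and the identity $\bra{\phi_{N-1}}f(A)\ket{\psi_0} = \bra{\psi_{N-1}}f(A)\ket{\psi_0}\,\Phi(g_1,g_2)$ turns an $\varepsilon/4$-approximation of $\bra{\phi_{N-1}}f(A)\ket{\psi_0}$ into a $1/4$-approximation of $\Phi(g_1,g_2)$. The $\Omega(\sqrt{2^n}/n)$ Forrelation lower bound of Aaronson and Ambainis with $n \approx \widetilde{\deg}_{2\varepsilon}(f)/3$ (using $\widetilde{\deg}_\varepsilon(f_{\mathrm{odd}}) \ge \widetilde{\deg}_{2\varepsilon}(f)$) then yields the stated bound, including the $1/6$ exponent and the $1/\widetilde{\deg}_\varepsilon(f)$ overhead that your proposal could not account for.
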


Combining Theorems \ref{intro:theorem upper bound} and \ref{introthm:LB of classical}, we can claim that for Problem \ref{problem}, the quantum and classical separation is exponential in terms of the approximate degree for any continuous function. Recall that for functions like $f(x) = e^{ixt}, x^d$, the approximation of $\langle i|f(A)|j\rangle$ is BQP-complete \cite{janzing2006bqp,osborne2012hamiltonian}. Namely, they are believed to be hard to solve for classical computers. In Theorem \ref{introthm:LB of classical}, we provide a quantitative result on this hardness using query complexity. For example, if $f(x)=e^{ixt}$, then $\widetilde{\deg}_\varepsilon(f) = \Theta(t)$ when $\varepsilon=O(1)$ \cite{gilyen2018quantum}, so the classical lower bound we have is $\Omega(\frac{(s/2)^{(t-1)/6}}{\log(s)t})$, while the quantum upper bound is just $O(t)$. If $f(x) = x^k$, then $\widetilde{\deg}_\varepsilon(f) = \Theta(\sqrt{k})$ when $\varepsilon=O(1)$ \cite{sachdeva2014faster}, so the classical lower bound is $\Omega(\frac{(s/2)^{(\sqrt{k}-1)/6}}{\sqrt{k} \log(s)})$, while the quantum upper bound is only $O(\sqrt{k})$.

As another hardness result, we prove that the decision problem of estimating an entry of a function of sparse matrices, defined below, is BQP-complete.

\begin{prob}[Entry estimation problem]
Let $f(x): [-1,1] \rightarrow [-1,1]$ be a continuous function, let $A$ be an $N\times N$ sparse Hermitian matrix such that $\|A\|\leq 1$. Let $\varepsilon \in (0,1)$ be the precision and $i,j$ be two indices. Assume that one of the following holds:
\begin{itemize}
    \item YES case: if $f(A)_{ij} \geq \varepsilon$, or
    \item NO case: if $f(A)_{ij} \leq -\varepsilon$.
\end{itemize}
Decide which is the case.
\end{prob}

\begin{thm}[BQP-completeness]
\label{thm: BQP-complete}
Assume that $\widetilde{\deg}(f)=\Omega({\rm polylog}(N))$. Then the ``entry estimation problem" is PromiseBQP-complete.
\end{thm}

Regarding classical algorithms for Problem \ref{problem}, when $f$ is approximated by a polynomial of degree $\widetilde{\deg}_\varepsilon(f)$, then there is an exact algorithm with query complexity $O(s^{\widetilde{\deg}_\varepsilon(f)-1})$ building on the definition of functions of matrices, see Proposition \ref{algorithm by definition}.  This upper bound is not too far from the lower bound obtained in Theorem \ref{introthm:LB of classical}.
In Section \ref{section:classical case}, we also have two other classical algorithms for solving Problem \ref{problem}, one is based on random walks, and the other one is based on the Cauchy integral formula.
Here we state the result for the second algorithm, which is more efficient when $f$ is analytic.

\begin{prop}[A simplified version of Proposition \ref{prop:classical upper bound2}]
Let $A$ be an $s$-sparse matrix with $\|A\|_1>\|A\|$.
Let $f(z)$ be an analytic function in the disk $|z|\leq R$ with $R > \|A\|_1^2/\|A\|$. Then there is a classical algorithm that computes $\langle i|f(A)|j\rangle \pm \varepsilon$ in cost
\[
O\left(
\frac{s}{\varepsilon^2} 
\max_{z\in \mathbb{C}, |z|=\|A\|_1} |f(z)|^2 
\frac{\log^4(1/\varepsilon)}{\log^4(\|A\|_1/\|A\|)}
\right).
\]
\end{prop}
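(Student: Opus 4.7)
The plan is to build the algorithm around the Cauchy integral formula
\[
\langle i|f(A)|j\rangle \;=\; \frac{1}{2\pi i}\oint_{|z|=\|A\|_1} f(z)\,\langle i|(zI-A)^{-1}|j\rangle\,dz,
\]
which is legitimate because the hypotheses $\|A\|<\|A\|_1<R$ place the contour strictly outside the spectrum of $A$ and strictly inside the disk of analyticity of $f$. I would expand the resolvent as a Neumann series $(zI-A)^{-1}=\sum_{k\ge 0}A^k/z^{k+1}$ inside the integral and swap the sum with the integral to get
\[
\langle i|f(A)|j\rangle \;=\; \sum_{k=0}^{\infty}a_k\,\langle i|A^k|j\rangle,\qquad a_k \;=\; \frac{1}{2\pi i}\oint_{|z|=\|A\|_1}\frac{f(z)}{z^{k+1}}\,dz,
\]
so that $|a_k|\le M\,\|A\|_1^{-k}$ with $M:=\max_{|z|=\|A\|_1}|f(z)|$, by Cauchy's estimate. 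Combined with $|\langle i|A^k|j\rangle|\le\|A\|^k$, truncating the series at degree $K=O(\log(M/\varepsilon)/\log(\|A\|_1/\|A\|))$ leaves the tail error below $\varepsilon/3$.

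Next I would estimate each surviving term $\langle i|A^k|j\rangle$ by an importance-sampled classical random walk on the graph of $A$. Starting from $v_0=i$, I sample $v_{t+1}$ with probability $|A_{v_t,v_{t+1}}|/R_{v_t}$, where $R_u:=\sum_v|A_{uv}|\le\|A\|_1$. The statistic
\[
\hat{x}_k \;=\; \mathbf{1}[v_k=j]\,\prod_{t=0}^{k-1} R_{v_t}\,\sgn(A_{v_t,v_{t+1}})
\]
is unbiased for $\langle i|A^k|j\rangle$ and is bounded almost surely by $\|A\|_1^{k}$. A single walk of length $K$ produces $\hat{x}_0,\dots,\hat{x}_K$ simultaneously, using $O(sK)$ classical queries to $\mathcal{O}_1,\mathcal{O}_2$, since setting up the sampling distribution at each step requires probing the $s$ nonzero entries of the current row. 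The combined estimator $Y=\sum_{k=0}^{K}a_k\hat{x}_k$ obeys the deterministic bound $|Y|\le\sum_{k}|a_k|\,\|A\|_1^{k}\le (K+1)M$, which I will use to bound its variance from $T$ independent walks by $O(K^2M^2/T)$.

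Aggregating, I would take $T=\Theta(K^2 M^2/\varepsilon^2)$ walks, with a constant-overhead median-of-means amplification to reach success probability $2/3$; this drives the statistical error below $\varepsilon/3$. The Taylor coefficients $a_k$ themselves will either be read off from the power series of $f$ or computed by an $N$-point trapezoidal quadrature on $|z|=\|A\|_1$, which converges exponentially in $N$ because $f$ is analytic on the strictly larger disk $|z|\le R$; the hypothesis $R>\|A\|_1^2/\|A\|$ gives $R/\|A\|_1>\|A\|_1/\|A\|$, so $N=O(\log(M/\varepsilon)/\log(\|A\|_1/\|A\|))$ nodes suffice. Multiplying the per-walk cost $O(sK)$ by $T$ and collecting every polylogarithmic contribution into $\log(1/\varepsilon)$ and $\log(\|A\|_1/\|A\|)$ then yields the stated bound.

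The main obstacle I expect is the variance analysis of the walk. A crude uniform-sampling estimator gives $\E[\hat{x}_k^2]\sim (s\|A\|_{\max})^{2k}$, which would inflate the sample complexity by unwanted powers of $s$; I therefore commit to the importance-weighted estimator above, for which
\[
\E[\hat{x}_k^2]\;\le\;\sum_{\text{paths ending at }j}\prod_{t=0}^{k-1} R_{v_t}|A_{v_t,v_{t+1}}|\;=\;[B^k]_{ij},
\]
with $B_{uv}:=R_u|A_{uv}|$ having induced $\infty$-norm at most $\|A\|_1^2$. Establishing this tight $\|A\|_1^{2k}$ bound is what anchors the whole complexity to $M=\max_{|z|=\|A\|_1}|f(z)|$ rather than to the (potentially much larger) maximum of $|f|$ on the outer disk $|z|=R$.
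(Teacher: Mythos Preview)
Your argument is correct and in fact recovers the bound with a $\log^3$ rather than a $\log^4$ factor, but it is organised differently from the paper. The paper keeps the contour integral intact and discretises it by the trapezoidal rule on $|z|=\Lambda$ (with $\Lambda\ge\|A\|_1$) into $M=O(\log(1/\varepsilon)/\log(\Lambda/\lambda))$ nodes; the hypothesis $R>\Lambda^2/\lambda$ is used precisely there, to ensure the quadrature error $O(\max((\Lambda/R)^M,(\lambda/\Lambda)^M))$ is governed by the second term. For each node $z_k$ the paper then truncates the Neumann series for $(I-A/z_k)^{-1}$ at degree $N\approx M$ and feeds the resulting degree-$N$ polynomial into the random-walk estimator of Proposition~\ref{prop:classical algorithm for matrix powers}; because $|z_k|=\Lambda\ge\|A\|_1$, the weighted $l_1$ norm $\|g(\|A\|_1x)\|_{l_1}$ of that geometric polynomial is only $O(N)$, and the total cost is $O(sMN^3L^2/\varepsilon^2)$, which produces the fourth power of the logarithm.

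You instead collapse the two discretisations into one by writing $f(A)=\sum_k a_k A^k$ directly and truncating at $K$; this eliminates the outer quadrature loop, and you invoke the hypothesis on $R$ only as a query-free convenience for computing the $a_k$. Both routes rest on the same importance-sampled walk and the same Hoeffding-type tail bound, so the core is shared; what your route buys is one fewer logarithmic factor and a cleaner use of the analyticity assumption, while the paper's two-stage scheme makes more explicit where $R>\|A\|_1^2/\|A\|$ is genuinely needed. One small point to clean up: your walk starts at $i$ and normalises by row sums $R_u=\sum_v|A_{uv}|$, which bounds $|\hat{x}_k|$ by $\|A\|_\infty^k$ rather than $\|A\|_1^k$. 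To match the $\|A\|_1$ in the statement you should start the walk at $j$ and sample along columns, as the paper does in the proof of Proposition~\ref{prop:classical algorithm for matrix powers} (or restrict to Hermitian $A$, where the two norms coincide).
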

In the above complexity, the dominating term is $M:=\max_{z\in \mathbb{C}, |z|=\|A\|_1} |f(z)|^2$. This classical algorithm is efficient when this quantity is small. However, $M$ can be exponentially large when $\|A\|_1>1$ even if $f$ is bounded by 1 in the interval $[-1,1]$. For example, if $f(x)=\sin(xt)$, then $M=\Theta(e^{2\|A\|_1t})$. If $f(x) = x^d$, then $M=\|A\|_1^{2d}$. More generally, when $f(x) = \sum_{k=0}^d a_k x^k$ is a polynomial of degree $d$, the quantity $M$ can be as large as $\|A\|_1^{2d}$. Note that $\|A\|_1 \leq \sqrt{s} \|A\|$, so the above algorithm is usually more efficient than the exact algorithm. We below summarize the above results into Table \ref{table}.

\setlength{\arrayrulewidth}{0.3mm}
{\renewcommand
\arraystretch{1.5}
\begin{table}[h]
\centering
\begin{tabular}{|c|c|c|c|c|} 
 \hline
& Quantum algorithm & Classical algorithm \\ \hline 
Upper bound & $\widetilde{O}((s/\varepsilon) \widetilde{\deg}_{\varepsilon}(f))$ & $O(s^{\widetilde{\deg}_{\varepsilon}(f)-1})$ \\ \hline
Lower bound & $\Omega(\widetilde{\deg}_{\varepsilon}(f))$ & $\widetilde{\Omega}((s/2)^{(\widetilde{\deg}_{2\varepsilon}(f)-1)/6})$ \\ \hline
\end{tabular}
\caption{Comparison between quantum and classical algorithms for the problem of approximating an entry of an $s$-sparse Hermitian matrix.}
\label{table}
\end{table}
}

Finally, we remark that for total Boolean functions, it is known that the separation of quantum and classical query complexities is at most quartic \cite{aaronson2021degree}. Exponential separation only exists for some partial Boolean functions with certain structures. Our results show that for functions of sparse Hermitian matrices, the quantum and classical separation is exponential. Here a function of a sparse matrix can also be viewed as a partial function in some sense. This might provide an approach to finding partial Boolean functions that can demonstrate exponential separations.

\subsection{Main results on approximate degrees }

In this paper, we also obtained some mathematical results about the relationship between approximate degrees of functions and tridiagonal matrices, which may be of independent interest.
Our main result on approximate degrees is the following, and 
Theorems \ref{intro:key theorem} and \ref{introthm:LB of classical} are indeed direct applications of it.


\begin{thm}
\label{thm:tridiagonal matrix and approximate degree}
Let $f(x):[-1,1]\rightarrow [-1,1]$ be a continuous function. Let 
\[
f_{\rm even}(x) = \frac{f(x)+f(-x)}{2}, \quad
f_{\rm odd}(x) = \frac{f(x)-f(-x)}{2}
\]
be the even and odd parts of $f(x)$ respectively. Then we have the following.
\begin{itemize}
\item There is a tridiagonal matrix $A$ of the form
\be
A = \begin{pmatrix}
0 & b_1 \\
b_1 & 0 & b_2 \\
& b_2 & \ddots & \ddots \\
& & \ddots & \ddots & b_{n-1} \\
& & & b_{n-1} & 0
\end{pmatrix}
\label{thm:tridiagonal matrix}
\ee
with $\|A\|\leq 1$ and $b_i> 0$ for all $i$ such that the $(1,n)$-th entry $\langle 1|f(A) |n\rangle = \varepsilon$, where $n = \widetilde{\deg}_{\varepsilon}(f_{\rm odd}) + c$. Here $c=2$ if $\widetilde{\deg}_{\varepsilon}(f_{\rm odd})$ is even and $c=3$ if $\widetilde{\deg}_{\varepsilon}(f_{\rm odd})$ is odd.

\item There is a tridiagonal matrix $A$ of the form (\ref{thm:tridiagonal matrix}) with $\|A\|\leq 1$ and $b_i> 0$ for all $i$ such that the $(2,n-1)$-th entry $ \langle 2|f(A) |n-1\rangle =\varepsilon$, where  $n=\widetilde{\deg}_{\varepsilon}(f_{\rm even}) + c$. Here $c=5$ if $\widetilde{\deg}_{\varepsilon}(f_{\rm even})$ is even and $c=4$ if $\widetilde{\deg}_{\varepsilon}(f_{\rm even})$ is odd.
\end{itemize}

\end{thm}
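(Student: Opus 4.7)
The plan is to combine semi-infinite linear programming duality with the spectral theory of tridiagonal (Jacobi) matrices. First, I would rewrite the target entry spectrally: diagonalizing $A = V D V^T$ with eigenvalues $\lambda_k$ and orthonormal eigenvectors $v_k$, one obtains $\langle 1|f(A)|n\rangle = \int f(x)\, d\nu_A(x)$, where $\nu_A := \sum_k (v_k)_1 (v_k)_n\, \delta_{\lambda_k}$ is a signed measure on $[-1,1]$. The path-graph combinatorics of the tridiagonal form force $\int x^k\, d\nu_A = \langle 1|A^k|n\rangle = 0$ for every $k < n-1$ (the minimum walk length from $1$ to $n$), and Cauchy--Schwarz applied to the eigenvector coordinates gives $\|\nu_A\|_{\mathrm{TV}} \le 1$. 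When $n$ is even, the bipartite structure of the zero-diagonal tridiagonal matrix additionally forces $\nu_A$ to be anti-symmetric about the origin, so $\int f\, d\nu_A = \int f_{\rm odd}\, d\nu_A$.

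Second, I would apply semi-infinite LP duality. Minimizing $\|f_{\rm odd} - p\|_\infty$ over polynomials $p$ of degree $< n-1$ is dual to maximizing $\int f_{\rm odd}\, d\mu$ over signed measures $\mu$ on $[-1,1]$ with $\|\mu\|_{\mathrm{TV}} \le 1$ and $\int x^k\, d\mu = 0$ for all $k < n-1$. Since $f_{\rm odd}$ is odd, the optimizer $\mu^*$ may be anti-symmetrized at no cost; combining this with a Chebyshev-style alternation count, the support of $\mu^*$ comes out to an even number of symmetric points, the smallest of which (respecting the parity of $d$ and the requirement that $n$ be even) is $n = d+2$ if $d$ is even and $n = d+3$ if $d$ is odd. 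These choices ensure the dual value is at least $\varepsilon$.

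Third, I would realize $\mu^*$ as the signed $(1,n)$-spectral measure of a Jacobi matrix. Applying Favard's theorem to $|\mu^*|$, a symmetric positive measure on $n$ points in $[-1,1]$, yields via the three-term recurrence of orthonormal polynomials a unique size-$n$ Jacobi matrix $A^*$: symmetry of $|\mu^*|$ forces a zero diagonal, nondegeneracy gives $b_i > 0$, and the spectrum of $A^*$ lies in $[-1,1]$, so $\|A^*\| \le 1$. Checking that $\nu_{A^*}$ is proportional to $\mu^*$ yields $\langle 1|f(A^*)|n\rangle \ge \alpha\varepsilon$ for some $\alpha > 0$. Continuity of the map $A \mapsto \langle 1|f(A)|n\rangle$ on the convex family of admissible Jacobi matrices, together with the fact that this value tends to $0$ as the $b_i$ shrink, then yields via the intermediate value theorem a matrix $A$ for which the entry equals exactly $\varepsilon$.

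The second part of the theorem follows the same template with the indices $(2, n-1)$ in place of $(1, n)$: the minimum walk length becomes $n-3$, so $\nu_A$ annihilates polynomials of degree $< n-3$; the parity constraint now forces $n$ odd so that $2$ and $n-1$ lie on the same side of the bipartition and only even powers contribute, giving $\int f\, d\nu_A = \int f_{\rm even}\, d\nu_A$; the constants $c = 4, 5$ reflect the extra walk margin together with parity. The main technical obstacle throughout is the sign-matching in step three: while the eigenvalues and first-coordinate weights of $A^*$ follow directly from $|\mu^*|$, the signs of $(v_k)_1 (v_k)_n = w_k\, p_{n-1}(\lambda_k)$ depend on the orthonormal polynomial $p_{n-1}$ associated with $|\mu^*|$, and ensuring that those signs align with the equioscillation signs of $\mu^*$ is what actually ties the Chebyshev structure of the approximation problem to the Jacobi matrix and constitutes the heart of the proof.
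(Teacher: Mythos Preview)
Your high-level strategy---LP duality for the best-approximation problem, then realizing the dual optimizer as the $(1,n)$-spectral measure of a Jacobi matrix---is exactly what the paper does. But your execution has a concrete gap at the step you yourself flag as the ``main technical obstacle,'' and the IVT workaround you propose does not close it.

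The issue is quantitative. Your Cauchy--Schwarz bound gives only $\|\nu_A\|_{\mathrm{TV}}\le 1$; equality holds if and only if $|(v_k)_1|=|(v_k)_n|$ for every $k$, i.e.\ the Jacobi matrix has the palindrome symmetry $b_i=b_{n-i}$. Favard applied to $|\mu^*|$ does not obviously produce such a matrix, so a priori you only get $\langle 1|f(A^*)|n\rangle=\beta\varepsilon$ with $|\beta|\le 1$. Your IVT argument then interpolates between $0$ and $\beta\varepsilon$, which never reaches $\varepsilon$ when $|\beta|<1$ (and could even have the wrong sign). So the IVT step cannot be used to sidestep the normalization and sign-matching; you really do have to prove $\beta=1$.

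The paper handles this directly rather than via Favard. It takes only the \emph{support} of the dual optimizer (the points $x_1,\dots,x_{l+2}$), constructs the unique zero-diagonal tridiagonal matrix with eigenvalues $\{\pm x_i\}$ and the symmetry $b_i=b_{n-i}$ (Lemma~\ref{key lemma}), and then proves the identity
\[
\sum_i \frac{b_1\cdots b_{n-1}}{|\prod_{j\ne i}(\lambda_i-\lambda_j)|}=1,
\]
which says precisely that $\|\nu_A\|_{\mathrm{TV}}=1$. Combined with the explicit formula $f(A)_{1,n}=b_1\cdots b_{n-1}\sum_i f(\lambda_i)/\prod_{j\ne i}(\lambda_i-\lambda_j)$ and the Vandermonde form $h_i=\alpha/\prod_{j\ne i}(\lambda_i-\lambda_j)$ of the dual solution, this gives $\alpha=b_1\cdots b_{n-1}$ and hence $\langle 1|f(A)|n\rangle=\varepsilon$ on the nose---no IVT needed, and the sign-matching falls out of the same formula. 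The even case uses the companion identity with an extra zero eigenvalue (Lemma~\ref{key lemma2}). This identity is the actual content of the proof; your proposal correctly locates it but does not supply it.
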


In the above theorem, the tridiagonal matrix also has the symmetric property that $b_i = b_{n-i}$ for all $i$. From the above theorem, it is not hard to obtain the following claim.

\begin{cor}
\label{intro:min degree lower bounds}
Let $f(x):[-1,1]\rightarrow [-1,1]$ be a continuous function. If there is a tridiagonal matrix of the form (\ref{thm:tridiagonal matrix}) of size $n$ such that $\langle 1|f(A) |n\rangle > \varepsilon$ $($respectively $\langle 2|f(A) |n-1\rangle > \varepsilon)$, then $\widetilde{\deg}_{\varepsilon}(f_{\rm odd})= \Omega(n)$ $($respectively $\widetilde{\deg}_{\varepsilon}(f_{\rm even})= \Omega(n))$.
\end{cor}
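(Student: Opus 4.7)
The plan rests on two elementary structural observations about any tridiagonal matrix $A$ of the form (\ref{thm:tridiagonal matrix}). First (bandwidth): since $\langle i| A^k|j\rangle$ counts weighted walks of length $k$ between vertices $i$ and $j$ of a path graph, it vanishes whenever $k < |i-j|$. Second (parity): because $A$ has zero diagonal, conjugation by the alternating-sign diagonal $D = \diag(1,-1,1,-1,\ldots)$ satisfies $DAD = -A$, so $\langle i|f(A)|j\rangle = (-1)^{i+j}\langle i|f(-A)|j\rangle$ for every $f$. Inserting $f = f_{\mathrm{even}} + f_{\mathrm{odd}}$ into this identity forces $\langle 1|f_{\mathrm{even}}(A)|n\rangle = 0$ when $n$ is even and $\langle 1|f_{\mathrm{odd}}(A)|n\rangle = 0$ when $n$ is odd, with analogous vanishing for the $(2, n-1)$ entry.

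With these in hand, the $(1,n)$ half of the corollary proceeds as follows. The only non-trivial regime is $n$ even (exactly the parity produced by Theorem \ref{thm:tridiagonal matrix and approximate degree}), in which the parity observation collapses the hypothesis to $|\langle 1|f_{\mathrm{odd}}(A)|n\rangle| > \varepsilon$. Let $p$ be a polynomial of degree $d = \widetilde{\deg}_\varepsilon(f_{\mathrm{odd}})$ approximating $f_{\mathrm{odd}}$ to within $\varepsilon$ on $[-1,1]$; since $f_{\mathrm{odd}}$ is odd, replacing $p$ by its odd part $(p(x) - p(-x))/2$ preserves both the degree and the $\varepsilon$ approximation. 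By the spectral theorem together with $\|A\|\leq 1$ we have $\|f_{\mathrm{odd}}(A) - p(A)\| \leq \varepsilon$, hence $|\langle 1|f_{\mathrm{odd}}(A)|n\rangle - \langle 1|p(A)|n\rangle| \leq \varepsilon$. But $p(A)$ is a combination of odd monomials $A, A^3, \ldots, A^d$, and by the bandwidth fact each such $\langle 1|A^k|n\rangle$ vanishes for $k < n-1$. So if $d < n-1$ the second quantity is zero, yielding $|\langle 1|f_{\mathrm{odd}}(A)|n\rangle| \leq \varepsilon$ --- a contradiction. Hence $\widetilde{\deg}_\varepsilon(f_{\mathrm{odd}}) \geq n-1 = \Omega(n)$.

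The $(2, n-1)$ half is symmetric: the relevant parity becomes $n$ odd, $p$ is replaced by its even part, and the bandwidth fact kicks in for $k < n-3$, giving $\widetilde{\deg}_\varepsilon(f_{\mathrm{even}}) \geq n-3 = \Omega(n)$. In neither direction is there a substantial obstacle; the whole argument is essentially one invocation of the spectral theorem composed with the bandwidth/parity structure of tridiagonal zero-diagonal matrices. The only minor bookkeeping is to match the parity of $n$ to the case at hand, and this matching is precisely what Theorem \ref{thm:tridiagonal matrix and approximate degree} controls on the way in, making this corollary a clean converse.
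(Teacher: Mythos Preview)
Your proof is correct and follows a genuinely different, more elementary route than the paper's. The paper proves the corollary by going back through the LSIP/LP duality machinery from the proof of Theorem~\ref{thm:tridiagonal matrix and approximate degree}: the eigenvalues of the given $A$ furnish a feasible point for the dual LP (\ref{original LP:odd-discretisation-dual}) with objective value $\langle 1|f(A)|n\rangle>\varepsilon$, forcing the primal LSIP value at degree $d=n-c$ to exceed $\varepsilon$, hence $\widetilde{\deg}_\varepsilon(f_{\rm odd})>d$. Your argument bypasses the LP entirely, using only the bandwidth fact $\langle 1|A^k|n\rangle=0$ for $k<n-1$, the parity symmetry $DAD=-A$, and one application of the spectral theorem to pass from $\sup_{[-1,1]}|f_{\rm odd}-p|\le\varepsilon$ to $\|f_{\rm odd}(A)-p(A)\|\le\varepsilon$. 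This is cleaner and also strictly more robust: the paper's dual-feasibility step implicitly uses identity (\ref{key-lem:condition}), whose derivation in Lemma~\ref{key lemma} relies on the mirror symmetry $b_i=b_{n-i}$, whereas your argument needs no such structure on the $b_i$.

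Two small remarks. First, the step of replacing $p$ by its odd (respectively even) part is harmless but unnecessary: once $\deg p<n-1$ (respectively $n-3$), the bandwidth fact already kills $\langle 1|p(A)|n\rangle$ (respectively $\langle 2|p(A)|n-1\rangle$) regardless of parity of $p$. Second, your restriction to $n$ even for the $(1,n)$ case and $n$ odd for the $(2,n-1)$ case is the right move and matches the parities produced by Theorem~\ref{thm:tridiagonal matrix and approximate degree}; you might note explicitly that for the opposite parity the hypothesis gives information only about the \emph{other} half of $f$, so the stated conclusion need not follow --- the paper's own proof is equally silent on this point.
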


By the above result, to estimate the approximate degree, it suffices to find a tridiagonal matrix $A$ such that $\langle 1|f(A) |n\rangle$ or $\langle 2|f(A) |n-1\rangle$ is at least $\varepsilon$. 
An interesting example is $e^{ixt}$. To see this,
we now assume that $f$ is not an even function. 
Let $n=2m$ and let $\{\lambda_1,\ldots,\lambda_n\}=\{\pm x_1, \ldots, \pm x_m\}$ be the eigenvalues of (\ref{thm:tridiagonal matrix}).
Then using some properties of tridiagonal matrices (see Equations (\ref{entry of 1-n}), (\ref{key-lem:condition}) below), we can compute that the $(1,n)$-th entry of $f(A)$ satisfies
\beas
\langle 1|f(A) |n\rangle
=\left( \sum_{i=1}^m \frac{f_{\rm odd}(x_i)}{x_i\prod_{j\neq i} (x_i^2-x_j^2)} \right) \left( \sum_{i=1}^m \frac{(-1)^{i-1}}{x_i \prod_{j\neq i}(x_i^2- x_j^2)} \right)^{-1}.
\eeas
Note that $(-1)^{i-1}$ is periodic, so if $f_{\rm odd}(x_i)$ has some periodic property, e.g., $\sin((2i-1)\pi/2)$, we may hope that $\langle 1|f(A) |n\rangle$ is easy to estimate. As a result, for the function $f(x)=\sin(xt)$, the imaginary part of $e^{ixt}$, if we choose $x_i=(2i-1)\delta$ for $i=1,\ldots,m$, where $\delta$ is such that $(2m-1)\delta \leq 1$ (e.g., $\delta =1/(2m-1)=\pi/2t$), we then have $\langle 1|f(A) |n\rangle=1$. 
In the above construction, $\{\pm (2i-1)/(2m-1):i=1,\ldots,m\}$ are the eigenvalues of the tridiagonal matrix with $b_i = \sqrt{i(2m-i)}/(2m-1)$. This matrix is indeed the one used in proving the no fast-forwarding theorem for Hamiltonian simulation \cite[Theorem 3]{berry2007efficient}. The above provides a piece of new evidence as to why this matrix is the right choice for proving the no fast-forwarding theorem. 
More generally, we can summarise the above finding into the following result.


\begin{prop}
\label{intro:prop min degree of periodic function}
Let $f(x)$ be a continuous function with the property that 
there exist $1 \geq x_m>y_m>x_{m-1}>y_{m-1}>\cdots>x_1>y_1 >0 $ such that $f(x_i)=1$ and $f(y_i)=-1$ for all $i$. Let $h(x)$ be a polynomial of degree at most $2m-2$ and $|h(x)| \leq 1$ for all $x\in[-1,1]$. Then for any $\varepsilon \in (0,1]$, we have $\widetilde{\deg}_{\varepsilon}(f(x)+h(x)) = \Omega(m)$. In particular, if $f(x)$ is a periodic function with period $\Theta(1/m)$ and there exist $x,y$ such that $f(x)=1, f(y)=-1$, then $\widetilde{\deg}_{\varepsilon}(f(x)+h(x)) = \Omega(m)$.
\end{prop}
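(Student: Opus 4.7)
My plan is to reduce the claim to a classical sign-alternation argument by absorbing $h$ into the approximating polynomial. Let $p(x)$ be any polynomial of degree $d$ with $|p(x) - (f(x) + h(x))| \leq \varepsilon$ for all $x \in [-1,1]$, and define $q(x) := p(x) - h(x)$. Then $q$ is a polynomial of degree at most $\max(d, 2m - 2)$ satisfying $|q(x) - f(x)| \leq \varepsilon$ on $[-1,1]$. In particular, at the distinguished points, $q(x_i) \geq 1 - \varepsilon$ and $q(y_i) \leq -1 + \varepsilon$ for $i = 1, \dots, m$.

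For any $\varepsilon < 1$ these inequalities are strict, so the values of $q$ alternate in sign across the ordered sequence $y_1 < x_1 < y_2 < \cdots < y_m < x_m$ of $2m$ points (negative at each $y_i$, positive at each $x_i$). The intermediate value theorem then yields at least $2m - 1$ distinct real zeros of $q$ inside the open interval $(y_1, x_m) \subset (-1,1)$. Since $q \not\equiv 0$ (otherwise $|f(x_1)| \leq \varepsilon < 1$ would contradict $f(x_1) = 1$), we conclude $\deg(q) \geq 2m - 1$. Combined with $\deg(q) \leq \max(d, 2m - 2)$, this forces $d \geq 2m - 1$, giving $\widetilde{\deg}_\varepsilon(f + h) \geq 2m - 1 = \Omega(m)$.

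For the periodic version, if $f$ has period $T = \Theta(1/m)$ and attains both $+1$ and $-1$ within one period, then $(0, 1]$ contains $\Theta(m)$ disjoint period-length subintervals, each supplying one point where $f = 1$ and one where $f = -1$. Reindexing these in increasing order produces $\Theta(m)$ pairs $(x_i, y_i)$ satisfying the hypothesis of the first part, so the same $\Omega(m)$ lower bound applies.

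The argument is essentially elementary once the substitution $q = p - h$ is in place; the only point worth flagging is ensuring the sign alternations are strict, which is why the crisp form of the proof wants $\varepsilon$ bounded away from $1$. The endpoint $\varepsilon = 1$ requires only a routine modification (e.g.\ counting tangential zeros with multiplicity, or sharpening one of the inequalities using the strict inequality $|f(x_1)|=1$ at a specific interior point), and does not change the $\Omega(m)$ conclusion.
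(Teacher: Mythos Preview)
Your proof is correct and takes a genuinely different route from the paper. The paper does not give a self-contained proof of this proposition; the surrounding discussion indicates that the intended argument goes through the tridiagonal-matrix machinery: one selects eigenvalues from among the alternation points, builds the tridiagonal matrix $A$ of dimension $\Theta(m)$, uses the explicit formula for $\langle 1|f(A)|n\rangle$ together with the sign pattern of $f$ at those points to show this entry equals~$1$ (or at least exceeds $\varepsilon$), and then invokes Corollary~\ref{intro:min degree lower bounds}. The polynomial $h$ is dispatched by the separate observation that $\langle 1|h(A)|n\rangle=0$ whenever $\deg h\le n-2$. Your argument bypasses all of this structure: once $h$ is absorbed into the approximant via $q=p-h$, the $2m$ forced sign alternations of $q$ on $(0,1)$ give $\deg q\ge 2m-1$ by plain zero-counting, and the degree cap on $h$ then forces $d\ge 2m-1$. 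This is considerably more elementary and self-contained; the paper's route, by contrast, is valuable mainly as an illustration that the proposition is a corollary of its central structural theorem.

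One minor remark: at the boundary $\varepsilon=1$ your argument genuinely degrades, and so does the statement itself---if $|f|\le 1$ on $[-1,1]$ and $h=0$, the zero polynomial already $1$-approximates $f$, giving $\widetilde{\deg}_1(f)=0$. So the looseness you flag at $\varepsilon=1$ is in the proposition as written, not in your reasoning.
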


In the above, $h(x)$ plays no special role because for any $n$-dimensional tridiagonal matrix $A$ with zero diagonals, we have $\langle 1|h(A)|n\rangle=0$ when $\deg(h)\leq n-2$.
Apart from periodic functions, one interesting function is the Chebyshev function of the first kind $T_d(x)$. By the above result and the fact that $m\approx d/4$ for $T_d(x)$,  we have $\widetilde{\deg}_{\varepsilon}(T_d(x)) = \Omega(d)$. Namely, $T_d(x)$ is its own best polynomial approximation with approximate degree $d$ up to a constant factor.

\subsection{Implication of the optimality of QSVT}

In this part, we discuss how Theorem \ref{intro:key theorem} implies the optimality of QSVT in terms of the approximate degree. Some basic results about block-encoding can be found in Section \ref{section:A quantum algorithm for matrix function}.

Let $U=\begin{pmatrix}
    A & \cdot \\
    \cdot & \cdot 
\end{pmatrix}$ be a unitary that encodes $A$, which is known as a block-encoding of $A$. Let $f(x)$ be a real-valued function.
From $U$, one can construct a new unitary of the form
$\widetilde{U}=\begin{pmatrix}
    f(A) & \cdot \\
    \cdot & \cdot 
\end{pmatrix}$ up to certain error $\varepsilon$. Suppose that it consists of $T$ applications of $U$ and $U^{-1}$. An important problem here is how large $T$ is.
If $f(x)$ is a polynomial of degree $d$ satisfying $|f(x)|\leq 1/2$ when $|x| \leq 1$, then $T=O(d)$ by \cite[Theorem 56]{gilyen2018quantum}.
More generally, we have $T=O(\widetilde{\deg}_\varepsilon(f))$ by applying \cite[Theorem 56]{gilyen2018quantum} to a polynomial with minimum degree that approximates $f$.
In \cite[Theorem 73]{gilyen2018quantum}, it was further proved that for any function $f:I \subseteq [-1,1] \rightarrow \mathbb{R}$, we have that for all $x\neq y \in I \cap[-1/2,1/2]$,
\be
\label{lower bound 1}
T = \Omega\left( \frac{|f(x)-f(y)|-2\varepsilon}{|x-y|} \right).
\ee
More precisely, for all $x\neq y \in I$,
\be \label{lower bound 2}
T 
\geq 
\frac{\max\{f(x)-f(y)-2\varepsilon, \sqrt{1-(f(y)-\varepsilon)^2}-\sqrt{1-(f(x)+\varepsilon)^2}\}}{\sqrt{2}\max\{|x-y|, |\sqrt{1-x^2}-\sqrt{1-y^2}|\}}.
\ee
The above lower bounds are related to the maximal derivative of $f$ when $f$ is differentiable, or more generally the modulus of continuity of $f$. By Jackson's inequality (e.g., see \cite[Theorem V in Chapter I]{jackson1930theory}),\footnote{By Jackson's inequality, if $f(x)$ satisfies $|f(x_2) - f(x_1)| \leq L |x_2-x_1|$ throughout $[a,b]$, then there is a polynomial $P_n(x)$ of degree $n$ such that $|f(x) - P_n(x)| \leq C L|b-a|/n$ for any $x\in[a,b]$, where $C$ is an absolute constant. Let $n=L/\varepsilon$, then we have $\widetilde{\deg}_{\varepsilon}(f) =O( L/\varepsilon)$ when $b-a$ is a constant. Namely, $L = \Omega(\widetilde{\deg}_{\varepsilon}(f))$ when $\varepsilon$ is a constant. But we do not have  $T=\Omega(\widetilde{\deg}_{\varepsilon}(f))$ from (\ref{lower bound 1}) and (\ref{lower bound 2}) directly.} 
the above bounds on $T$ are optimal in terms of approximate degree for many functions in applications.

With $\widetilde{U}$, we can compute $\langle i|f(A)|j\rangle \pm \varepsilon$ with $O(T/\varepsilon)$ applications of $U$ by amplitude estimation. 
Particularly, when $A$ is $s$-sparse and $\|A\| \leq 1-\delta$ for some constant $\delta$, we can construct a block-encoding of $A$ with $\widetilde{O}(s\|A\|_{\max})$ quantum queries.  By Theorem \ref{intro:key theorem}, for any continuous function $f$, we obtain $T = \Omega(\widetilde{\deg}_{\varepsilon}(f))$ when $\varepsilon$ and $s\|A\|_{\max}$ are constants. Combining the upper bound on $T$, we can claim that $T=\Theta(\widetilde{\deg}_{\varepsilon}(f))$. But Theorem \ref{intro:key theorem} shows a (tight) lower bound in terms of approximate degree for a harder problem, which can be solved without QSVT. In summary,

\begin{prop}
Let $f(x):[-1,1] \rightarrow [-1,1]$ be a continuous function, and let $\varepsilon\in(0,1]$ be a constant.
Let $A$ be a $O(1)$-sparse Hermitian matrix with $\|A\|\leq 1-\delta$ for some constant $\delta$. Given a unitary
$U=\begin{pmatrix}
    A & \cdot \\
    \cdot & \cdot 
\end{pmatrix}$, it uses $\Theta(\widetilde{\deg}_\varepsilon(f))$ applications of $U$ and $U^{-1}$ to implement a new unitary of the form $\widetilde{U}=\begin{pmatrix}
    f(A) & \cdot \\
    \cdot & \cdot 
\end{pmatrix}$.
\end{prop}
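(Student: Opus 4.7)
The plan is to establish the claimed $\Theta(\widetilde{\deg}_\varepsilon(f))$ bound by proving matching upper and lower bounds on $T$.

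\emph{Upper bound.} Let $d = \widetilde{\deg}_\varepsilon(f)$ and pick a polynomial $p$ of degree $d$ with $\|p-f\|_{\infty,[-1,1]} \leq \varepsilon$ and $|p(x)|\leq 1$ on $[-1,1]$ (which can be enforced by shrinking $p$ by a $1-O(\varepsilon)$ factor, using $|f|\leq 1$). Applying \cite[Theorem 56]{gilyen2018quantum} to $U$ and $p$ produces a unitary $\widetilde U$ whose top-left block is $p(A)/2$, at the cost of $O(d)$ applications of $U$ and $U^{-1}$. Since $\|p(A)/2 - f(A)/2\|\leq \varepsilon/2$, this is the desired block-encoding of $f(A)$, giving $T = O(\widetilde{\deg}_\varepsilon(f))$.

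\emph{Lower bound.} I would reduce the entry-estimation problem of Theorem~\ref{intro:key theorem} to the existence of $\widetilde U$. Let $A$ be the $2$-sparse Hermitian matrix and $i,j$ the indices furnished by Theorem~\ref{intro:key theorem} (rescaled as described below to meet the norm condition). Since $A$ has constant sparsity and $\|A\|_{\max}\leq 1$, the standard sparse-access construction (Section~\ref{section:A quantum algorithm for matrix function}) builds a block-encoding $U$ of $A$ using $O(1)$ queries to $\mathcal{O}_1, \mathcal{O}_2$ per application. Suppose a procedure uses $T$ applications of $U, U^{-1}$ to realise a block-encoding $\widetilde U$ of $f(A)$ to error $\varepsilon/8$. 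Wrapping $\widetilde U$ in a Hadamard test and running amplitude estimation to precision $\varepsilon/4$ produces an estimate of $\bracket{i}{f(A)}{j}$ to additive error $\varepsilon/4$ using $O(1/\varepsilon)$ invocations of $\widetilde U$, hence $O(T/\varepsilon)$ queries to $\mathcal{O}_1,\mathcal{O}_2$. By Theorem~\ref{intro:key theorem} this total must be $\Omega(\widetilde{\deg}_\varepsilon(f))$, so after absorbing the constant $\varepsilon$ we obtain $T = \Omega(\widetilde{\deg}_\varepsilon(f))$, matching the upper bound.

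\emph{Main obstacle.} Theorem~\ref{intro:key theorem} only guarantees $\|A\|\leq 1$, whereas the proposition requires $\|A\|\leq 1-\delta$. I would bridge this gap by applying Theorem~\ref{intro:key theorem} to the rescaled target $g(x) := f((1-\delta)x)$, obtaining a hard matrix $A_0$ with $\|A_0\|\leq 1$, and then working with $A := (1-\delta)A_0$, so that $\|A\|\leq 1-\delta$ and $f(A) = g(A_0)$. The delicate remaining point is to argue that $\widetilde{\deg}_\varepsilon(g) = \Omega(\widetilde{\deg}_\varepsilon(f))$; this reduces to a linear-substitution statement on approximating polynomials, holds with only a constant loss (depending on $\delta$) for the functions of interest in the paper (such as $e^{ixt}$ and $x^d$, whose hardness is not concentrated near $\pm 1$), and is the main technical point to nail down. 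Logarithmic factors from the block-encoding construction and from amplitude estimation are harmless once $s,\|A\|_{\max},\delta,\varepsilon$ are all treated as constants, as in the statement.
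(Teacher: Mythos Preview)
Your approach matches the paper's exactly: the paper does not give a separate formal proof of this proposition, but presents it as a summary (``In summary,'') of the preceding discussion, which is precisely your two-step argument --- the upper bound from \cite[Theorem~56]{gilyen2018quantum} applied to a degree-$\widetilde{\deg}_\varepsilon(f)$ approximant, and the lower bound by using $\widetilde U$ plus amplitude estimation to solve the entry-estimation problem of Theorem~\ref{intro:key theorem}, each application of $U$ costing $\widetilde{O}(1)$ sparse-access queries.

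The norm-constraint mismatch you flag is real, and the paper's discussion glosses over it just as you do: Theorem~\ref{intro:key theorem} only promises $\|A\|\le 1$, while the proposition assumes $\|A\|\le 1-\delta$ (needed for Lemma~\ref{lem:optimal block-encoding} to produce a $(1,\cdot,\cdot)$-block-encoding). Your proposed fix of passing to $g(x)=f((1-\delta)x)$ and using $A=(1-\delta)A_0$ does not close the gap in general: $\widetilde{\deg}_\varepsilon(g)$ can be arbitrarily smaller than $\widetilde{\deg}_\varepsilon(f)$ --- take any continuous $f$ that vanishes on $[-(1-\delta),1-\delta]$ but is highly oscillatory on $[1-\delta,1]$, so that $g\equiv 0$ while $\widetilde{\deg}_\varepsilon(f)$ is as large as one likes. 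So the ``main technical point to nail down'' cannot be nailed down for arbitrary continuous $f$; the proposition should be read as the informal summary the paper intends, tight for functions (such as $e^{ixt}$, $x^d$) whose approximate-degree hardness is not concentrated near $\pm 1$, rather than as a fully general theorem with the stated hypotheses.
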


\subsection{Summary of the key ideas}

In the Boolean function case, the proof of the polynomial method follows naturally from the definition of quantum query algorithms. In comparison, the proof for matrix functions is more involved. 
The first difficulty arises from the two oracles for sparse matrices, see Definition \ref{defn:oracles}. The second oracle is standard, whereas, for the first one, it is not clear how to use some polynomials to represent it even for Boolean matrices. So when proving a lower bound, we need to tackle this oracle carefully. One way to resolve this problem is to consider some structured sparse matrices so that the positions of nonzero entries are known, e.g., tridiagonal matrices considered in our work. Our idea of using tridiagonal matrices is inspired by the quantum no fast-forwarding theorem \cite{berry2007efficient}.

By the quantum no fast-forwarding theorem, given a sparse Hamiltonian $A$ and a quantum state $\ket{\psi}$, the query complexity of preparing $e^{iAt}\ket{\psi}$ is lower bounded by $\Omega(t)$ in general. This corresponds to the function $f(x)=e^{ixt}$. The proof of \cite{berry2007efficient} is indeed stronger than this. From their proof, one can also claim that computing $\bra{\phi}e^{iAt}\ket{\psi}\pm \varepsilon$ costs at least $\Omega(t)$ queries in general. The main idea of this proof is to reduce the parity problem, which has a known query complexity, to a Hamiltonian simulation problem. More precisely, let $(x_1,\ldots,x_n)\in\{0,1\}^n$, one can construct a 2-sparse weighted graph on vertices $(i,b)$ where $i\in\{0,1,\ldots,n\}, b\in\{0,1\}$ such that $(i-1,b)$ is connected to $(i,b\oplus x_i)$ for all $i\in\{1,\ldots,n\}, b\in\{0,1\}$. With appropriate choices of weights, it was proved that if one can simulate time-evolution according to this adjacency matrix for time $t=\Theta(n)$, one then can solve the parity problem. As a result, the query complexity of simulating this Hamiltonian costs at least $\Omega(t)$.

We will use a similar idea to prove Theorem \ref{intro:key theorem}. The main difficulty here is, for a general function $f(x)$, how to choose the right weights for $f(x)$ such that computing $\langle i|f(A)|j\rangle \pm \varepsilon/4$ allows us to solve the parity problem. The intuition of the choices of the weights for $f(x)=e^{ixt}$ comes from the quantum walk on hypercubes. But generally, we do not have such a nice intuition. To overcome this problem, we considered the dual polynomial method for continuous functions and established its connections to tridiagonal matrices, i.e., Theorem \ref{thm:tridiagonal matrix and approximate degree}. 
To be more exact, the dual polynomial method for a continuous function corresponds to a linear semi-infinite program (LSIP), i.e., a linear program (LP) with infinite many constraints: 
\beas
\min_{g,\delta} && \delta  \\
\text{s.t.} && |f(x)-g(x)| \leq \delta, \quad \forall x\in[-1,1], \nonumber \\
&& \deg(g) \leq d. \nonumber
\eeas
Under certain conditions, an LSIP can be solved by the discretisation method. Namely, there exists $x_1,\ldots,x_n$ with $n= d+2$, such that the above LSIP is equivalent to the following LP
\beas
\min_{g,\delta} && \delta  \\
\text{s.t.} && |f(x_i)-g(x_i)| \leq \delta, \quad \forall i=1,\ldots,n, \nonumber \\
&& \deg(g) \leq d. \nonumber
\eeas
The dual form is 
\bea
\max_{h_i} && \sum_{i\in[n]} f(x_i) h_i \nonumber \\
\text{s.t.} 
&& \sum_{i\in[n]} h_i x_i^k = 0, \quad \forall k\in\{0,1,\ldots,d\}, \label{intro:eq1} \\
&& \sum_{i\in[n]} |h_i| = 1. \label{intro:eq2}
\eea
The solution of the Vandermonde linear system defined by (\ref{intro:eq1}) satisfy
\bes
h_i = \frac{\alpha}{\prod_{j\neq i} (x_i-x_j)}, \quad i\in[n],
\ees
where $\alpha$ is a free parameter determined by (\ref{intro:eq2}).

It turns out that the above dual LP has a close connection to tridiagonal matrices. Assume that $y_m>\cdots>y_1>0$. By viewing $\{\lambda_1,\ldots,\lambda_{2m}\}:=\{\pm y_1, \ldots, \pm y_m\}$ as eigenvalues, we can uniquely construct a tridiagonal matrix $A$ of the form (\ref{thm:tridiagonal matrix}) with $n=2m$. Interestingly, we have
\[
\sum_{i=1}^n \frac{b_1b_2\cdots b_{n-1}}{|\prod_{j\neq i}(\lambda_i- \lambda_j)|} 
= 1.
\]
This corresponds to (\ref{intro:eq2}) and implies $\alpha$ relating to $b_1b_2\cdots b_{n-1}$. More importantly, using this $\alpha$ we have $\langle 1|f(A)|n\rangle=\sum_i f(x_i)h_i$. Namely, the optimal value of the dual LP equals the $(1,n)$-th entry of $f(A)$. So $\langle 1|f(A)|n\rangle=\varepsilon$ if $d = \widetilde{\deg}_\varepsilon(f)$ because there is no duality gap between the above LP and its dual, and the optimal value of the LP is $\varepsilon$ when $d = \widetilde{\deg}_\varepsilon(f)$.
This is basically our main idea of proving Theorem \ref{thm:tridiagonal matrix and approximate degree}. The above assumption on the eigenvalues (i.e., both $\lambda_i$ and $-\lambda_i$ are eigenvalues) is easily satisfied by focusing on the even part of $f$.
Similar but slightly different results hold for the odd part of $f$.
Finally, combining the idea of proving the no fast-forwarding theorem, it is now not very hard to prove Theorem \ref{intro:key theorem}.

Regarding the proof of Theorem \ref{introthm:LB of classical}, we will establish a reduction from the Forrelation problem \cite{aaronson2015forrelation} to  Problem \ref{problem}. The reduction is based on Feynman's clock construction of a sparse Hamiltonian from a quantum circuit. To be more exact, for any quantum circuit $U_{N-1}\cdots U_2 U_1$, we can construct a Hamiltonian $A = \sum_{t=1}^{N-1} b_t ( \ket{t} \bra{t-1} \otimes U_{t} +  \ket{t-1} \bra{t} \otimes U_{t}^\dag )$, where $b_1,\ldots,b_{N-1}$ are some free parameters. 
When all unitaries $U_1,\ldots,U_{N-1}$ are sparse, $A$ is sparse as well. In the special subspace spanned by $\{\ket{t}\otimes U_t\cdots U_1 \ket{0}:t=0,\ldots, N-1\}$, matrix $A$ is the Hamiltonian of a continuous quantum walk on a line. In matrix form, it is a tridiagonal matrix of the form (\ref{thm:tridiagonal matrix}). Thanks to Theorem \ref{thm:tridiagonal matrix and approximate degree}, we can determine the free parameters $b_1,\ldots,b_{N-1}$ via the function $f(x)$. The Forrelation problem corresponds to approximating $\langle i|f(A)|j\rangle$ for some $i,j$. As the Forrelation problem is hard for classical computers, we obtain that Problem \ref{problem} is also hard for classical computers.
The BQP-completeness theorem is also a direct corollary of the above idea.

\subsection{Related results}

Many results were obtained in the past on the quantum query complexity of matrix functions. Let $f$ be a function, $A$ be a matrix and $\ket{\x}, \ket{\y}$ be two states. Then in the past, two types of problems were widely studied: one is to prepare the state $\ket{f(A)\x}$ and the other one is to approximate $\langle \x|f(A)|\y\rangle$. For each problem, the lower bounds were also studied from two different perspectives: one is about the number of oracle queries we must apply, and the other one is the number of copies of the input states we must use. There was much progress for these problems; we below mainly list some previous results on the lower bounds analysis. 

If $f(x)=e^{ixt}$, then this is related to Hamiltonian simulation.
As we mentioned earlier, in \cite[Theorem 3]{berry2007efficient}, Berry et al. proved that to prepare $e^{iAt} \ket{0}$, the query complexity is $\Omega(t)$. The proof is based on the hardness of the parity function. Using this idea, we can also prove that to compute $\langle n|e^{iAt} \ket{0}\pm \varepsilon$, $\Omega(t)$ queries are required. 
This idea is the starting point of our research. Later in \cite[Theorem 1.2]{berry2014exponential}, a lower bound of $\Omega(\frac{\log(1/\varepsilon)}{\log\log(1/\varepsilon)})$ was proved for Hamiltonian simulation based on a similar argument.

If $f(x)=x^{-1}$, then this is related to the matrix inversion problem, which was widely studied in the quantum and classical cases. In \cite{harrow2009quantum}, Harrow, Hassidim, and Lloyd proved that the quantum query complexity of approximating $\bra{\x} M \ket{\x}$ is at least $\widetilde{\Omega}(\kappa^{1-\delta})$ for any $\delta>0$, where $\ket{\x}$ is a normalized state proportional to  $A^{-1}\ket{0}$, $\kappa$ is the condition number of $A$ and $M=\ket{0}\bra{0}\otimes I$. 
In \cite[Theorems 1 and 3]{somma2021complexity}, Somma and Suba{\c{s}}{\i} studied the quantum state verification problem for linear systems. Namely, for a linear system $A\x=\b$, let $\rho$ be a density operator, then how many copies of $\ket{\b}$ are required to verify if $\|\rho - \ket{A^+\b} \bra{A^+\b}\| \leq \varepsilon$.  Let $U_b$ be a unitary that prepares $\ket{\b}$, then the authors proved that $\Omega(\kappa)$ calls of control-$U_b^{\pm 1}$ must be made. In the case that we are only provided with multiple copies of $\ket{\b}$ and $\rho$, then $\Omega(\kappa^2)$ copies are required.
In \cite[Theorems 5 and 7]{alase2022tight}, Alase et al. studied the problem of approximating $\langle \x|M|\x\rangle$, where $A\x=\b$. 
They showed that given an $(\alpha,a,\varepsilon)$ block-encoding of $M$, then $\Theta(\alpha/\varepsilon)$ applications of this block-encoding are required.
In \cite[Theorem 6]{wang2017efficient}, Wang constructed a graph on $n$ nodes such that computing $(\bra{i}-\bra{j})|L^+ | (\ket{i}-\ket{j}) $ up to relative error 0.1 requires making $\Omega(n)$ queries, where $L$ is the Laplacian of the graph. 
In the classical setting, Andoni, Krauthgamer and Pogrow \cite[Theorem 1.2]{andoni} showed that there exists an invertible positive semi-definite matrix $A$ with sparsity $s=O(1)$, condition number $\kappa\leq 3$, and a distinguished index $i\in[n]$, such that every randomized classical algorithm given as input $\b \in \mathbb{R}^n$ outputs $\hat{x}_i$ satisfying ${\rm Pr} [|\hat{x}_i - x_i^*| \leq \|\x^*\|_\infty /5] \geq 6/7$,
where $\x^* = A^{-1}\b$, must probe $n^{\Omega(1/s^2)}$ coordinates of $\b$ in the worst case. They also obtained a lower bound of $\Omega(\kappa^2)$ for a Laplacian linear system.

If $f(x) = x^d$, then this is related to matrix powers (or classical discrete random walks).
In \cite[Theorem 1]{janzing2006bqp}, Janzing and Wocjan proved that computing $\langle j|A^d|j\rangle \pm \|A\|^d \varepsilon$ is BQP-complete.
In \cite[Theorems 1 and 5]{cade_et_al:LIPIcs:2018:9251}, Cade and Montanaro studied the computation of Schatten $p$-norms. They proved that computing ${\rm Tr}(|A|^p)/n \pm \|A\|^p \varepsilon$ is in BQP. If $A$ is also log-local, then it is in DQC1. An efficient classical algorithm for approximating ${\rm Tr}(A^p)/n$ was given.
In \cite[Theorem 1]{gonzalez2021quantum}, Gonz{\'a}lez, Trivedi and Cirac proposed an efficient quantum algorithm for computing $\langle u|A^d|v\rangle$ with query complexity $O(\sqrt{d} \|A\|_1^d/\varepsilon)$ if $A$ is sparse and Hermitian. They also proved that for a general non-Hermitian matrix, $\Omega(d)$ queries are required in general.

If $f(x) = e^{xt}$, then this is related to solving ordinary differential equations (or classical continuous random walks).
In \cite[Proposition 14]{an2022theory}, An et al. considered the problem of how many copies of $\ket{\b}$ are required in order to prepare the state $\ket{e^{At}\b}$. They proved that if $A$ is only diagonalisable, then
$\Omega(e^{t \max|{\rm Re}(\lambda_i-\lambda_j)|})$ copies are required, where $\lambda_i$ are eigenvalues of $A$.

Finally, regarding the BQP-completeness, as mentioned previously, in \cite{janzing2006bqp, janzing2007simple}, Janzing and Wocjan showed that the entry estimation problem for matrix powers is BQP-complete. In \cite{cifuentes2024quantum}, Cifuentes et al. proved BQP-completeness of more functions, including $e^{ixt}, T_m(x), x^m, x^{-1}$, some of them are known previously. They also proved BQP-completeness when the matrix is given in form of Pauli decompositions.


\subsection{Outline of the paper}

The paper is organised as follows:
In Section \ref{section:Preliminary}, we present and prove some preliminary results on linear semi-infinite programming and tridiagonal matrices.
In Section \ref{section:Dual polynomial method}, we prove our main results (i.e., Theorems \ref{intro:key theorem} and \ref{thm:tridiagonal matrix and approximate degree}) of this paper.
As an application, we prove quantum query complexity for matrix inversion (i.e., Theorem \ref{intro: quantum query complexity of matrix inversion}).
In Section \ref{section:A quantum algorithm for matrix function}, for completeness, we present three quantum algorithms for evaluating matrix functions based on quantum singular value transformation. In Section \ref{section:classical case}, we prove the lower bound of classical algorithms and present two classical algorithms for Problem \ref{problem}. Finally, in Section \ref{section:BQP-completeness}, we prove the BQP-completeness of entry estimation problem.

\section{Preliminaries}
\label{section:Preliminary}

\subsection{Linear semi-infinite programming (LSIP)}

Linear programming (LP) is a fundamental task in optimisation that has been well-studied over the past hundred years \cite{dantzig2002linear}. A natural extension of LP is to allow infinitely many constraints. This is known as linear semi-infinite programming (LSIP) \cite{lai1992linear,shapiro2009semi}. 

In an LSIP, we aim to solve the following linear program with infinitely many constraints:
\beas
(P): \min && \sum_{j=1}^n c_j x_j \\
\text{s.t.} && \sum_{j=1}^n a_{j}(t) x_j \geq b(t), \quad t \in \Omega, \\
&& x_j \geq 0, \quad j=1,\ldots,n,
\eeas
where $\Omega$ is an infinite index set. When $\Omega$ is a finite set, then it is a standard linear program. We use ${\rm Val}(P)$ to denote the optimal value of $(P)$. The dual form of the above LSIP is defined as follows:
\beas
(D): \max && \int_\Omega b(x) d\nu(x) \\
\text{s.t.} && \int_\Omega a_j(x) d\nu(x) \leq c_j, \quad j\in\{1,\ldots,n\}, \\
&& \nu(x) \text{ is a non-negative bounded Borel measure on } \Omega.
\eeas

One way to solve LSIPs is via discretisation. Namely, we choose a finite set $T_m=\{t_1,\ldots,t_m\}\subseteq \Omega$ and consider the following standard linear program:
\beas
(P_m): \min && \sum_{j=1}^n c_j x_j \\
\text{s.t.} && \sum_{j=1}^n a_{j}(t) x_j \geq b(t), \quad t \in T_m, \\
&& x_j \geq 0, \quad j=1,\ldots,n.
\eeas
Clearly the feasible set of problem $(P)$ is included in the feasible set of problem $(P_m)$, and hence ${\rm Val}(P) \geq {\rm Val}(P_m)$. If there is a discretisation such that ${\rm Val}(P) = {\rm Val}(P_m)$, then problem $(P)$ is called {\em reducible}. For LSIPs, the following result gives a necessary and sufficient condition of reducibility. The statement here is just for LSIPs but the result is also true for general semi-infinite programming.

\begin{lem}[Theorem 3.1 of \cite{shapiro2009semi}]
\label{lem:LSIP1}
For an LSIP, if ${\rm Val}(P) = {\rm Val}(D)$ and the dual problem $(D)$ has an optimal solution, then problem $(P)$ is reducible. The converse also holds.
\end{lem}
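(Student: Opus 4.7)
The plan is to prove both directions of the stated equivalence. The forward direction combines an optimal dual measure with Carathéodory's theorem on convex conical hulls to extract a finitely supported optimal measure whose support provides the required discretisation. The converse uses strong LP duality for the finite discretised problem to construct an atomic dual measure for the original LSIP.

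For the forward direction, assume ${\rm Val}(P) = {\rm Val}(D)$ and let $\nu^*$ be an optimal solution of $(D)$. Observe that dual feasibility and the dual objective depend on $\nu^*$ only through the vector $v^* = (\int_\Omega a_1 d\nu^*, \ldots, \int_\Omega a_n d\nu^*, \int_\Omega b \, d\nu^*) \in \mathbb{R}^{n+1}$. Setting $\phi(t) = (a_1(t),\ldots,a_n(t),b(t))$, the non-negativity of $\nu^*$ places $v^*$ in the convex conical hull of $\phi(\Omega) \subset \mathbb{R}^{n+1}$. By Carathéodory's theorem for convex cones in $\mathbb{R}^{n+1}$, there exist $m \leq n+1$, points $t_1,\ldots,t_m \in \Omega$, and weights $y_1^*,\ldots,y_m^* \geq 0$ with $v^* = \sum_{i=1}^m y_i^* \phi(t_i)$. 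The atomic measure $\nu' = \sum_{i=1}^m y_i^* \delta_{t_i}$ matches $\nu^*$ on every relevant integral, so it is feasible for $(D)$ with objective value ${\rm Val}(D)$. Taking $T_m = \{t_1,\ldots,t_m\}$, the weights $y^*$ become a feasible solution of the standard LP dual $(D_m)$ of $(P_m)$, so ${\rm Val}(D_m) \geq {\rm Val}(D)$. Weak LP duality then gives ${\rm Val}(P_m) \geq {\rm Val}(D_m) \geq {\rm Val}(D) = {\rm Val}(P)$, and combined with the reverse inequality ${\rm Val}(P) \geq {\rm Val}(P_m)$ noted in the paper, we obtain ${\rm Val}(P) = {\rm Val}(P_m)$, so $(P)$ is reducible.

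For the converse, suppose $(P)$ is reducible, so ${\rm Val}(P) = {\rm Val}(P_m)$ for some finite $T_m = \{t_1,\ldots,t_m\} \subseteq \Omega$. The problem $(P_m)$ is a standard finite-dimensional LP with finite optimal value, and strong LP duality supplies an optimal dual solution $y^* \in \mathbb{R}_{\geq 0}^m$ satisfying $\sum_i y_i^* a_j(t_i) \leq c_j$ for every $j$ and $\sum_i y_i^* b(t_i) = {\rm Val}(D_m) = {\rm Val}(P_m)$. The atomic measure $\nu' = \sum_i y_i^* \delta_{t_i}$ is a bounded non-negative Borel measure on $\Omega$, is feasible for $(D)$, and has objective value ${\rm Val}(P)$. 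Since weak LSIP duality gives ${\rm Val}(D) \leq {\rm Val}(P)$, we conclude ${\rm Val}(D) = {\rm Val}(P)$ and $\nu'$ is an optimal dual solution.

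The main technical subtlety is the measure-reduction step in the forward direction, where Carathéodory's theorem requires $v^*$ actually to lie in the convex conical hull of $\phi(\Omega)$ rather than merely in its closure. Under the mild regularity assumptions that are automatic in our applications ($\Omega$ a compact interval and $a_j, b$ continuous, so $\phi(\Omega)$ is compact), the conical hull is closed and the reduction is immediate; these are precisely the hypotheses under which the approximate-degree LSIPs of Section \ref{section:Dual polynomial method} are formulated. The rest of the argument is a direct combination of standard weak duality for LSIPs and strong duality for finite LPs.
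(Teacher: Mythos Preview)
The paper does not supply a proof of this lemma; it is quoted verbatim from \cite{shapiro2009semi} and used as a black box. Your argument is the standard one and is correct: Carath\'eodory applied to the moment vector of an optimal dual measure yields a finitely supported optimal dual, whose support furnishes the reducing discretisation, and conversely strong LP duality on the finite discretised problem produces an atomic optimal dual measure for $(D)$. You have also correctly flagged the one genuine technical point, namely that $v^*$ a priori lies only in the \emph{closed} conical hull of $\phi(\Omega)$, and that compactness of $\Omega$ together with continuity of the data (which holds in all the LSIPs actually used in the paper) closes this gap.
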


About the size $m$ in the discretisation method, we have the following known result. Again, we only state it for LSIPs.

\begin{lem}[Theorem 3.3 of \cite{shapiro2009semi}]
\label{lem:LSIP2}
Suppose that problem $(P)$ is reducible. Then there exists a discretisation $(P_m)$ such that ${\rm Val}(P) = {\rm Val}(P_m)$ and $m\leq n$ if ${\rm Val}(P) < \infty$.
\end{lem}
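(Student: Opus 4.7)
The plan is to combine the definition of reducibility with the basic-feasible-solution theory of finite linear programming. Concretely, I would first use reducibility to pick \emph{some} finite witness discretisation $T$ with ${\rm Val}(P_T)={\rm Val}(P)$, and then shrink $T$ to a subset $T'$ of size at most $n$ by inspecting the LP dual of $(P_T)$.

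First, by the definition of reducibility, fix a finite $T=\{t_1,\dots,t_k\}\subseteq\Omega$ with ${\rm Val}(P_T)={\rm Val}(P)<\infty$. Read $(P_T)$ as the standard LP of minimising $\sum_{j=1}^n c_j x_j$ subject to $\sum_{j=1}^n a_j(t_i) x_j\geq b(t_i)$ for $i=1,\dots,k$ and $x_j\geq 0$. Its LP dual
\[
(D_T):\quad \max\ \sum_{i=1}^k b(t_i) y_i \ \ \text{s.t.}\ \ \sum_{i=1}^k a_j(t_i) y_i \leq c_j\ (j=1,\dots,n),\ \ y_i\geq 0,
\]
is exactly the LSIP dual $(D)$ restricted to non-negative atomic measures supported on $T$ (where $y_i$ is the mass at $t_i$). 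Since $(P_T)$ is feasible with finite value, finite-dimensional LP strong duality yields an optimal basic feasible solution $y^*$ of $(D_T)$ with $\sum_i b(t_i) y_i^* = {\rm Val}(P_T) = {\rm Val}(P)$.

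Next, $(D_T)$ has $k$ non-negative variables but only $n$ linear constraints, so any basic feasible solution has at most $n$ positive coordinates: after introducing $n$ slack variables, a vertex of the resulting polyhedron in $\mathbb{R}^{k+n}$ is determined by $k$ linearly independent active non-negativity constraints drawn from the $k+n$ available ones, hence at most $n$ of the $y_i$ can be strictly positive. Set $T':=\{t_i : y_i^*>0\}\subseteq T$, so $|T'|\leq n$. I then claim that $T'$ itself realises the optimum: on the one hand $T'\subseteq\Omega$, so ${\rm Val}(P_{T'})\leq {\rm Val}(P)$ because relaxing the index set can only decrease the minimum; on the other hand, the restriction $(y_i^*)_{t_i\in T'}$ is feasible for the LP dual of $(P_{T'})$ with objective value ${\rm Val}(P)$, so weak LP duality gives ${\rm Val}(P_{T'})\geq {\rm Val}(P)$. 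Hence ${\rm Val}(P_{T'})={\rm Val}(P)$ with $|T'|\leq n$, as required.

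The main conceptual hurdle is identifying the LP dual of the discretised primal $(P_T)$ with the restriction of the LSIP dual $(D)$ to measures supported on $T$, so that sparsity of a basic dual LP solution translates into a small-support discretisation of $(P)$; once that step is in place, the bound $m\leq n$ is nothing more than the standard fact that extreme points of a polyhedron in $\mathbb{R}^k$ cut out by $n$ linear constraints and non-negativity have at most $n$ positive coordinates. The hypothesis ${\rm Val}(P)<\infty$ enters solely to guarantee that $(P_T)$ is bounded, which in turn makes LP strong duality, and hence the existence of a basic optimal dual $y^*$, available.
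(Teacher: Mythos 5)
Your proof is correct, and since the paper quotes this lemma directly from Shapiro's reference without reproducing an argument, there is no in-paper proof to compare against; your LP-duality argument is the standard route one would expect Shapiro to take. The chain of reasoning is sound at every step: reducibility supplies a finite witness $T$; the hypothesis ${\rm Val}(P)<\infty$ makes $(P_T)$ feasible and bounded, so finite LP strong duality applies; the dual polyhedron of $(P_T)$ is pointed (it lives in $\mathbb{R}^k_{\geq 0}$), hence has an optimal vertex, and in standard form with $n$ equality constraints such a basic solution has at most $n$ nonzero entries, giving $|T'|\leq n$; and the two-sided inequality ${\rm Val}(P)\leq{\rm Val}(P_{T'})\leq{\rm Val}(P)$ correctly combines monotonicity under dropping constraints with weak duality for the truncated dual vector, including the harmless degenerate case $T'=\emptyset$. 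The only cosmetic quibble is the phrasing ``determined by $k$ linearly independent active non-negativity constraints'': what you mean is that a vertex of the slacked polyhedron in $\mathbb{R}^{k+n}$ has at least $k$ zero coordinates among the $k+n$ variables because the $n$ equalities already contribute rank $n$; stating it as the standard fact that a basic feasible solution of a standard-form LP with $n$ equality rows has at most $n$ positive coordinates would be cleaner, but the conclusion you draw is the right one.
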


The next result tells us when the duality gap is zero. Combining with Lemma \ref{lem:LSIP1}, we obtain a simple criterion to verify the reducibility of an LSIP. 

\begin{lem}[Theorem 2.1 of \cite{lai1992linear} or Theorem 2.3 of \cite{shapiro2009semi}]
\label{lem:LSIP3}
Assume that $a_j(t), b(t)$ are continuous, $\Omega$ is compact,
problem $(P)$ has a nonempty feasible set and $-\infty <{\rm Val}(P) < +\infty$. If there is a vector $(x_1,\ldots,x_n)\in \mathbb{R}^n_{\geq 0}$ such that $\sum_{j=1}^n a_{j}(t) x_j > b_j(t) $ for all $ t \in \Omega,$ then its dual problem $(D)$ also has an optimal solution and ${\rm Val}(P) = {\rm Val}(D)$. 
\end{lem}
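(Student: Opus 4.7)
The plan is to establish weak duality directly and then strong duality by a Hahn–Banach separation argument in the Banach space $\mathbb{R}\times C(\Omega)$ with the sup-norm, where the Slater-type hypothesis is precisely what rules out a degenerate separating functional.

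For weak duality, given any primal-feasible $x\in\mathbb{R}^n_{\geq 0}$ and any dual-feasible non-negative bounded Borel measure $\nu$ on $\Omega$, I would integrate the primal constraint $\sum_j a_j(t)x_j\geq b(t)$ against $d\nu(t)$. Compactness of $\Omega$ with continuity of the $a_j,b$ give boundedness, and finiteness of $\nu$ permits swapping the finite sum with the integral, yielding $\int_\Omega b\, d\nu \leq \sum_j x_j\int_\Omega a_j\, d\nu \leq \sum_j c_j x_j$, so $\mathrm{Val}(D)\leq \mathrm{Val}(P)$.

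For strong duality I would consider the convex set
$$S = \left\{(v,h)\in\mathbb{R}\times C(\Omega) : \exists\, x\in\mathbb{R}^n_{\geq 0} \text{ with } v\geq \sum_j c_j x_j,\ h(t)\leq \sum_j a_j(t)x_j - b(t)\ \forall t\in\Omega\right\},$$
so that $\mathrm{Val}(P) = \inf\{v:(v,0)\in S\}$. For $v^*<\mathrm{Val}(P)$ the point $(v^*,0)\notin \overline{S}$, and Hahn–Banach produces a non-zero continuous linear functional which, by the Riesz representation theorem, has the form $(v,h)\mapsto \alpha v - \int h\, d\mu$ for some $\alpha\in\mathbb{R}$ and some signed Borel measure $\mu$. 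Closure of $S$ under the recession directions $v\mapsto v+r$ (for $r\geq 0$) and $h\mapsto h-g$ (for $g\in C_+(\Omega)$, the cone of non-negative continuous functions) forces $\alpha\geq 0$ and $\mu\geq 0$, and finiteness of the infimum over $x\geq 0$ yields $\int a_j\, d\mu \leq \alpha c_j$ for all $j$ together with the strict inequality $\alpha v^* < \int b\, d\mu$. If $\alpha > 0$, rescaling gives a dual-feasible $\nu := \mu/\alpha$ with $\int b\, d\nu > v^*$; letting $v^*\uparrow \mathrm{Val}(P)$ and extracting a weak-$*$ limit via Banach–Alaoglu (uniform total-variation bounds follow from pairing with the Slater point $\hat{x}$) then produces an optimal dual solution and closes $\mathrm{Val}(D)\geq \mathrm{Val}(P)$.

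The main obstacle is the degenerate case $\alpha = 0$, and this is exactly where the Slater hypothesis is used. By continuity of the data and compactness of $\Omega$, the strict inequality $\sum_j a_j(t)\hat{x}_j > b(t)$ for all $t\in\Omega$ improves to a uniform slack $\sum_j a_j(t)\hat{x}_j - b(t)\geq \delta > 0$. Pairing $\mu$ against $\hat{x}$ then gives
$$\sum_j \hat{x}_j\int_\Omega a_j\, d\mu \;=\; \int_\Omega \sum_j a_j(t)\hat{x}_j\, d\mu(t) \;\geq\; \int_\Omega b\, d\mu + \delta\,\mu(\Omega),$$
while $\alpha = 0$ forces $\int a_j\, d\mu \leq 0$ for every $j$ and $\int b\, d\mu > 0$. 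Hence the left side is $\leq 0$ while the right side is strictly positive, a contradiction; so $\alpha>0$ and the preceding rescaling argument applies, completing the proof.
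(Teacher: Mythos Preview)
The paper does not give its own proof of this lemma; it is quoted as a black box from the cited references of Lai--Wu and Shapiro. Your proposal reconstructs the standard conic-duality proof via Hahn--Banach separation in $\mathbb{R}\times C(\Omega)$, which is essentially the argument found in those sources.

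There is one small gap worth flagging: you assert that for $v^*<\mathrm{Val}(P)$ the point $(v^*,0)$ lies outside $\overline{S}$, but this need not hold without an additional boundedness argument on the approximating $x$'s. The standard fix is to use instead that the Slater point gives $S$ nonempty interior (your uniform slack $\delta>0$ shows this directly), so weak separation of $(v^*,0)\notin S$ from $\mathrm{int}(S)$ already yields a nonzero functional with $\alpha v^*\leq\inf_{(v,h)\in S}\ell(v,h)$. The degenerate case $\alpha=0$ then still collapses: taking $(v,h)=(c^T\hat x,\hat h)\in S$ with $\hat h\geq\delta$ gives $0\leq -\int\hat h\,d\mu\leq -\delta\,\mu(\Omega)$, forcing $\mu=0$ and hence $\ell=0$, contradicting nonzeroness of the separating functional. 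The remainder of your argument---the recession-direction analysis, dual feasibility of $\mu/\alpha$, and the Banach--Alaoglu compactness step for attainment---is correct as written.
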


\subsection{Symmetric tridiagonal matrix}

A symmetric tridiagonal matrix of dimension $n$ is a matrix of the following form
\be
\label{symmetric tridiagonal matrix}
T  = \begin{pmatrix}
a_1 & b_1 \\
b_1 & a_2 & b_2 \\
& b_2 & \ddots & \ddots \\
& & \ddots & \ddots & b_{n-1} \\
& & & b_{n-1} & a_n
\end{pmatrix}.
\ee
These kinds of matrices have been widely studied in linear algebra and in many related areas. Many interesting results were obtained in the past. We below list some known results that are most relevant to our work. We restrict ourselves to the case that $a_1=\cdots=a_n=:a$ and $b_i\neq 0$ for all $i$. 

If $T$ is invertible, then the $(i,j)$-th entry of $T^{-1}$ is (e.g., see \cite{usmani1994inversion})
\be
\label{inverse of tridiagonal matrix}
(T^{-1})_{i,j} = 
(-1)^{j-i}b_i \cdots b_{j-1} \frac{\theta_{i-1} \phi_{j+1}}{\theta_n}, \quad i\leq j,
\ee
where $\theta_0=1, \theta_1=a, \phi_{n+1}=1, \phi_n=a$, and 
\beas
\theta_i &=& a \theta_{i-1} - b^2_{i-1}\theta_{i-2}, \quad i=2,3,\ldots,n, \\
\phi_i &=& a \phi_{i+1} - b^2_i \phi_{i+2}, \qquad i=n-1,\ldots,1.
\eeas
In the above, when $i=j$, we view $b_i \cdots b_{j-1}=1$. In particular, 
\[
\theta_n = \det(T) = a^n + \sum_{k=1}^{\lfloor n/2 \rfloor} (-1)^k \left( \sum_{\substack{i_{t+1}-i_{t}\geq 2, \\ t\in \{1,2,\ldots,k-1\}}} b_{i_1}^2 b_{i_2}^2 \cdots b_{i_k}^2 \right) a^{n-2k}.
\]

We define $A:=T - a I_n$, which only contains the two sub-diagonals.
Let $\lambda_1,\ldots,\lambda_n$ be the eigenvalues of $A$. Since $A$ is symmetric tridiagonal and $b_i\neq 0$ for all $i$, all eigenvalues are distinct \cite{parlett1998symmetric}. From the characteristic polynomial $\theta_n$, we can see that if $\lambda_i$ is an eigenvalue of $A$, so is $-\lambda_i$. As a result, if $n$ is odd, then 0 is an eigenvalue of $A$. If $n$ is even and $b_1b_2 \cdots b_{n-1}\neq 0$, then $A$ is invertible.

Let $\Gamma$ be a closed contour in the complex plane that encloses all the eigenvalues of $A$ and let $f$ be an analytic function on and inside $\Gamma$. For example, $\Gamma=\{z=r e^{i\theta}:\theta\in[0,2\pi]\}$ can be a circle of radius $r > \|A\|$.
Then by the Cauchy integral formula for matrix functions \cite{higham2008functions}, we know that the $(1,n)$-th entry of $f(A)$ satisfies
\bea
\label{entry of 1-n}
f(A)_{1,n} = \frac{1}{2\pi i} \int_\Gamma f(z) (zI-A)^{-1}_{1,n} dz
= \frac{ b_1b_2\cdots b_{n-1}}{2\pi i} \int_\Gamma \frac{f(z)}{\det(zI-A)} dz .
\eea
By the residue theorem, we can represent it in terms of eigenvalues
\be
\label{fA1-n}
f(A)_{1,n} =  b_1b_2\cdots b_{n-1}
\sum_{i=1}^n \frac{f(\lambda_i)}{\prod_{j\neq i} (\lambda_i-\lambda_j)}.
\ee
Regarding the $(2,n-1)$-th entry, we have
\be
\label{fA1-n-1}
f(A)_{2,n-1} = \frac{ b_2\cdots b_{n-2}}{2\pi i} \int_\Gamma \frac{ z^2 f(z)  }{\det(zI-A)} dz 
= b_2\cdots b_{n-2}
\sum_{i=1}^n \frac{\lambda_i^2 f(\lambda_i) }{\prod_{j\neq i} (\lambda_i-\lambda_j)}.
\ee

The next result follows from \cite[Theorems 1 and 3]{hochstadt1974construction}. In the original proof of \cite{hochstadt1974construction}, the author did not consider whether the diagonal entries disappeared or not. We here follow their proof with a focus on this and some other results required in this work.

\begin{lem}
\label{key lemma}
Given $ x_m>\cdots>x_1>0$, then there is a unique symmetric tridiagonal matrix $A$ of dimension $2m$ with zero diagonal entries and subdiagonal entries $b_1,b_2,\ldots,b_{2m-1}>0$ such that its eigenvalues are $\{\pm x_i, i\in[2m]\}$ and
\be
\sum_{i=1}^m \frac{b_1b_2\cdots b_{2m-1}}{x_i |\prod_{j\neq i}(x_i^2- x_j^2)|} 
= 1. 
\label{key-lem:condition}
\ee
Moreover, $b_1,b_2,\ldots,b_{2m-1}$ have the symmetric property that $b_i=b_{2m-i}$ for all $i$.
\end{lem}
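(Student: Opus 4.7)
The plan is to prove existence and uniqueness simultaneously by recognising the normalisation condition (\ref{key-lem:condition}) as a disguised Cauchy--Schwarz equality that forces persymmetry, after which standard inverse-spectral theory of Jacobi matrices closes everything. The central identity I will use is obtained by applying formula (\ref{fA1-n}) with $f$ equal to the Lagrange interpolant $L_i$ at the eigenvalue $\lambda_i$ (so $L_i(\lambda_j)=\delta_{ij}$). Writing $v_i$ for a real orthonormal eigenvector system of $A$, this yields
\[
v_i(1)\,v_i(n) \;=\; \frac{b_1 b_2\cdots b_{n-1}}{\prod_{j\neq i}(\lambda_i-\lambda_j)},
\]
and for $\lambda_i=\pm x_k$ a direct factorisation gives $\prod_{j\neq i}(\lambda_i-\lambda_j)=\pm 2x_k\prod_{j\neq k}(x_k^2-x_j^2)$. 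Thus (\ref{key-lem:condition}) is equivalent to $\sum_i |v_i(1)\,v_i(n)|=1$.

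For uniqueness I would argue as follows. By Cauchy--Schwarz applied to the rows of the orthogonal eigenvector matrix,
\[
\sum_i |v_i(1)\,v_i(n)| \;\le\; \sqrt{\textstyle\sum_i v_i(1)^2}\,\sqrt{\textstyle\sum_i v_i(n)^2} \;=\; 1,
\]
with equality if and only if $|v_i(1)|=|v_i(n)|$ for every $i$. Hence (\ref{key-lem:condition}) forces the spectral measures of $A$ at $e_1$ and at $e_n$ to coincide. The classical inverse spectral theorem for Jacobi matrices (the spectral measure at the first basis vector, together with positivity of the subdiagonal, uniquely determines a symmetric tridiagonal matrix) therefore implies $JAJ=A$, where $J$ is the exchange matrix; a short calculation shows $JAJ=A$ is exactly $b_i=b_{2m-i}$. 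Moreover the weights $w_i:=v_i(1)^2$ are forced to equal $b_1\cdots b_{n-1}/|\prod_{j\neq i}(\lambda_i-\lambda_j)|$, with $b_1\cdots b_{n-1}$ fixed by $\sum_i w_i=1$, so the entire Jacobi matrix is determined by $\{x_1,\dots,x_m\}$ alone.

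For existence I would reverse the argument. Set
\[
w_i \;=\; \frac{c}{|\prod_{j\neq i}(\lambda_i-\lambda_j)|}, \qquad c \;=\; \Bigl(\textstyle\sum_i \frac{1}{|\prod_{j\neq i}(\lambda_i-\lambda_j)|}\Bigr)^{-1},
\]
so $w_i>0$ and $\sum_i w_i=1$. By the $\pm$ symmetry of the spectrum, $w_{+k}=w_{-k}$, so the measure $\mu=\sum_i w_i\delta_{\lambda_i}$ is even; invoking the inverse spectral theorem produces a unique symmetric tridiagonal $A$ with positive subdiagonal having $\mu$ as its spectral measure at $e_1$. Zero diagonal is then automatic: evenness of $\mu$ makes the three-term-recurrence coefficients $\alpha_k=\int\lambda\, p_k(\lambda)^2\,d\mu/\int p_k(\lambda)^2\,d\mu$ vanish identically. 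To check (\ref{key-lem:condition}), square the formula for $v_i(1)v_i(n)$, divide by $w_i=v_i(1)^2$, and sum over $i$: since $\sum_i v_i(n)^2=1$, the chosen form of $w_i$ collapses the sum to $(b_1\cdots b_{n-1})^2/c^2=1$, i.e.\ $b_1\cdots b_{n-1}=c$, which is (\ref{key-lem:condition}) after unpacking the factors of $2x_k$. The equality $v_i(n)^2=w_i=v_i(1)^2$ then gives persymmetry, hence $b_i=b_{2m-i}$.

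The main obstacle is conceptual rather than computational: recognising that the seemingly ad hoc normalisation (\ref{key-lem:condition}) is exactly the Cauchy--Schwarz equality condition, so that it rigidly couples the first and last entries of every eigenvector and thereby pins down the Jacobi matrix via the inverse spectral theorem. The only genuine bookkeeping subtleties are (i) tracking the signs in $v_i(n)=\pm v_i(1)$ so that the potentially negative right-hand side of the Lagrange identity is reconciled with the nonnegative $|v_i(1)v_i(n)|$, and (ii) confirming that the zero-diagonal structure, positivity of the $b_i$, and persymmetry are all compatible with (and forced by) the chosen spectral data; both are handled by the standard Szegő--Chihara theory of orthogonal polynomials on symmetric supports.
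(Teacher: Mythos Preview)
Your argument is correct and takes a genuinely different route from the paper's. The paper proceeds constructively: assuming a persymmetric $A$ exists, it uses the resolvent identity $\langle \e_n|(\lambda I-A)^{-1}|\e_1\rangle=b_1\cdots b_{n-1}/g(\lambda)$ together with the eigenvector expansion to obtain $\|\v_i\|^2=|g'(\lambda_i)|/(b_1\cdots b_{n-1})$, and then matches the two power-series expansions of $\langle \e_1|(\lambda I-A)^{-1}|\e_1\rangle$ term by term to read off $a_1,b_1,a_2,b_2,\ldots$ recursively; the vanishing of each $a_{i+1}$ is argued via a path-counting identity for $\langle \e_1|A^{2i+1}|\e_1\rangle$. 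By contrast, you recognise (\ref{key-lem:condition}) abstractly as the equality case of Cauchy--Schwarz applied to the first and last rows of the orthonormal eigenvector matrix, deduce that the spectral measures at $\e_1$ and $\e_n$ coincide and hence that $A$ is persymmetric with a uniquely determined spectral measure at $\e_1$, and then invoke the inverse spectral theorem for Jacobi matrices as a black box; the zero diagonal drops out from the evenness of the measure. The paper's approach is entirely self-contained---in effect it reproves the relevant fragment of the inverse spectral theorem via moments---and outputs an explicit recursion for the $b_i$. Yours is shorter and more conceptual, and in particular \emph{explains} why the right-hand side of (\ref{key-lem:condition}) must be exactly $1$ rather than merely $\le 1$, at the cost of importing standard Jacobi-matrix/orthogonal-polynomial machinery.
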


\begin{proof}
Denote $n=2m$ and 
\[
S=\begin{pmatrix}
0 &  \cdots & 0 & 1 \\
0 &  \cdots & 1 & 0 \\
\vdots  & \iddots & \vdots & \vdots \\
1 & \cdots & 0 & 0 \\
\end{pmatrix}_{n\times n}, \quad A = \begin{pmatrix}
a_1 & b_1 \\
b_1 & \ddots & \ddots \\
& \ddots & \ddots & b_{n-1} \\
& & b_{n-1} & a_n
\end{pmatrix}.
\]
We below give an algorithm to construct $A$ satisfying the claimed conditions from the spectral data $\{\lambda_1,\ldots,\lambda_n\}:=\{\pm x_1, \ldots,\pm x_{m}\}$.
First, we will give some properties of such $A$ assuming the existence, then we use the properties to propose an algorithm.

Let $\{\v_1,\ldots,\v_n\}$ be the eigenvectors of $A$. Due to the tridiagonal property, it is easy to check that the first entry of $\v_i$ is nonzero for all $i$. We normalize $\v_i$ so that the first entry is 1. We use $\e_1,\ldots,\e_n$ to denote the standard basis of $\mathbb{R}^n$, then we have
$\e_1 = \sum_i \v_i/\|\v_i\|^2$ and $\bra{\e_1} (\lambda I-A)^{-1} \ket{\e_1} = \sum_i \frac{1}{\|\v_i\|^2(\lambda-\lambda_i)}$. Let $g(\lambda) = \det(\lambda I- A) = \prod_i (\lambda - \lambda_i)$ be the characteristic polynomial of $A$, which is given to us. Since $A$ has the symmetric property, we have $SAS=A$. Consequently, $A (S\v_i) = \lambda_i (S\v_i)$. Note that all the eigenvalues of $A$ are distinct, so each eigenspace has dimension 1. Thus $S\v_i = k_i \v_i$ for some $k_i$.
Use the orthogonality of $S$, we have $k_i\in\{1,-1\}$.  As a result, the last entry of $\v_i$ is either 1 or $-1$. By (\ref{inverse of tridiagonal matrix}), we have that
$\bra{\e_n} (\lambda I - A)^{-1} \ket{\e_1} = b_1\cdots b_{n-1}/g(\lambda)$. By using the expansion of $\e_1$ in terms of the eigenvectors, we also have $\bra{\e_n} (\lambda - A)^{-1} \ket{\e_1} = \sum_i \frac{\bra{\e_n} \v_i\rangle}{\|\v_i\|^2(\lambda-\lambda_i)}$. Thus
\[
\sum_i \frac{\bra{\e_n} \v_i\rangle}{\|\v_i\|^2(\lambda-\lambda_i)} = \frac{b_1\cdots b_{n-1}}{g(\lambda)}
=\sum_i \frac{b_1\cdots b_{n-1}}{g'(\lambda_i) (\lambda-\lambda_i)}.
\]
Comparing both sides and noting that $|\bra{\e_n} \v_i\rangle|=1$, we have $\|\v_i\|^2 = |g'(\lambda_i)|/b_1\cdots b_{n-1}$. Denote 
\[
f(\lambda) = g(\lambda) \sum_i \frac{1}{\|\v_i\|^2(\lambda-\lambda_i)}
= g(\lambda) \sum_i \frac{b_1\cdots b_{n-1}}{|g'(\lambda_i)|(\lambda-\lambda_i)},
\]
which is a polynomial of degree $n-1$. From the construction, we know that $\bra{\e_1} (\lambda -A)^{-1} \ket{\e_1} = f(\lambda)/g(\lambda)$, and $f(\lambda)$ is indeed the characteristic polynomial of the tridiagonal matrix by deleting the first row and column of $A$.

We are now ready to present the algorithm to construct $A$.
As an indeterminant, we assume $\lambda > x_m$. Then
\beas
\bra{\e_1} (\lambda -A)^{-1} \ket{\e_1} &=& \sum_{k=0}^\infty \frac{\langle \e_1|A^k|\e_1\rangle}{\lambda^{k+1}}, \\
f(\lambda)/g(\lambda) &=& \sum_{k=0}^\infty \frac{1}{\lambda^{k+1}}
\left(
\sum_{i=1}^n \frac{b_1\cdots b_{n-1}\lambda_i^k }{|g'(\lambda_i)|} 
\right).
\eeas
The above two right-hand sides are equal to each other. When $k=0$, we obtain
\[
1 = \sum_{i=1}^n\frac{b_1\cdots b_{n-1} }{|g'(\lambda_i)|} = \sum_{i=1}^n\frac{b_1\cdots b_{n-1} }{|\prod_{j\neq i} (\lambda_i-\lambda_j)|} ,
\]
which proves condition (\ref{key-lem:condition}). Note that all $\lambda_i$ are given to us, so from the above equality, we also know the product $b_1\cdots b_{n-1}$, which is positive. This means $b_i\neq 0$ for all $i$.
When $k=1$, we have
\[
a_1 = \sum_{i=1}^n\frac{b_1\cdots b_{n-1} \lambda_i }{|g'(\lambda_i)|} = 0.
\]
This follows from the fact that $|g'(x_i)|=|g'(-x_i)|=2x_i |\prod_{j\neq i}(x_i^2-x_j^2)|$.
When $k=2$, we obtain 
\[
b_1^2 = \sum_{i=1}^n\frac{b_1\cdots b_{n-1} \lambda_i^2 }{|g'(\lambda_i)|} \neq 0.
\]

{\bf Claim.} If $a_1=\cdots=a_i=0$, then $\langle \e_1|A^{2i+1}|\e_1\rangle = b_1^2\cdots b_i^2 a_{i+1}$. 

\begin{proof}[Proof of the claim] It suffices to focus on the first $i+1$ rows and columns of $A$. The claim is a direct result of random walks on a weighted line on $i+1$ nodes together with a loop at the last node.
We start from node 1 and pass through all the edges (including the loop) then return back to node 1. This is the only possibility of a random walk of length $2i+1$.
\end{proof}

From the above claim, we know that suppose we already have the information of $b_1,\ldots,b_i$, which by construction are nonzero, then $a_{i+1}=0$. As a result, all diagonal entries of $A$ are zero. What is left is to show that $b_1,\ldots,b_{n-1}$ can be constructed and are all positive. Using the above idea, we already have $b_1>0$. Suppose we already have $b_1,\ldots,b_{i-1}$. We next compute $b_{i}$ by considering $\langle \e_1|A^{2i}|\e_1\rangle$. 

It is easy to see that $A^{i}\e_1 = d_1\e_1+d_2\e_2+\cdots+d_{i+1}\e_{i+1}$,
where $d_j$ depends on $b_1,\ldots,b_{i-1}$ for any $j\leq i$ and $d_{i+1}=b_1b_2\cdots b_{i}$. So
$\langle \e_1|A^{2i}|\e_1\rangle = d_1^2+d_2^2+\cdots+d_{i+1}^2$, from which we can determine $b_i$.
\end{proof}

Similarly, we have the following result when 0 is an eigenvalue.

\begin{lem}
\label{key lemma2}
Given $x_m>\cdots>x_1>0$, then there is a unique symmetric tridiagonal matrix $A$ of dimension $2m+1$ with zero diagonal entries and subdiagonal entries $b_1,b_2,\ldots,b_{2m}>0$ such that its eigenvalues are $\{0,\pm x_i: i\in[2m]\}$ and
\be
\sum_{i=1}^m \frac{b_2\cdots b_{2m-1}}{ |\prod_{j\neq i}(x_i^2- x_j^2)|} =1.
\label{key lemma2:eq}
\ee
Moreover, $b_1,b_2,\ldots,b_{2m}$ have the symmetric property that $b_i=b_{2m+1-i}$ for all $i$.
\end{lem}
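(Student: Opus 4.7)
The plan is to mirror the constructive proof of Lemma \ref{key lemma}, with modifications to accommodate the odd dimension $n = 2m+1$ and the new zero eigenvalue. Denote the characteristic polynomial by $g(\lambda) = \lambda \prod_{i=1}^m(\lambda^2 - x_i^2)$, and assume such a matrix $A$ exists. As in Lemma \ref{key lemma}, the centrosymmetry $SAS = A$ (with $S$ the anti-diagonal identity) together with the distinctness of the eigenvalues gives $S\v_i = \pm \v_i$ for each normalized eigenvector $\v_i$ (with $(\v_i)_1 = 1$), hence $(\v_i)_n = \pm 1$. Writing $\bra{\e_1}(\lambda I - A)^{-1}\ket{\e_n}$ both via formula (\ref{inverse of tridiagonal matrix}) as $b_1 \cdots b_{n-1}/g(\lambda)$ and via the spectral decomposition yields the relation $\|\v_i\|^2 = |g'(\lambda_i)|/(b_1 b_2 \cdots b_{n-1})$, exactly as in the even case.

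Next, I will analyse the zero eigenvector and extract the normalization-type identity (\ref{key lemma2:eq}). Solving $A\v_0 = \0$ directly using the zero-diagonal positive-subdiagonal structure forces $(\v_0)_{2k} = 0$ and $(\v_0)_{2k+1} = (-1)^k \prod_{j=1}^k b_{2j-1}/b_{2j}$; the centrosymmetric requirement $(\v_0)_n = \pm 1$ is then automatic once we know $b_i = b_{n-i}$. I will then expand $\bra{\e_1}(\lambda I - A)^{-1}\ket{\e_1}$ at $\lambda = \infty$ and match coefficients against $\sum_k \langle \e_1|A^k|\e_1\rangle/\lambda^{k+1}$, where $\langle \e_1|A^k|\e_1\rangle = \sum_i \lambda_i^k/\|\v_i\|^2$. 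The $k=0$ moment $\sum_i 1/\|\v_i\|^2 = 1$ splits into a zero-eigenvalue term $(b_1 \cdots b_{n-1})/\prod_i x_i^2$ and the $\pm x_i$ contributions, and combining this with the explicit expression $\|\v_0\|^2 = \prod_i x_i^2/(b_1 \cdots b_{n-1})$ together with the centrosymmetric factorization $b_1 \cdots b_{n-1} = b_1^2 (b_2 \cdots b_{n-2})$ should simplify to the stated condition (\ref{key lemma2:eq}).

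Higher moments then determine the individual $b_i$'s by the same argument as in Lemma \ref{key lemma}: the identity $\langle \e_1 | A^{2k+1} | \e_1\rangle = b_1^2 \cdots b_k^2 a_{k+1}$ forces $a_{k+1} = 0$ once $b_1, \ldots, b_k$ are shown to be positive, and $\langle \e_1 | A^{2k} | \e_1 \rangle$ recursively determines each $b_k > 0$ via the expansion $A^k \e_1 = \sum_{j=1}^{k+1} d_j \e_j$ with $d_{k+1} = b_1 \cdots b_k$. Uniqueness of the construction and the symmetric property $b_i = b_{n-i}$ follow from the same considerations as in Lemma \ref{key lemma}, essentially from the uniqueness of the recursion together with the invariance of the eigenvalue data under $S$. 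The main obstacle I expect is the middle step: the extra term contributed by the zero eigenvalue has no analogue in the even case, and extracting the precise form of (\ref{key lemma2:eq}) from the $k=0$ normalization requires combining the null-vector expression with the centrosymmetric factorization of $b_1 \cdots b_{n-1}$ at just the right point. Once this algebraic reduction is carried out, the rest of the argument parallels Lemma \ref{key lemma} closely.
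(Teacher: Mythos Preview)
Your overall scheme—rerunning the Lemma~\ref{key lemma} construction from $\e_1$ with the odd-dimensional spectrum—is sound for existence, uniqueness, and centrosymmetry, and this part matches what the paper does (the paper just says ``by Lemma~\ref{key lemma} we know the existence'' and only spells out the identity~(\ref{key lemma2:eq})). The gap is in your derivation of~(\ref{key lemma2:eq}) itself.

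The $k=0$ moment at $\e_1$ gives
\[
1=\frac{b_1\cdots b_{n-1}}{\prod_k x_k^2}+\sum_{i=1}^m\frac{b_1\cdots b_{n-1}}{x_i^2\,|\prod_{j\neq i}(x_i^2-x_j^2)|},
\]
and after the centrosymmetric factorization $b_1\cdots b_{n-1}=b_1^2\,(b_2\cdots b_{n-2})$ this is \emph{not}~(\ref{key lemma2:eq}); it still carries the unwanted factors $1/x_i^2$ and the zero-eigenvalue term, with an undetermined $b_1^2$ in front. The ``explicit expression $\|\v_0\|^2=\prod_i x_i^2/(b_1\cdots b_{n-1})$'' you invoke is just the zero-eigenvalue term restated, so it adds nothing. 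No partial-fraction identity in the $x_i$ alone will collapse this to~(\ref{key lemma2:eq}) without further input about $b_1$. The simplest fix within your framework is to look instead at the $k=2$ moment: $b_1^2=\sum_i \lambda_i^2 b_1\cdots b_{n-1}/|g'(\lambda_i)|$, where the zero eigenvalue drops out and the $\lambda_i^2$ cancels the $x_i^2$ in $|g'(x_i)|=2x_i^2|\prod_{j\neq i}(x_i^2-x_j^2)|$; dividing through by $b_1^2$ and using $b_1=b_{n-1}$ then yields~(\ref{key lemma2:eq}) directly.

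The paper takes a cleaner route: it observes that the null eigenvector satisfies $(\v_0)_2=0$, so one can renormalize the nonzero eigenvectors to have second coordinate~$1$ and expand $\e_2$ in them with \emph{no} zero-eigenvalue contribution. Computing $\bra{\e_{n-1}}(\lambda I-A)^{-1}\ket{\e_2}=b_2\cdots b_{n-2}\,\lambda^2/g(\lambda)$ via~(\ref{inverse of tridiagonal matrix}) and matching residues gives $\|\v_i\|^{-2}=b_2\cdots b_{n-2}/\bigl(2|\prod_{j\neq i}(x_i^2-x_j^2)|\bigr)$, after which the $k=0$ term of $\bra{\e_2}(\lambda I-A)^{-1}\ket{\e_2}$ is exactly~(\ref{key lemma2:eq}). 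This avoids both the zero term and any appeal to the centrosymmetric factorization.
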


\begin{proof}
We still use the notation defined in the proof of Lemma \ref{key lemma}. We now have $n=2m+1$. By Lemma \ref{key lemma}, we know the existence of such $A$. Next, we prove (\ref{key lemma2:eq}). In this case, $A$ has an eigenvalue 0. We denote it as $\lambda_n$.
All other eigenvalues are denoted as $\{\lambda_1,\ldots,\lambda_{n-1}\}:=\{\pm x_1, \ldots,\pm x_{m}\}$.
For the eigenvalue 0, from the structure of tridiagonal matrices, we know that the corresponding eigenvector $\v_n:=(u_1,u_2,\ldots,u_n)$ has the property that $u_2=u_4=\cdots=u_{2m}=0$. So $\langle \e_2|\v_n\rangle =0$. As a result, we can re-normalise $\v_i$ so that $\e_2=\sum_{i=1}^{n-1} \v_i/\|\v_i\|^2$. By (\ref{inverse of tridiagonal matrix}), we have that $\bra{\e_{n-1}} (\lambda I - A)^{-1} \ket{\e_2} = b_2\cdots b_{n-2}\lambda^2 /g(\lambda)$. On the other hand,
\[
\bra{\e_{n-1}} (\lambda I - A)^{-1} \ket{\e_2} = \sum_{i=1}^{n-1} \frac{ \bra{\e_{n-1} } \v_i\rangle }{ \|\v_i\|^2(\lambda-\lambda_i)} ,
\]
where $\bra{\e_{n-1} } \v_i\rangle=\pm 1$ by the symmetry property. So $\|\v_i\|^{-2}=b_2\cdots b_{n-2} \lambda_i^2/|g'(\lambda_i)|$.
Note that $g(\lambda) = \lambda \prod_{i=1}^m (\lambda-x_i) \prod_{i=1}^m(\lambda+x_i)$, it follows that $|g'(x_i)| = 2x_i^2 |\prod_{j\neq i} (x_i^2-x_j^2)|$, so 
$\|\v_i\|^{-2}=b_1b_2\cdots b_{n-2}/2| \prod_{j\neq i} (x_i^2-x_j^2)|$. As a result,
\beas
\bra{\e_2} (\lambda I- A)^{-1} \ket{\e_2} &=& 
\sum_{i=1}^{n-1} \frac{1}{\|\v_i\|^2(\lambda-\lambda_i)} \\
&=&
\sum_{i=1}^{n-1}\frac{b_2\cdots b_{n-2}}{2|\prod_{j\neq i} (x_i^2-x_j^2)|(\lambda-\lambda_i)} \\
&=& \sum_{k=0}^\infty \frac{1}{\lambda^{k+1}} \sum_{i=1}^{n-1} \frac{b_2\cdots b_{n-2}}{2|\prod_{j\neq i} (x_i^2-x_j^2)|} \lambda_i^k.
\eeas
We also have
\[
\bra{\e_2} (\lambda I- A)^{-1} \ket{\e_2} = \sum_{k=0}^\infty \frac{\langle \e_2|A^k|\e_2\rangle}{\lambda^{k+1}}.
\]
Comparing both right-hand sides at $k=0$ leads to (\ref{key lemma2:eq}).
\end{proof}

In the above, when computing $f(A)$, we used the Cauchy integral formula, which assumes that $f$ is analytic. As a corollary of Lemma \ref{key lemma}, we show that (\ref{fA1-n}), (\ref{fA1-n-1}) hold for any function $f$.

\begin{prop}
Let $A$ be a symmetric tridiagonal matrix satisfying that the diagonals are zero and the subdiagonals have the symmetric property that $b_i=b_{n-i}$,
then Equations (\ref{fA1-n}) and (\ref{fA1-n-1}) hold for any $f$.
\end{prop}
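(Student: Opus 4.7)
The plan is to reduce the general-$f$ statement to the polynomial case, which is already covered by the Cauchy integral derivation recorded just above in equations (\ref{entry of 1-n}) and (\ref{fA1-n-1}). The first step is to observe that both sides of each identity are linear functionals of the vector $(f(\lambda_1),\ldots,f(\lambda_n))\in\mathbb{C}^n$. For the left-hand side, this follows directly from Definition \ref{defn:Functions of matrices}, which gives $\bra{\e_a}f(A)\ket{\e_b}=\sum_{k=1}^n f(\lambda_k)\,\langle\e_a|\v_k\rangle\langle\v_k|\e_b\rangle/\|\v_k\|^2$ in any eigenvector basis; in particular $f(A)$ depends on $f$ only through its values at the eigenvalues. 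For the right-hand sides of (\ref{fA1-n}) and (\ref{fA1-n-1}), linearity in the $f(\lambda_i)$ is manifest.

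Next I would invoke the fact that, because all subdiagonal entries $b_i$ are nonzero, the symmetric tridiagonal matrix $A$ has $n$ distinct eigenvalues (the standard one-dimensionality of eigenspaces used implicitly in the proof of Lemma \ref{key lemma}). Consequently the evaluation map $p\mapsto(p(\lambda_1),\ldots,p(\lambda_n))$, restricted to polynomials of degree at most $n-1$, is a linear isomorphism onto $\mathbb{C}^n$ by Lagrange interpolation (equivalently, invertibility of the Vandermonde matrix on the distinct nodes $\lambda_i$). Thus every prescribed tuple of values at the eigenvalues is realised by some polynomial of degree at most $n-1$.

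Finally, the Cauchy integral formula in (\ref{entry of 1-n}) and its analogue for the $(2,n-1)$ entry hold whenever $f$ is analytic on a contour enclosing the spectrum of $A$. Polynomials are entire, so (\ref{fA1-n}) and (\ref{fA1-n-1}) hold for every polynomial, and in particular for every polynomial of degree at most $n-1$. Combining this with the preceding surjectivity, the two linear functionals on $\mathbb{C}^n$ under comparison agree on a spanning subset, hence they coincide. Therefore (\ref{fA1-n}) and (\ref{fA1-n-1}) are valid for an arbitrary function $f$, since both sides depend on $f$ only through the $n$-tuple $(f(\lambda_1),\ldots,f(\lambda_n))$.

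The argument is essentially soft, and the only point one has to be careful about is the eigenvalue-distinctness step; this is guaranteed by $b_i>0$, which is the hypothesis under which the symmetric tridiagonal matrices arising in Lemmas \ref{key lemma} and \ref{key lemma2} are constructed, so no real obstacle remains.
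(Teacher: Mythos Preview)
Your argument is correct. Both sides of each identity are linear in the tuple $(f(\lambda_1),\ldots,f(\lambda_n))$, the eigenvalues are distinct because the subdiagonal entries are nonzero, and hence agreement on polynomials of degree at most $n-1$ (for which the analytic derivation of (\ref{fA1-n}) and (\ref{fA1-n-1}) already applies) forces agreement for all $f$. There is no gap.

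The paper takes a different, more explicit route: it reuses the eigenvector computations from the proof of Lemma~\ref{key lemma}, namely $\langle\e_n|\v_i\rangle/\|\v_i\|^2=b_1\cdots b_{n-1}/g'(\lambda_i)$, and for the $(2,n-1)$ entry it exploits the palindromic symmetry $SAS=A$ (hence $S\v_i=k_i\v_i$ with $k_i=\pm1$) together with $\e_2=A\e_1/b_1$ to obtain $\langle\e_{n-1}|\v_i\rangle=k_i\lambda_i/b_1$; then it simply plugs these into the spectral expansion from Definition~\ref{defn:Functions of matrices}. So the paper's proof is a direct eigenvector calculation that genuinely uses the hypothesis $b_i=b_{n-i}$, whereas your interpolation argument only needs $b_i\neq0$ (for eigenvalue simplicity and for the right-hand sides to be defined), and the symmetric-subdiagonal hypothesis is incidental to your method. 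Your approach is softer and slightly more general; the paper's approach yields, as a by-product, the exact form of the relevant eigenvector overlaps, which is consonant with the constructive spirit of the surrounding lemmas.
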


\begin{proof}
In the proof of Lemma \ref{key lemma}, we computed that
\[
\sum_{i=1}^n \frac{\bra{\e_n} \v_i\rangle}{\|\v_i\|^2(\lambda-\lambda_i)} = \sum_{i=1}^n \frac{b_1\cdots b_{n-1}}{g'(\lambda_i) (\lambda-\lambda_i)}.
\]
As a result, $\bra{\e_n} \v_i\rangle/\|\v_i\|^2 = b_1\cdots b_{n-1}/g'(\lambda_i)$. By Definition \ref{defn:Functions of matrices},
(\ref{fA1-n}) follows from the following equality
\[
f(A)_{1,n} = \sum_{i=1}^n f(\lambda_i) \frac{\langle \e_n|\v_i \rangle}{\|\v_i^2\|}
=\sum_{i=1}^n f(\lambda_i) \frac{b_1\cdots b_{n-1}}{g'(\lambda_i)}.
\]
Using the notation in the proof of Lemma \ref{key lemma}, we have that $\bra{\e_n} \v_i\rangle=k_i$.
Since $\e_2 = A \e_1/b_1$, it follows that
$\langle \e_{n-1}|\v_i\rangle = \langle \e_{2}S^T|\v_i\rangle = \langle \e_{2}|S\v_i\rangle = k_i \lambda_i/b_1$. So
\[
f(A)_{2,n-1} = \sum_{i=1}^n f(\lambda_i) \frac{\langle \e_{n-1}|\v_i \rangle\langle \e_{2}|\v_i \rangle}{\|\v_i^2\|}
=\sum_{i=1}^n f(\lambda_i) \frac{k_i \lambda_i^2}{b_1^2\|\v_i^2\|}.
\]
Substituting the value $k_i/\|\v_i\|^2$ into the above result leads to (\ref{fA1-n-1}). Here we used the fact that $b_1=b_{n-1}$.
\end{proof}

\section{Dual polynomial method for continuous functions}

\label{section:Dual polynomial method}

When studying the query complexity of computing Boolean functions, the polynomial method provides a powerful approach for estimating lower bounds \cite{beals2001quantum}. It says that the quantum query complexity of computing a Boolean function $f$ is lower bounded by half of the approximate degree of $f$. Here the approximate degree is defined as the minimal degree of the multivariate polynomial $g$ satisfying $|f(x)-g(x)|\leq 1/3$ for all $x$ in the domain of $f$. We below consider the case for continuous functions.

\begin{definition*}[Restatement of Definition \ref{defn:min deg}]
Let $f(x): [-1,1] \rightarrow [-1,1]$ be a function and $\varepsilon\in(0,1]$, we define the $\varepsilon$-approximate degree of $f$ as
\[
\widetilde{\deg}_{\varepsilon}(f) = \min\{d \in \mathbb{N}: |f(x) -g(x)| \leq \varepsilon \text{ for all } x\in [-1,1], g(x) \text{ is a polynomial of degree } d \}.
\]
\end{definition*}

The following result indicates the existence of $g(x)$.

\begin{prop}[Theorem 1 in Chapter 5, Section 4 \cite{isaacson2012analysis}]
Let $f(x)$ be a continuous function on $[a, b]$. Then for every $n$, there is a polynomial $\hat{p}_n(x)$ of degree at most $n$ that minimizes $\max_{x\in [a,b]}|f(x)-p_n(x)|$.
\end{prop}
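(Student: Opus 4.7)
This is the classical existence result for best uniform polynomial approximation, and the natural route is a compactness argument in the finite-dimensional space $\mathcal{P}_n$ of polynomials of degree at most $n$. Let $E_n := \inf_{p \in \mathcal{P}_n} \max_{x \in [a,b]} |f(x) - p(x)|$. This infimum is finite (bounded above by $\max_{x\in[a,b]}|f(x)|$, taking $p \equiv 0$) and nonnegative, so $E_n \in [0, \infty)$.

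The plan is to take a minimizing sequence $\{p_k\} \subseteq \mathcal{P}_n$ with $\max_{x\in[a,b]}|f(x) - p_k(x)| \to E_n$ and extract a uniformly convergent subsequence. First I would bound the sequence: by the triangle inequality, $\max_{x\in[a,b]} |p_k(x)| \leq \max_{x\in[a,b]}|f(x)| + \max_{x\in[a,b]}|f(x)-p_k(x)|$, which is uniformly bounded in $k$ for large $k$. Next, since $\mathcal{P}_n$ is finite-dimensional (of dimension $n+1$), all norms on it are equivalent. In particular, the sup-norm on $[a,b]$ restricted to $\mathcal{P}_n$ is equivalent to the coefficient norm $\|\sum_j c_j x^j\| := \max_j |c_j|$. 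Hence the coefficient vectors $(c_0^{(k)}, c_1^{(k)}, \ldots, c_n^{(k)})$ of the $p_k$ form a bounded sequence in $\mathbb{R}^{n+1}$.

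By the Bolzano--Weierstrass theorem applied to this bounded sequence in $\mathbb{R}^{n+1}$, I would pass to a subsequence (still denoted $p_k$) whose coefficient vectors converge to some $(\hat{c}_0, \ldots, \hat{c}_n)$, defining $\hat{p}_n(x) := \sum_{j=0}^n \hat{c}_j x^j \in \mathcal{P}_n$. Using norm equivalence again, $p_k \to \hat{p}_n$ uniformly on $[a,b]$. Finally, the functional $p \mapsto \max_{x\in[a,b]}|f(x)-p(x)|$ is continuous in the sup-norm (reverse triangle inequality), so $\max_{x\in[a,b]}|f(x) - \hat{p}_n(x)| = \lim_k \max_{x\in[a,b]}|f(x)-p_k(x)| = E_n$, which shows $\hat{p}_n$ attains the minimum.

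The only subtle point is the norm-equivalence step: on an infinite-dimensional function space, uniform boundedness of polynomials does not imply boundedness of coefficients, so it is essential to invoke finite-dimensionality of $\mathcal{P}_n$. This is the crux of the argument; everything else is routine. A more abstract alternative would be to cite the general fact that every finite-dimensional subspace of a normed linear space is proximinal (i.e., admits a best approximation to each point of the ambient space), whose proof is essentially the same compactness argument just outlined.
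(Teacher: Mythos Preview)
Your proof is correct: the compactness argument via finite-dimensionality of $\mathcal{P}_n$, norm equivalence, Bolzano--Weierstrass on coefficient vectors, and continuity of the error functional is the standard and complete route to this classical existence result.

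The paper, however, does not prove this proposition at all. It is stated with a citation to \cite{isaacson2012analysis} (Theorem~1 in Chapter~5, Section~4) and invoked purely as a known fact to guarantee that the approximate degree $\widetilde{\deg}_\varepsilon(f)$ is well-defined. So there is no ``paper's own proof'' to compare against; you have supplied a full argument where the authors simply appeal to the literature.
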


It is obvious that $\widetilde{\deg}_{\varepsilon}(f)$ is decreasing with respect to $\varepsilon$.
Usually, it is not easy to compute or estimate the approximate degree even for Boolean functions. One successful method is called the dual polynomial method \cite{bun2022approximate}. It connects the approximate degree of $f$ to a linear program problem. For a fixed integer $d$, the dual polynomial method studies the smallest error to which any polynomial of degree less than $d$ can approximate $f$. For real-valued functions, this can be formulated as the following linear semi-infinite program:
\bea
\label{original LP}
\min_{g,\delta} && \delta  \\
\text{s.t.} && |f(x)-g(x)| \leq \delta, \quad \forall x\in[-1,1], \nonumber \\
&& \deg(g) \leq d. \nonumber
\eea

We denote by ${\rm Val}(f(x),d)$ the optimal value of the above linear semi-infinite program. From the above construction, it is easy to see that
\be
\label{dual of original LP}
\widetilde{\deg}_{\varepsilon}(f)
=\min\{d: {\rm Val}(f(x),d) = \varepsilon\}.
\ee
The dual form of (\ref{original LP}) is 
\bea
\max_{h} && \int_{-1}^1f(x) h(x) dx \label{eq1} \\
\text{s.t.} && \int_{-1}^1 h(x)x^k dx =0, \quad k\in\{0,1,\ldots,d\}, \label{eq2}  \\
&& \int_{-1}^1 |h(x)| dx = 1. \label{eq3} 
\eea
A proof of this is given in Appendix \ref{appendix:Dual linear programming}. 
The above dual form is very similar to that of the case of Boolean functions discussed in \cite[Equations (15)-(17)]{bun2022approximate}. Condition (\ref{eq2}) says that $h(x)$ is orthogonal to low-degree polynomials so that it has a pure high degree. Condition (\ref{eq3}) means that $f(x)$ has $L_1$ norm 1. By the stronger duality theorem, there is a $h(x)$ such that $\int_{-1}^1f(x) h(x) dx = \varepsilon$ when $d=\widetilde{\deg}_\varepsilon(f)$ in (\ref{original LP}). This means that $h(x)$ has a correlation $\varepsilon$ with $f(x)$. Following the notion of \cite{bun2022approximate}, we also call $h(x)$ a witness or a dual polynomial of $f(x)$.

We below check the conditions of Lemmas \ref{lem:LSIP1}, \ref{lem:LSIP2}, and \ref{lem:LSIP3} so that the above LSIP can be solved by the discretisation method.
Let $g$ be the polynomial that approximates $f$ with degree $d=\widetilde{\deg}_\varepsilon(f)$, then $|f(x)-g(x)|\leq \varepsilon$ for all $x\in[-1,1]$. Thus if we choose $\delta = 2 \varepsilon$ in (\ref{original LP}), then the condition in Lemma \ref{lem:LSIP3} is satisfied using this $(g,\delta)$. As a result, there is no duality gap for the linear semi-infinite programming (\ref{original LP}). Together with Lemmas \ref{lem:LSIP1} and \ref{lem:LSIP2}, the problem (\ref{original LP}) is reducible and can be solved by the discretisation approach.
Moreover, by Lemma \ref{lem:LSIP2}, there exist $\{x_1,\ldots,x_n\}$ with $n\leq d+2$ such that (\ref{original LP}) is equivalent to the following linear program 
\bea
\label{original LP:discretisation}
\min_{g,\delta} && \delta  \\
\text{s.t.} && |f(x_i)-g(x_i)| \leq \delta, \quad \forall i=1,\ldots,n, \nonumber \\
&& \deg(g) \leq d. \nonumber
\eea
Here by equivalent, we mean they have the same optimal solution and optimal value.
The dual form of (\ref{original LP:discretisation}) is 
\bea
\label{original LP:discretisation-dual}
\max_{h_i} && \sum_{i\in[n]} f(x_i) h_i \\
\text{s.t.} 
&& \sum_{i\in[n]} h_i x_i^k = 0, \quad \forall k\in\{0,1,\ldots,d\}, \label{dual-eq1} \\
&& \sum_{i\in[n]} |h_i| = 1. \label{dual-eq2}
\eea

The constraints in (\ref{dual-eq1}) define a Vandermonde linear system with rank $\min(d+1,n)$. So to ensure the existence of a nontrivial solution, the only choice is $n=d+2$. It is easy to check that the solutions of this Vandermonde linear system satisfy
\be
\label{optimal solution}
h_i = \frac{\alpha}{\prod_{j\neq i} (x_i-x_j)}, \quad i\in[n],
\ee
where $\alpha \in \mathbb{R}$ is a free parameter, which is determined by the constraint (\ref{dual-eq2}). From Equation (\ref{fA1-n}) and Lemma \ref{key lemma}, we can see that there is a close connection between the dual polynomial method and functions of symmetric tridiagonal matrices.
One problem here is that Lemma \ref{lem:LSIP3} is an existence theorem, it is possible that $x_1,\ldots,x_n$ do not have the symmetric property of eigenvalues of symmetric tridiagonal matrices, i.e., both $\lambda_i$ and $-\lambda_i$ are eigenvalues. However, this can be resolved by focusing on the even and odd parts of $f$. 

The following lemma is easy to prove.

\begin{lem} 
\label{lem: some facts about min deg}
Let $f(x): [-1,1]\rightarrow [-1,1]$ be a function and let
\[
f_{\rm even}(x) = \frac{f(x)+f(-x)}{2}, \quad
f_{\rm odd}(x) = \frac{f(x)-f(-x)}{2}
\]
be the even and odd parts of $f(x)$ respectively. Then

(i). $\max(\widetilde{\deg}_{\varepsilon}(f_{\rm even}), \widetilde{\deg}_{\varepsilon}(f_{\rm odd})) \leq \widetilde{\deg}_{\varepsilon}(f) 
\leq \max(\widetilde{\deg}_{\varepsilon/2}(f_{\rm even}), \widetilde{\deg}_{\varepsilon/2}(f_{\rm odd})) $.

(ii). If $f(x)$ is even/odd, then the polynomial that approximates $f(x)$ with approximate degree can be set as even/odd.
\end{lem}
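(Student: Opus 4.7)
The plan is to split the lemma into two parts and handle each by symmetrising polynomials under $x \mapsto -x$. The underlying observation I will use throughout is that if $g(x)$ is any polynomial of degree $d$, then both $g_{\rm even}(x) := (g(x)+g(-x))/2$ and $g_{\rm odd}(x) := (g(x)-g(-x))/2$ are polynomials of degree at most $d$, with $g_{\rm even}$ containing only even-degree monomials and $g_{\rm odd}$ only odd-degree monomials.

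For the lower bound in part (i), I would take an optimal approximator $g$ of degree $d = \widetilde{\deg}_{\varepsilon}(f)$ with $|f(x)-g(x)|\le \varepsilon$ on $[-1,1]$. Replacing $x$ by $-x$ gives $|f(-x)-g(-x)|\le \varepsilon$, and averaging the two inequalities yields $|f_{\rm even}(x)-g_{\rm even}(x)|\le \varepsilon$; subtracting them and dividing by $2$ yields $|f_{\rm odd}(x)-g_{\rm odd}(x)|\le \varepsilon$. Since $\deg(g_{\rm even}),\deg(g_{\rm odd})\le d$, both $\widetilde{\deg}_{\varepsilon}(f_{\rm even})$ and $\widetilde{\deg}_{\varepsilon}(f_{\rm odd})$ are at most $d$. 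For the upper bound, I would take optimal approximators $g_e$ and $g_o$ of $f_{\rm even}$ and $f_{\rm odd}$ at accuracy $\varepsilon/2$, of degrees $d_e$ and $d_o$ respectively, and set $g=g_e+g_o$. The triangle inequality then gives $|f(x)-g(x)|\le \varepsilon$ and $\deg(g)\le \max(d_e,d_o)$, proving the claimed upper bound.

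Part (ii) is essentially a byproduct of the first computation: if $f$ is even, then $f_{\rm odd}=0$ and $f=f_{\rm even}$, so the averaging argument above shows that the symmetrisation $g_{\rm even}$ of any $\varepsilon$-approximator $g$ is itself an $\varepsilon$-approximator, and it is automatically an even polynomial of degree at most $\deg(g)$. The odd case is identical, using the odd part of $g$ instead.

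There is no serious obstacle here; everything reduces to the triangle inequality and the linearity of the even/odd decomposition, which both commute with taking polynomial approximations. The only mild subtlety worth flagging is that the upper bound in (i) loses a factor of $2$ in the error (forcing $\varepsilon/2$ on the right), which is necessary because the triangle inequality used when summing $g_e$ and $g_o$ is tight in general; this small loss is harmless for our later applications, where $\varepsilon$ is a constant.
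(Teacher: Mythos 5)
Your proof is correct. The paper itself omits the argument, stating only that the lemma ``is easy to prove,'' and your symmetrisation argument is precisely the standard one-liner one would write: the lower bound in (i) comes from taking an optimal approximator $g$ of $f$ and observing that its even/odd parts approximate $f_{\rm even}$/$f_{\rm odd}$ to the same error with no degree increase, the upper bound comes from adding $\varepsilon/2$-approximators of the two parts, and (ii) is the same symmetrisation specialised to an even or odd $f$. One tiny remark on your closing comment: the factor-of-$2$ loss is indeed unavoidable for a proof that only invokes the triangle inequality, but it is not intrinsic to the statement (for instance if $f$ is already even then $f_{\rm odd}=0$ and one trivially has $\widetilde{\deg}_{\varepsilon}(f)=\widetilde{\deg}_{\varepsilon}(f_{\rm even})$ with no loss), so it would be more accurate to say the loss is an artefact of this particular proof rather than ``necessary'' in general; as you note, it is harmless for the paper's applications.
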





We are now ready to prove Theorem \ref{thm:tridiagonal matrix and approximate degree}.

\begin{proof}[Proof of Theorem \ref{thm:tridiagonal matrix and approximate degree}]
We first consider $f_{\rm odd}(x)$. By Lemma \ref{lem: some facts about min deg} (ii), let $g_{\rm odd}$ be an odd polynomial that $\varepsilon$-approximates $f$ with approximate degree $d$. Then similar to (\ref{original LP}), we can formulate $g_{\rm odd}$ in terms of an LSIP:
\bea
\label{original LP:odd}
\min_{g_{\rm odd},\delta} && \delta  \\
\text{s.t.} && |f_{\rm odd}(x)-g_{\rm odd}(x)| \leq \delta, \quad \forall x\in[0,1], \nonumber \\
&& \deg(g_{\rm odd}) \leq d, \nonumber \\
&& g_{\rm odd} \text{ is an odd polynomial}. \nonumber
\eea

Since $g_{\rm odd}$ is odd, there are 
$l+1$ terms, where $l=\lfloor (d-1)/2 \rfloor$. This means that there are $l+2$ unknown variables for the above LSIP.
Using the optimal polynomial $g_{\rm odd}$ that approximates $f_{\rm odd}$, it is easy to check that the conditions stated in Lemmas \ref{lem:LSIP1}, \ref{lem:LSIP2} and \ref{lem:LSIP3} are satisfied (also see the analysis above (\ref{original LP:discretisation})).
So by the discretisation method (i.e., Lemma \ref{lem:LSIP2}), there exist $\{x_1,\ldots,x_{l+2}\}\subseteq [0,1]$ such that (\ref{original LP:odd}) is equivalent to
\bea
\label{original LP:odd-discretisation-dual}
\max_{h_i} && \sum_{i\in[l+2]} f_{\rm odd}(x_i) h_i \\
\text{s.t.} 
&& \sum_{i\in[l+2]} h_i x_i^{2k+1} = 0, \quad \forall k\in\{0,1,\ldots,l\}, \label{odd-eq1}  \\
&& \sum_{i\in[l+2]} |h_i| = 1. \label{odd-eq2} 
\eea
By viewing $x_ih_i$ as new variables, (\ref{odd-eq1}) defines a Vandermonde linear system whose solutions satisfy
\[
h_i = \frac{\alpha}{x_i \prod_{j\neq i} (x_i^2-x_j^2)},
\]
where $\alpha \in \mathbb{R}$ is the free parameter determined by (\ref{odd-eq2}). We now have
\[
\sum_{i\in[l+2]} f_{\rm odd}(x_i) h_i 
= \alpha \sum_{i\in[l+2]} \frac{f(x_i)-f(-x_i)}{2x_i \prod_{j\neq i} (x_i^2-x_j^2)}.
\]

We view $\{\pm x_1,\ldots,\pm x_{l+2}\}$ as eigenvalues of a symmetric tridiagonal matrix, then by Lemma \ref{key lemma}, there is a symmetric tridiagonal matrix $A$ of dimension $2l+4$ with sub-diagonal entries $b_1,b_2,\ldots,b_{2l+3}$, such that $\alpha = b_1b_2 \cdots b_{2l+3}$ and (\ref{odd-eq2}) is satisfied. Moreover,
\[
\sum_{i\in[l+2]} f_{\rm odd}(x_i) h_i 
= \alpha \sum_{i\in[l+2]} \frac{f(x_i)-f(-x_i)}{2x_i \prod_{j\neq i} (x_i^2-x_j^2)} = f(A)_{1,2l+4}.
\]
The above second equality follows from (\ref{fA1-n}).
If we choose $d=\widetilde{\deg}_\varepsilon(f_{\rm odd})$, we know that the optimal value of (\ref{original LP:odd-discretisation-dual}) equals $\varepsilon$. Thus $f(A)_{1,2l+4} = \varepsilon$.

Next, we consider $f_{\rm even}$. Let $g_{\rm even}$ be an even polynomial that $\varepsilon$ approximates $f$ with approximate degree $d = \widetilde{\deg}_\varepsilon(f_{\rm even})$. Similar to the above argument, we can formulate $g_{\rm even}$ in terms of an LSIP:
\bea
\label{original LP:even}
\min_{g_{\rm even},\delta} && \delta  \\
\text{s.t.} && |f_{\rm even}(x)-g_{\rm even}(x)| \leq \delta, \quad \forall x\in[0,1], \nonumber \\
&& \deg(g_{\rm even}) \leq d, \nonumber \\
&& g_{\rm even} \text{ is an even polynomial}. \nonumber
\eea
Let $l=\lfloor d/2 \rfloor$.
By the discretisation method, there exist $\{x_1,\ldots,x_{l+2}\}\subseteq [0,1]$ such that (\ref{original LP:even}) is equivalent to
\bea
\label{original LP:even-discretisation-dual}
\max_{h_i} && \sum_{i\in[l+2]} f_{\rm even}(x_i) h_i \\
\text{s.t.} 
&& \sum_{i\in[l+2]} h_i x_i^{2k} = 0, \quad \forall k\in\{0,1,\ldots,l\}, \label{even-eq1}  \\
&& \sum_{i\in[l+2]} |h_i| = 1. \label{even-eq2} 
\eea
The solution of the Vandermonde linear system (\ref{even-eq1}) satisfies
\[
h_i = \frac{\alpha}{\prod_{j\neq i} (x_i^2-x_j^2)},
\]
where $\alpha \in \mathbb{R}$ is the free parameter determined by (\ref{even-eq2}). We now have
\[
\sum_{i\in[l+2]} f_{\rm even}(x_i) h_i 
= \alpha \sum_{i\in[l+2]} \frac{f(x_i)+f(-x_i)}{2\prod_{j\neq i} (x_i^2-x_j^2)}.
\]
We view $\{0,\pm x_1,\ldots,\pm x_{l+2}\}$ as eigenvalues of a symmetric tridiagonal matrix, then by Lemma \ref{key lemma2}, there is a symmetric tridiagonal matrix $A$ of dimension $2l+5$ with sub-diagonal entries $b_1,\ldots,b_{2l+4}$, such that $\alpha = b_2 \cdots b_{2l+3}$ and (\ref{even-eq2}) is satisfied. Moreover
\[
\varepsilon = \sum_{i\in[l+2]} f_{\rm even}(x_i) h_i 
= \alpha \sum_{i\in[l+2]} \frac{f(x_i)+f(-x_i)}{2\prod_{j\neq i} (x_i^2-x_j^2)} = f(A)_{2,2l+4}.
\]
The last equality comes from (\ref{fA1-n-1}).

The claim that $\widetilde{\deg}_{\varepsilon}(f_{\rm odd})$ or $\widetilde{\deg}_{\varepsilon}(f_{\rm even})$ is $n-c$ follows directly from the above argument. 
\end{proof}

As a direct result, we obtain Corollary \ref{intro:min degree lower bounds}. The proof is as follows.


\begin{proof}[Proof of Corollary \ref{intro:min degree lower bounds}]
We only prove it for $f_{\rm odd}$. The proof for $f_{\rm even}$ is similar. Let $d=n-c$, where $c$ is the constant defined in Theorem \ref{thm:tridiagonal matrix and approximate degree}. The existence of $A$ implies that the optimal value of the LP (\ref{original LP:odd-discretisation-dual}) and so the optimal value of the LSIP (\ref{original LP:odd}) is strictly larger than $\varepsilon$.
Note that $\widetilde{\deg}_\varepsilon(f_{\rm odd})$ decreases when $\varepsilon$ increases. This means that if $\widetilde{\deg}_\varepsilon(f_{\rm odd}) \leq d$, then the optimal value of the LSIP (\ref{original LP:odd}) is at most $\varepsilon$. This is a contradiction. So we have  $\widetilde{\deg}_\varepsilon(f_{\rm odd}) \geq d$.
\end{proof}

As another application, we now can prove Theorem \ref{intro:key theorem}.


\begin{proof}[Proof of Theorem \ref{intro:key theorem}]
We shall use the same trick as used in the proof of the `no fast-forwarding theorem' of Hamiltonian simulation \cite{berry2007efficient}. Let $(x_1,\ldots,x_n)\in\{0,1\}^n$ be a bit string. It is known that computing the parity $x_1\oplus \cdots \oplus x_n$ requires $\Omega(n)$ quantum queries \cite{beals2001quantum,farhi1998limit}. We below aim to construct a sparse Hermitian matrix such that computing an entry of $f(A)$ allows us to solve the parity problem.

Let $G$ be a weighted graph defined on vertices $(i,t)$ for $i\in\{0,1,\ldots,n\}, t\in\{0,1\}$. For any $t\in\{0,1\}$, we add an edge between $(i-1,t)$ and $(i,t\oplus x_i)$. The weights will be determined later. It is easy to check that $G$ is the disjoint union of the following  two paths:
\beas
&& (0,0) - (1,x_0)-(2,x_0\oplus x_1) - \cdots - (n,x_0\oplus \cdots \oplus x_{n-1}) ,\\
&& (0,1)-(1,1\oplus x_0)-(2,1\oplus x_0\oplus x_1) - \cdots - (n,1\oplus x_0\oplus \cdots \oplus x_{n-1}) .
\eeas
Let $A$ be the adjacency matrix of $G$, which is essentially a block matrix that consists of two tridiagonal matrices on the diagonal. So there are two possibilities:

\begin{itemize}
    \item {\bf Case 1:} if $x_0\oplus \cdots \oplus x_{n-1}=0$, then for any function $f$ we have $\langle 0,0|f(A)|n,1\rangle=0$.
    \item {\bf Case 2:} if $x_0\oplus \cdots \oplus x_{n-1}=1$, then for any $f$, we hope that there is a choice of the weights such that $\langle 0,0|f(A)|n,1\rangle=\delta$ for some $\delta>0$.
\end{itemize}

As a result, if we can compute $\langle 0,0|f(A)|n,1\rangle \pm \delta/2$, we then can distinguish the above two cases, and solve the parity problem. 

We assume that $f$ is not even, otherwise, we will modify $G$ by introducing an extra node $(-1,t)$ connected to $(1,t)$ and an extra node $(n+1,t)$ connected to $(n,t)$, and consider $f_{\rm even}$. This will not influence the following analysis. So we can further assume that $\widetilde{\deg}_{\varepsilon}(f_{\rm odd}) \geq \widetilde{\deg}_{\varepsilon}(f_{\rm even})$.
By Theorem \ref{thm:tridiagonal matrix and approximate degree}, there is a tridiagonal matrix $A'$ of the form (\ref{thm:tridiagonal matrix}) such that $\langle 1|f(A')|n\rangle = \varepsilon$, where $n=\widetilde{\deg}_{\varepsilon}(f_{\rm odd})+O(1)$. Based on $A'$,
we introduce a weight $b_i$ between $(i-1,t)$ and $(i,t\oplus x_i)$ for any $t$. 
Then it is easy to check that $\langle 0,0|f(A)|n,1\rangle=\langle 1|f(A')|n\rangle = \varepsilon$ in case 2. Thus, if there is a quantum algorithm that can compute $\langle 0,0|f(A)|n,1\rangle \pm \varepsilon/2$, then there is a quantum algorithm that can solve the parity problem. 
By Lemma \ref{lem: some facts about min deg} (i), $\widetilde{\deg}_\varepsilon(f_{\rm odd}) \geq \widetilde{\deg}_{2\varepsilon}(f)$. Putting it all together, we obtain the claimed result.
\end{proof}

One interesting function is $1/x$, which is related to the matrix inversion problem. Theorem \ref{intro:key theorem} cannot be used for this function directly because $1/x$ is not well-defined in the interval $[-1,1]$. However,
it was shown in \cite[Corollary 69]{gilyen2018quantum} that in the domain $[-1,1]\backslash[-\delta,\delta]$, for $f(x)=3/4x$, there is a polynomial $g(x)$ of degree $O(\delta^{-1} \log(1/\delta\varepsilon))$ that approximates $f(x)$ up to error $\varepsilon$. 
Applying Theorem \ref{intro:key theorem} to $g(x)$, we conclude that there exist $i,j$ such that computing $\langle i|3 A^{-1}/4|j\rangle \pm \varepsilon/4$ costs at least $\Omega(\widetilde{\deg}_\varepsilon(g))$ queries. 
Here $\delta$ is usually the minimal singular value of $A$ so that $1/\delta \approx \kappa$ because $\|A\|\leq 1$. 
Roughly, this means that computing $\langle i| A^{-1}|j\rangle \pm 4\varepsilon/3$ costs at least $\Omega(\widetilde{\deg}_\varepsilon(g))$ queries. If the degree of $g(x)$ is minimal, then we obtain a lower bound of $\Omega(\kappa\log(\kappa/\varepsilon)).$
However, we are not able to find a result on the approximate degree of $g(x)$.
We below show that using a similar argument to the proof of Theorem \ref{intro:key theorem}, we can prove a lower bound of $\Omega(\kappa)$ without computing the approximate degree of $g(x)$.

The following result is straightforward.

\begin{lem}
\label{lem:a simple fact}
Assume that $n=2m$ is even. Let
\bes
A = 
\begin{pmatrix}
0 & 1 &  &   &  \\
1 & 0 & 1 &   &  \\
 & \ddots & \ddots & \ddots &  \\
 & & 1 & 0 & 1 \\
 & & & 1 & 0 \\
\end{pmatrix}_{n\times n}.
\ees
Then the $(1,n)$-th entry of $A^{-1}$ is $(-1)^{m+1}$. The condition number of $A$ is $\Theta(n)$.
\end{lem}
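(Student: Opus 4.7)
The plan is to apply the explicit tridiagonal inverse formula (\ref{inverse of tridiagonal matrix}) already recorded in the preliminaries, and to use the classical spectrum of the tridiagonal matrix $A$ to pin down its condition number.

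For the $(1,n)$-entry: with $a=0$ and $b_i=1$, the recurrence $\theta_i = a\theta_{i-1} - b_{i-1}^2 \theta_{i-2}$ collapses to $\theta_i = -\theta_{i-2}$ with the initial conditions $\theta_0 = 1$ and $\theta_1 = 0$. A one-line induction then gives $\theta_{2k} = (-1)^k$ and $\theta_{2k+1} = 0$. Specializing the formula (\ref{inverse of tridiagonal matrix}) to $i = 1$, $j = n = 2m$ (and noting $\phi_{n+1} = 1$, $b_1 \cdots b_{n-1} = 1$) yields
\[
(A^{-1})_{1,n} \;=\; (-1)^{n-1} \frac{1}{\theta_n} \;=\; \frac{(-1)^{2m-1}}{(-1)^m} \;=\; (-1)^{m+1},
\]
which is the claimed value. (As a sanity check, the existence of the inverse is consistent with $\det A = \theta_n = (-1)^m \neq 0$, which also follows from the symmetry discussion after Lemma \ref{key lemma} since $n$ is even and all subdiagonal entries are nonzero.)

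For the condition number, I would use the well-known fact that the eigenvalues of $A$ (the adjacency matrix of the path graph $P_n$) are
\[
\lambda_k \;=\; 2\cos\!\left(\frac{k\pi}{n+1}\right), \qquad k = 1,\ldots,n.
\]
This can be proved in a line by verifying that the vectors with entries $v_k(\ell) = \sin(k\ell\pi/(n+1))$ satisfy $Av_k = \lambda_k v_k$ using the product-to-sum identity. The largest singular value is $|\lambda_1| = 2\cos(\pi/(n+1)) \to 2$, while the smallest singular value is attained at $k$ nearest to $(n+1)/2$; since $n = 2m$ is even, the closest values of $k$ are $m$ and $m+1$, giving
\[
\sigma_{\min}(A) \;=\; 2\cos\!\left(\frac{m\pi}{n+1}\right) \;=\; 2\sin\!\left(\frac{\pi}{2(n+1)}\right) \;=\; \Theta(1/n).
\]
Therefore $\kappa(A) = \|A\|/\sigma_{\min}(A) = \Theta(n)$.

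Neither part presents a real obstacle: both are essentially consequences of identities already in hand. The only place where care is needed is keeping track of signs in the parity calculation and correctly identifying which $k$ minimizes $|\cos(k\pi/(n+1))|$ when $n$ is even — here $(n+1)/2$ is not an integer, so one must pick the nearest integers $m$ and $m+1$ (which give the same value by symmetry) rather than mistakenly suggesting a zero eigenvalue.
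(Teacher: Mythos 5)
Your proof is correct. For the $(1,n)$-entry you take a genuinely different route: you invoke the explicit tridiagonal-inverse formula~(\ref{inverse of tridiagonal matrix}) already set up in the preliminaries and solve the trivial recursion $\theta_i = -\theta_{i-2}$, whereas the paper expands $(A^{-1})_{1,n}$ in the eigenbasis and evaluates the resulting trigonometric sum
\[
(A^{-1})_{1,n} = \frac{2}{n+1}\sum_{k=1}^{n}\lambda_k^{-1}\sin\frac{k\pi}{n+1}\sin\frac{nk\pi}{n+1},
\]
asserting without detail that it equals $(-1)^{m+1}$. Your derivation is cleaner and more self-contained here: the recursion for $\theta_i$ is immediate and avoids the (nontrivial) trigonometric simplification, and it also makes the invertibility of $A$ transparent via $\det A = \theta_n = (-1)^m$. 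For the condition-number half, the two arguments coincide: both use the known spectrum $\lambda_k = 2\cos(k\pi/(n+1))$ of the path graph and the observation that the smallest eigenvalue in magnitude is $2\sin(\pi/(2(n+1))) = \Theta(1/n)$ because $(n+1)/2$ is not an integer when $n$ is even. Your note that one must pick $k = m$ or $k = m+1$ (rather than accidentally letting $\lambda$ vanish) is exactly the care the paper also relies on in writing $|\lambda_1/\lambda_m| = \Theta(n)$.
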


\begin{proof}
It is known that the eigenvalues and the corresponding eigenvectors of $A$ are 
\[
\lambda_k = 2 \cos\frac{k\pi}{n+1}, \quad 
\ket{\v_k} = \sqrt{\frac{2}{n+1}} \, \sum_{j=1}^n \sin \frac{jk\pi}{n+1} \ket{j}, 
\quad k=1,\ldots,n.
\]
So the $(1,n)$-th entry of $A^{-1}$ is $\frac{2}{n+1} \sum_{k=1}^n \lambda_k^{-1} \sin \frac{k\pi}{n+1}\sin \frac{nk\pi}{n+1}=(-1)^{m+1}$. The condition number is $|\lambda_1/\lambda_m|=\Theta(n)$.
\end{proof}


\begin{proof}[Proof of Theorem \ref{intro: quantum query complexity of matrix inversion}]
We use a similar argument to the proof of Theorem \ref{intro:key theorem}. The construction of the graph does not change. The weights are set as $1/2$, so the operator norm of the adjacency matrix $A$ is less than 1.
In case 1, we have $\langle 0,0| A^{-1}|n,1\rangle=0$. In case 2, by Lemma \ref{lem:a simple fact}, $\langle 0,0| A^{-1}|n,1\rangle=\pm 1/2$.
So if we can approximate this quantity up to additive/relative error $\varepsilon<1/4$ we then can distinguish the above two cases, namely we can solve the parity problem. For this construction, the condition number is $\kappa=\Theta(n)$. So we obtain a lower bound of $\Omega(\kappa)$.
\end{proof}

\section{Quantum algorithms for matrix functions}
\label{section:A quantum algorithm for matrix function}

In this section, we prove Theorem \ref{intro:theorem upper bound}. Although this is a straightforward application of quantum singular value transformation, for the completeness of our research we state the algorithm here. We will also present two other quantum algorithms as a comparison. First, we recall some results.

\begin{defn}[Block-encoding]
Suppose that $A$ is a $p$-qubit operator, $\alpha, \varepsilon\in \mathbb{R}^+$ and $q\in \mathbb{N}$, then we say that the $(p+q)$-qubit unitary $U$ is an $(\alpha,q,\varepsilon)$-block-encoding of $A$, if
\[
\|A - \alpha (\bra{0}^p\otimes I ) U (\ket{0}^p\otimes I )\| \leq \varepsilon.
\]
\end{defn}

\begin{lem}[Lemma 48 of \cite{gilyen2018quantum}]
\label{lem: construct block-encoding}
Let $A\in \mathbb{C}^{n\times n}$ be an $s$-sparse matrix with access oracles (\ref{oracle1}), (\ref{oracle2}), then we can construct an $(s\|A\|_{\max}, 3+\log n, \varepsilon)$-block-encoding of $A$ with query complexity $O(1)$, where  $\|A\|_{\max}$ is the maximal entry of $A$ in absolute value.
\end{lem}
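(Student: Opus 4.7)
The plan is to build the block-encoding as a product of three elementary unitaries: a \emph{row diffusion} $V$ that prepares a uniform superposition over the positions of the nonzero entries of a given row, an \emph{entry loader} $W$ that writes $A_{ij}/\|A\|_{\max}$ into the amplitude of a single ancilla qubit, and a SWAP that pairs up the row and column registers correctly. The ancilla count $3+\log n$ then breaks down as one qubit for the entry amplitude, one for real/imaginary or sign bookkeeping, one as a flag for dummy indices, and $\log n$ (in fact $\lceil \log s\rceil$, which I will pad to $\log n$) to index the up-to-$s$ nonzero positions in each row.

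First, using one call to $\mathcal{O}_1$ preceded by a Hadamard transform on a $\lceil \log s\rceil$-qubit register, I would construct $V$ acting as
\[ V\,\ket{i}\ket{0} \;=\; \frac{1}{\sqrt{s}}\sum_{k=1}^{s}\ket{i}\ket{\nu_{ik}}, \]
i.e., a uniform superposition over the $s$ column indices of the nonzero entries in row $i$. Rows with fewer than $s$ nonzeros are handled by the oracle convention in (\ref{oracle1}) together with a flag bit that tracks whether a listed index is a genuine nonzero position. Because $A$ is Hermitian the analogous column-diffusion unitary is simply $V$ with row and column registers swapped, so no separate column oracle is needed.

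Second, using two calls to $\mathcal{O}_2$ (one to load $A_{ij}$, one to uncompute it), I would build a controlled rotation $W$ satisfying
\[ W\,\ket{i}\ket{j}\ket{0} \;=\; \ket{i}\ket{j}\Bigl(\tfrac{A_{ij}}{\|A\|_{\max}}\ket{0} + \sqrt{1 - |A_{ij}/\|A\|_{\max}|^2}\,\ket{1}\Bigr), \]
via the standard recipe: query $A_{ij}$ into an auxiliary register, run an $\arcsin$ arithmetic circuit, apply a controlled $Y$-rotation to the flag qubit, then uncompute $A_{ij}$. Finite-precision arithmetic can be made to contribute error at most $\varepsilon/(s\|A\|_{\max})$ with $O(\log(s\|A\|_{\max}/\varepsilon))$ bits of precision, which is absorbed into the ancilla count.

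Finally I would verify that the composite unitary $U := (V^\dagger \otimes I)\cdot \mathrm{SWAP}_{\text{row,col}}\cdot W \cdot (V\otimes I)$ is an $(s\|A\|_{\max}, 3+\log n, \varepsilon)$-block-encoding by projecting onto $\ket{0}$ in every ancilla register and reading off the matrix element: the two diffusion factors each contribute $1/\sqrt{s}$ and the entry loader contributes $A_{ij}/\|A\|_{\max}$, giving exactly $A_{ij}/(s\|A\|_{\max})$ in the top-left block, with total error at most $\varepsilon$ from the arithmetic step. The main subtlety, and the step I would write most carefully, is checking that this composition produces $A$ itself and not its transpose (or a row-stochastic variant of it): this is exactly where Hermiticity of $A$ meshes with the SWAP, and it is easy to drop a conjugate or swap a row/column index, so an explicit matrix-element computation of $\bra{0}\bra{i} U \ket{j}\ket{0}$ with careful tracking of the padding/flag qubit is warranted.
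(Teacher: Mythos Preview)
The paper does not give its own proof of this lemma: it is quoted verbatim as Lemma~48 of \cite{gilyen2018quantum} and used as a black box. Your sketch is precisely the standard construction from that reference (row diffusion via $\mathcal{O}_1$, amplitude loading via $\mathcal{O}_2$, and a SWAP to pair registers), so there is nothing to compare against.

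One small point: the lemma as stated in the paper does not assume $A$ is Hermitian, yet your construction relies on Hermiticity to reuse the same diffusion $V$ for both rows and columns. This is harmless in context because the paper's oracle model (Definition~\ref{defn:oracles}) is only specified for Hermitian $A$, but in the general setting of \cite{gilyen2018quantum} one needs separate row and column position oracles, so the claim ``no separate column oracle is needed'' should be flagged as using the ambient Hermiticity assumption rather than being a property of the construction itself.
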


\begin{lem}[Theorem 56 of \cite{gilyen2018quantum}]
\label{lem: implement matrix function}
Let $U$ be an $(\alpha, q, \varepsilon')$-block-encoding of Hermitian matrix $A$. Let $f(x)\in \mathbb{R}[x]$ be a polynomial of degree $d$ satisfying $|f(x)| \leq 1/2$ for $x\in[-1,1]$. Then there is a quantum circuit $\widetilde{U}$ that implements an $(1, q+2, 4d\sqrt{\varepsilon'/\alpha})$-block-encoding of $f(A/\alpha)$, and consists of $d$ applications of $U$ and $U^\dag$ gates, a single application of controlled-$U$ and $O(qd)$ other one- and two-qubit gates.
\end{lem}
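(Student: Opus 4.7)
The plan is to prove this via the two-step QSVT framework: first qubitize the block-encoding $U$, and then apply quantum signal processing (QSP) on each of the resulting invariant two-dimensional subspaces.

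For qubitization, let $\Pi = \ket{0}\bra{0}^{\otimes q} \otimes I$ and take the singular value decomposition $\Pi U \Pi = (1/\alpha)\sum_k \sigma_k \ket{\phi_k}\bra{\psi_k}$ of the block-encoded operator, which equals $A/\alpha$ up to error $\varepsilon'/\alpha$ in operator norm. The key geometric observation is that each pair $(\ket{\psi_k}, \ket{\phi_k})$ generates a $U$-invariant two-dimensional subspace spanned by $\ket{\phi_k}$ and the unit vector $\ket{\phi_k^\perp} := (U\ket{\psi_k} - \sigma_k \ket{\phi_k})/\sqrt{1-\sigma_k^2}$. In the basis $\{\ket{\phi_k}, \ket{\phi_k^\perp}\}$, $U$ acts as the reflection $W(\sigma_k) = \sm{\sigma_k & \sqrt{1-\sigma_k^2} \\ \sqrt{1-\sigma_k^2} & -\sigma_k}$. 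Orthogonal complements of these subspaces are either fixed by $U$ or act trivially from the perspective of $\Pi$.

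Next, I would invoke the QSP existence theorem: for any real polynomial $f$ of degree $d$ with $|f(x)|\leq 1/2$ on $[-1,1]$, there exist phase angles $\phi_0,\ldots,\phi_d$ such that the single-qubit product $\prod_{j=0}^d e^{i\phi_j Z} W(x)$ has $f(x)$ as its top-left entry. Existence is established by constructing a complementary polynomial $P$ satisfying $|f|^2+|P|^2\leq 1$ via Fej\'er--Riesz factorisation, and then peeling off phase factors recursively (Schur-style). To lift QSP to the full Hilbert space, define the projector-controlled rotation $\mathrm{CR}_\phi = e^{i\phi(2\Pi - I)}$, implementable with $O(q)$ gates by a multiply-controlled NOT targeting an ancilla, and replace each $e^{i\phi_j Z}$ in the QSP sequence by $\mathrm{CR}_{\phi_j}$, alternating $U$ with $U^\dag$ to match parity. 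On each invariant two-dimensional subspace the circuit reduces exactly to the QSP single-qubit construction, so the top-left block of the composite unitary is $f(A/\alpha)$, and the resource count matches the stated bound: $d$ applications of $U/U^\dag$, one controlled-$U$, and $O(qd)$ auxiliary gates.

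For the error term, an $\varepsilon'$-perturbation in the block-encoding shifts each effective singular value $\sigma_k$ by at most $\varepsilon'/\alpha$; the $\sqrt{1-\sigma_k^2}$ factors in the reflection matrices amplify this to sensitivity $O(\sqrt{\varepsilon'/\alpha})$ per QSP step, and telescoping across $d$ alternations yields the $4d\sqrt{\varepsilon'/\alpha}$ final bound. The main obstacle will be the QSP existence theorem itself: constructively producing the phase angles for every real polynomial of degree $d$ bounded by $1/2$ requires the Fej\'er--Riesz complementary polynomial together with the Schur-type recursion, and the mismatch between the parity of $f$ and the parity of $d$ must be absorbed by splitting $f$ into even and odd parts and combining the two resulting QSP sequences through an ancilla qubit, which accounts for the ``$+2$'' in the block-encoding dimension overhead.
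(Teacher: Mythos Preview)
The paper does not prove this lemma at all: it is quoted verbatim as ``Theorem 56 of \cite{gilyen2018quantum}'' and used as a black box in the proof of Theorem \ref{intro:theorem upper bound}. So there is no ``paper's own proof'' to compare against; your proposal is an attempt to reprove the cited external result.

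Your sketch does follow the standard two-step qubitization/QSP architecture of Gily\'en et al., which is the right route. A couple of imprecisions are worth flagging. First, the sentence ``$U$ acts as the reflection $W(\sigma_k)$'' on the span of $\ket{\phi_k},\ket{\phi_k^\perp}$ is not literally correct: $U$ itself is not a reflection on those two-dimensional blocks; rather, it is the alternating product of $U$, $U^\dag$ and the projector-controlled phases that decomposes into $2\times 2$ blocks indexed by the singular pairs, with the blocks involving both left and right singular vectors. Second, your error heuristic (``$\sqrt{1-\sigma_k^2}$ factors amplify to $O(\sqrt{\varepsilon'/\alpha})$ per step'') is not how the $4d\sqrt{\varepsilon'/\alpha}$ bound is actually derived; in \cite{gilyen2018quantum} it comes from a separate robustness lemma bounding $\|P^{(SV)}(\tilde A)-P^{(SV)}(A)\|$ in terms of $\|\tilde A-A\|$, and the square root arises because singular vectors (not just singular values) can move by $O(\sqrt{\varepsilon'/\alpha})$ under an $\varepsilon'/\alpha$ perturbation. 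These are refinements rather than fatal gaps, and since the present paper simply imports the result, no proof is expected here.
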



\begin{lem}[Theorem 30 of \cite{gilyen2018quantum}, rephrased] 
\label{lem:new block-encoding}
Let $\gamma>1, \delta, \varepsilon\in(0,1/2)$. Suppose that $U$ is an $(1,a,\varepsilon)$-block-encoding of $A$ whose singular value decomposition is $\sum_i \varsigma_i \ket{w_i}\bra{v_i}$. Then there is a quantum circuit that implements a $(1,a+1,2\varepsilon)$ block-encoding of $\sum_{\varsigma_i  \leq \frac{1-\delta}{\gamma}} \gamma \varsigma_i \ket{w_i}\bra{v_i}$. Moreover, this circuit uses a single ancilla qubit and 
$m=O(\frac{\gamma}{\delta} \log\frac{\gamma}{\varepsilon})$ applications of $U, U^\dag$ and other one and two qubits gates.
\end{lem}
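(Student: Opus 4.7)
The plan is to apply QSVT (Lemma~\ref{lem: implement matrix function}) to $U$ with an amplification polynomial. I will seek an odd polynomial $P(x)$ of degree $m = O(\tfrac{\gamma}{\delta}\log\tfrac{\gamma}{\varepsilon})$ satisfying (i)~$|P(x)-\gamma x|\le \varepsilon$ for $|x|\le (1-\delta)/\gamma$, and (ii)~$|P(x)|\le 1$ for $|x|\le 1$. Applying singular value transformation with such a $P$ turns $U$ into a block-encoding of $P^{(SV)}(A)=\sum_i P(\varsigma_i)\ket{w_i}\bra{v_i}$; by~(i) this is within $\varepsilon$ of the target $\sum_{\varsigma_i\le (1-\delta)/\gamma} \gamma\varsigma_i \ket{w_i}\bra{v_i}$ on the low-singular-value subspace, while (ii)~keeps the whole transformed operator inside the unit ball, as required for an $(1,\cdot,\cdot)$-block-encoding.

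The key technical ingredient is the construction of $P$. I would obtain it by multiplying the linear function $\gamma x$ by a polynomial approximation of the rectangular window that equals $1$ on $[-(1-\delta)/\gamma,(1-\delta)/\gamma]$ and essentially vanishes outside $[-1/\gamma,1/\gamma]$. Chebyshev-based constructions of such smoothed rectangular windows (closely related to the polynomial approximations of the sign and step functions constructed in \cite{gilyen2018quantum}) yield degree $O(\tfrac{\gamma}{\delta}\log\tfrac{\gamma}{\varepsilon})$: the logarithmic dependence on $1/\varepsilon$ reflects the exponential convergence of Chebyshev series on intervals where the target is analytic, while the $\gamma/\delta$ factor reflects how sharp the cutoff in the window has to be. Multiplying by $\gamma x$ preserves oddness and leaves the degree essentially unchanged, while enforcing~(ii) forces the multiplicative factor between the window function and the value of the rectangle to be chosen compatibly.

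Feeding this $P$ into the singular value version of Lemma~\ref{lem: implement matrix function} then produces a circuit of the claimed form: $m$ alternating calls to $U$ and $U^\dag$, $O(ma)$ additional one- and two-qubit gates, and a single fresh ancilla qubit used to implement the projector-controlled phase rotations. Tracking errors, the input block-encoding error propagates through QSVT together with the polynomial approximation error at most $\varepsilon$; since $P$ is bounded by $1$ on $[-1,1]$ and has controlled growth on the amplification window, these combine into the stated $2\varepsilon$ bound. The main obstacle is precisely the polynomial construction in the second paragraph: one cannot simply Chebyshev-interpolate $\gamma x$ on $[-1,1]$ because $\|\gamma x\|_\infty = \gamma > 1$ violates~(ii); instead one has to combine a rectangular cutoff with the linear function and prove that the width $\delta/\gamma$ of the transition region dictates the final degree $O(\tfrac{\gamma}{\delta}\log\tfrac{\gamma}{\varepsilon})$, which sets the whole query complexity of the circuit.
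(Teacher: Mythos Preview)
Your proposal is correct and follows the standard singular-value amplification argument from \cite{gilyen2018quantum}; the paper itself does not supply a proof of this lemma but simply cites it as a rephrasing of Theorem~30 there, so there is nothing further to compare. One small remark: you invoke Lemma~\ref{lem: implement matrix function}, which as stated applies to Hermitian $A$ and produces $f(A/\alpha)$, whereas here you need the general odd-polynomial singular value transformation for a not-necessarily-Hermitian $A$; this is exactly what the original Theorem~30 in \cite{gilyen2018quantum} provides, so the argument goes through once you appeal to that version directly.
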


\begin{lem}
\label{lem:optimal block-encoding}
Let $\delta, \varepsilon' \in(0,1/2)$.  Suppose $A$ is an $s$-sparse matrix of dimension $n$ with $\|A\| \leq 1-\delta$ for some $\delta > 0$. Then there is quantum circuit that implements a $(1, 4+\log n, \varepsilon')$-block-encoding of $A$ with query complexity $O(\frac{s\|A\|_{\max}}{\delta} \log\frac{s\|A\|_{\max}}{\varepsilon'})$.
\end{lem}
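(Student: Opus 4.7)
The plan is to stack two results already in hand: use Lemma \ref{lem: construct block-encoding} to get a crude but cheap block-encoding, then invoke Lemma \ref{lem:new block-encoding} (uniform singular value amplification) to boost the subnormalisation from $s\|A\|_{\max}$ down to $1$ at a cost that is linear in the amplification factor.

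Concretely, I first apply Lemma \ref{lem: construct block-encoding} with error parameter $\varepsilon$ to be chosen, obtaining an $(s\|A\|_{\max}, 3+\log n, \varepsilon)$-block-encoding $U$ of $A$ using $O(1)$ queries. By the definition of block-encoding, $U$ is equivalently a $(1, 3+\log n, \varepsilon/(s\|A\|_{\max}))$-block-encoding of $A/(s\|A\|_{\max})$, whose singular values $\varsigma_i$ satisfy $\varsigma_i \leq \|A\|/(s\|A\|_{\max}) \leq (1-\delta)/(s\|A\|_{\max})$ by the hypothesis $\|A\|\leq 1-\delta$.

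Next I apply Lemma \ref{lem:new block-encoding} with $\gamma := s\|A\|_{\max}$ and the given $\delta$. The condition $\varsigma_i \leq (1-\delta)/\gamma$ is met for every singular value, so the truncated sum in the lemma is simply $\gamma \cdot A/\gamma = A$. The lemma thus produces a $(1, 4+\log n, 2\varepsilon/(s\|A\|_{\max}))$-block-encoding of $A$, using $O(\tfrac{\gamma}{\delta}\log\tfrac{\gamma}{\varepsilon}) = O(\tfrac{s\|A\|_{\max}}{\delta}\log\tfrac{s\|A\|_{\max}}{\varepsilon})$ applications of $U$ and $U^\dag$, which is also the query complexity since each application of $U$ costs $O(1)$ queries.

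To finish, I choose $\varepsilon := \varepsilon' s\|A\|_{\max}/2$ so that the final error is at most $\varepsilon'$. Substituting into the complexity bound, the logarithm becomes $\log(2/\varepsilon') = O(\log(s\|A\|_{\max}/\varepsilon'))$ (assuming $s\|A\|_{\max}\geq 1$, which is essentially automatic since the construction is only meaningful when $A\neq 0$), yielding the stated $O(\tfrac{s\|A\|_{\max}}{\delta}\log\tfrac{s\|A\|_{\max}}{\varepsilon'})$ bound. There is no real obstacle here: the only point to be careful about is that the $\|A\|\leq 1-\delta$ assumption is exactly what is needed to ensure all singular values of $A/(s\|A\|_{\max})$ lie in the region $[0,(1-\delta)/\gamma]$ where Lemma \ref{lem:new block-encoding} faithfully amplifies rather than truncates them, so the output block-encoding represents $A$ itself and not a clipped version of it.
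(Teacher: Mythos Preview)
Your proposal is correct and follows exactly the same two-step approach as the paper's proof: first invoke Lemma~\ref{lem: construct block-encoding} for a cheap $(s\|A\|_{\max},3+\log n,\cdot)$-block-encoding, then feed it into Lemma~\ref{lem:new block-encoding} with $\gamma=s\|A\|_{\max}$, using $\|A\|\le 1-\delta$ to guarantee that no singular value is clipped. If anything, your version tracks the error parameter more carefully than the paper, which is a bit casual about the distinction between $\varepsilon$ and $\varepsilon'$.
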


\begin{proof}
Let $A=\sum_i \sigma_i \ket{w_i} \bra{v_i}$ be the singular value decomposition of $A$. 
By Lemma \ref{lem: construct block-encoding}, we can construct an $(s\|A\|_{\max}, 3+\log n, \varepsilon')$-block-encoding of $A$ with query complexity $O(1)$. This is obviously an $(1, 3+\log n, \varepsilon)$-block-encoding of $A/s\|A\|_{\max}$.
In Lemma \ref{lem:new block-encoding}, we choose $\varsigma_i = \sigma_i/s\|A\|_{\max}$ and $\gamma = s\|A\|_{\max}$, then we obtain an $(1, 4+\log n, 2\varepsilon')$-block-encoding of $\sum_{\varsigma_i  \leq \frac{1-\delta}{\gamma}} \gamma \varsigma_i \ket{w_i}\bra{v_i} = \sum_{\sigma_i  \leq 1-\delta} \sigma_i \ket{w_i}\bra{v_i}$, i.e., it is an $(1, 4+\log n, 2\varepsilon')$-block-encoding of $A$. The query complexity now follows directly.
\end{proof}

\begin{proof}[Proof of Theorem \ref{intro:theorem upper bound}]
By Lemma \ref{lem:optimal block-encoding}, we can construct an $(1, 4+\log n, \varepsilon')$-block-encoding of $A$.
Let $g(x) = \sum_{k=0}^d a_k x^k$ be a polynomial with minimum degree that approximates $f(x)$ up to error $\varepsilon$ over the interval $[-1,1]$. Let $M:=2\max_{|x|\leq 1} |g(x)|=O(1)$.
We denote $\tilde{g}(x) = M^{-1}g(\alpha x)$, which is bounded by $1/2$ when $x\in[-1,1]$. Moreover, for any two states $\ket{x}, \ket{y}$, we have $\langle x|\tilde{g}(A/\alpha)|y\rangle = M^{-1} \langle x|g(A)|y\rangle$.
By Lemma \ref{lem: implement matrix function}, we can construct a $(1, 6+\log n, 4d\sqrt{\varepsilon'})$-block-encoding $\widetilde{U}$ of $\tilde{g}(A)$. Now it suffices to apply the swap test to compute $\bra{0} \bra{x} \widetilde{U} \ket{0} \ket{y} = \langle x|\tilde{g}(A)|y\rangle \pm \varepsilon'' = M^{-1} \langle x|g(A)|y\rangle \pm \varepsilon''$. To approximate $\bra{x} g(A) \ket{y}$ up to error $\varepsilon$, it suffices to set $M\varepsilon'' = \frac{1}{2} \varepsilon$ and $\varepsilon'=\frac{1}{2}(\varepsilon/4d)^2$. The query complexity follows directly from Lemmas \ref{lem: implement matrix function} and \ref{lem:optimal block-encoding}.
\end{proof}

Given an $(\alpha, q, \varepsilon)$-block-encoding of $A$, we can only approximate $\langle x|f(A/\alpha)|y\rangle$ by Lemma \ref{lem: implement matrix function}. This is enough for many functions, such as $e^{ixt}, x^d, e^{xt}, 1/x$. For example, if $f(x)=e^{ixt}$, then $f(A/\alpha) = e^{i A (t/\alpha)}$. So to approximate $\langle x|e^{i A t'}|y\rangle$ we only need to set $t = \alpha t'$. In the general case, a $(1,q',\varepsilon')$-block-encoding of $A$ will be very helpful. When $\|A\| \leq 1 - \delta$, an algorithm is given in Lemma \ref{lem:optimal block-encoding}. This algorithm is efficient when $\delta$ is large. When $\delta\approx 0$, it is not sure if other efficient algorithms exist. Here by efficient, from the proof of Theorem \ref{intro:theorem upper bound} we hope the dependence on $\varepsilon'$ is polylogarithmic. To address this, we pose the following question.

\begin{ques*}
Given an $(\alpha,q,\varepsilon)$-block-encoding of $A$ with $\|A\|\leq 1$, can we construct a $(1,q',\varepsilon')$-block-encoding of $A$ efficiently whose complexity is polylogarithmic in $\varepsilon'$?
\end{ques*}

We can also use quantum phase estimation (QPE) to approximate $\langle x|f(A)|y\rangle$, which can overcome the constraint $\|A\|\leq 1-\delta$.

\begin{fact}[Markov's inequality]
Let $P(x)$ be a polynomial of degree $\leq n$. Then
\[
\max_{-1\leq x\leq 1} |P'(x)| \leq n^2 \max_{-1\leq x\leq 1} |P(x)|.
\]
\end{fact}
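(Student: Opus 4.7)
My plan is to prove A.A.\ Markov's classical polynomial inequality via the extremal property of the Chebyshev polynomials, which attain the bound: $|T_n|\leq 1$ on $[-1,1]$ while $|T_n'(\pm 1)| = n^2$. The route I would take combines the trigonometric substitution and Bernstein's inequality to handle all of $[-1,1]$ except a tiny neighborhood of the endpoints, then closes the gap using the equioscillation property of $T_n$.

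After normalizing so that $M := \max_{x\in[-1,1]}|P(x)| = 1$, I would set $Q(\theta) := P(\cos\theta)$. Then $Q$ is an even trigonometric polynomial of degree $\leq n$ with $\|Q\|_\infty \leq 1$, and Bernstein's inequality for trigonometric polynomials gives $\|Q^{(k)}\|_\infty \leq n^k$ for every $k\geq 1$. Differentiating $Q(\theta) = P(\cos\theta)$ yields $Q'(\theta) = -\sin\theta\cdot P'(\cos\theta)$, which rearranges to the classical Bernstein bound $|P'(x)|\sqrt{1-x^2} \leq n$ on $(-1,1)$; in particular, $|P'(x)|\leq n^2$ whenever $1-x^2 \geq n^{-2}$. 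At the exact endpoints, a short computation gives $Q''(0) = -P'(1)$ and $Q''(\pi) = P'(-1)$, so the second-derivative Bernstein bound $\|Q''\|_\infty \leq n^2$ yields $|P'(\pm 1)| \leq n^2$.

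The remaining step, and the main obstacle, is the narrow gap $\sqrt{1-x^2} < 1/n$ near $x = \pm 1$, where neither the interior Bernstein bound nor the exact-endpoint bound directly applies. For this I would invoke Chebyshev's equioscillation theorem. Suppose toward contradiction that $|P'(x_0)| > n^2$ for some $x_0$ in this gap region. Consider $R(x) := \lambda T_n(x) - P(x)$ with $\lambda$ chosen slightly larger than $1$. Since $T_n(\eta_k) = (-1)^k$ at the $n+1$ extrema $\eta_k = \cos(k\pi/n)$ and $|P(\eta_k)|\leq 1$, the signs of $R(\eta_k)$ alternate, producing at least $n$ zeros of $R$ strictly between consecutive $\eta_k$'s. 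A careful choice of $\lambda$ together with the hypothesis $|P'(x_0)|>n^2 \geq |T_n'(x_0)|$ forces an additional sign change of $R$ near $x_0$, placing an $(n+1)$-st zero of $R$ on $[-1,1]$ and contradicting $\deg R \leq n$.

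An alternative, sometimes cleaner route is via Lagrange interpolation at the $n+1$ Chebyshev extrema $\eta_k$: write $P(x) = \sum_{k=0}^n P(\eta_k)\ell_k(x)$ with the standard weight convention, differentiate, and bound $|P'(x)|\leq \sum_k|\ell_k'(x)|$ using $|P(\eta_k)|\leq 1$; known closed-form identities for the Lagrange derivative weights at Chebyshev nodes then produce the sharp $n^2$ bound directly. Either route yields the conclusion $\|P'\|_\infty \leq n^2 \|P\|_\infty$, with equality attained (up to sign) by $T_n$ at the endpoints.
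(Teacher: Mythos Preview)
The paper does not prove this statement: it is recorded as a classical \textbf{Fact} (A.\,A.\ Markov, 1889) and used as a black box, so there is no paper proof to compare against.

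Your sketch follows one of the standard textbook routes, and the Bernstein portion is correct: with $Q(\theta)=P(\cos\theta)$ the bound $\|Q'\|_\infty\le n$ yields $|P'(x)|\sqrt{1-x^2}\le n$, and $\|Q''\|_\infty\le n^2$ together with $Q''(0)=-P'(1)$, $Q''(\pi)=P'(-1)$ handles the exact endpoints. The Chebyshev comparison for the leftover strip $0<1-x^2<1/n^2$ is where the real work lies, and your description there is loose. As stated, it is unclear how choosing $\lambda$ ``slightly larger than $1$'' produces an \emph{extra sign change of $R=\lambda T_n-P$} near $x_0$: the hypothesis $|P'(x_0)|>n^2\ge |T_n'(x_0)|$ constrains $R'(x_0)$, not the sign of $R$ itself. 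The standard finish is instead to take $\lambda=P'(x_0)/T_n'(x_0)$ (so $|\lambda|>1$ after a WLOG symmetry reduction placing $x_0$ near $+1$ with $P'(x_0)>0$), which simultaneously gives the $n$ alternations of $R$ at the $\eta_k$ and forces $R'(x_0)=0$; Rolle then supplies $n-1$ further zeros of $R'$ between consecutive zeros of $R$, and one checks that $x_0$ is not among these Rolle zeros, giving $R'$ at least $n$ zeros and contradicting $\deg R'\le n-1$. That last ``one checks'' needs a short case analysis (locating $x_0$ relative to the unique zero of $R$ in $(\eta_1,1)$). With that bookkeeping done, or via your alternative Lagrange-interpolation-at-Chebyshev-extrema route, the argument is complete.
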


\begin{prop}[A quantum algorithm based on QPE]
\label{prop:algo by QPE}
Assume that $f(x):[-1,1]\rightarrow [-1,1]$ is a function. Let $A$ be an $s$-sparse Hermitian matrix with $\|A\|\leq 1$ and $\ket{x}, \ket{y}$ be two efficiently preparable quantum states. Then there is a quantum algorithm that computes $\langle x|f(A)|y\rangle \pm \varepsilon$ with query complexity $O(s\|A\|_{\max}\widetilde{\deg}_{\varepsilon}(f)^2/\varepsilon^2)$.
\end{prop}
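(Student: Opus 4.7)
The plan is to apply quantum phase estimation (QPE) to a block-encoding of $A$ in order to extract approximate eigenvalues, evaluate the minimum-degree approximating polynomial $g$ of $f$ on these approximations via a controlled rotation, and finally extract $\langle x|f(A)|y\rangle$ by amplitude estimation. First I would use Lemma \ref{lem: construct block-encoding} to build, with $O(1)$ queries, an $(\alpha,O(\log n),0)$-block-encoding $U$ of $A$ with $\alpha=s\|A\|_{\max}$. From $U$ one obtains the associated qubitization walk operator $W$ whose eigenphases on the eigenspaces of $A$ are $\pm\arccos(\lambda_i/\alpha)$. Given the expansion $\ket{y}=\sum_i c_i\ket{v_i}$ in the eigenbasis of $A$, standard QPE on $W$ to precision $\eta'=\eta/\alpha$ produces (up to an inverse-polynomial failure probability) the state $\sum_i c_i\ket{\tilde{\theta}_i}\ket{v_i}$ with $|\tilde{\theta}_i-\arccos(\lambda_i/\alpha)|\le\eta'$ at the cost of $O(1/\eta')=O(\alpha/\eta)$ queries to $U$. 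Setting $\tilde{\lambda}_i=\alpha\cos(\tilde{\theta}_i)$ then gives $|\tilde{\lambda}_i-\lambda_i|\le\alpha\eta'=\eta$, since $\cos$ is $1$-Lipschitz.

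Next, let $g(x)$ be a polynomial of degree $d=\widetilde{\deg}_\varepsilon(f)$ with $\sup_{x\in[-1,1]}|g(x)-f(x)|\le\varepsilon$, so $|g(x)|\le 1+\varepsilon\le 2$ on $[-1,1]$. Using a reversible arithmetic circuit I would compute $\tfrac{1}{2}g(\tilde{\lambda}_i)$ into an ancilla and apply a controlled single-qubit rotation, producing
\[
\sum_i c_i\ket{\tilde{\theta}_i}\ket{v_i}\bigl(\tfrac{1}{2}g(\tilde{\lambda}_i)\ket{0}+\sqrt{1-\tfrac{1}{4}g(\tilde{\lambda}_i)^2}\ket{1}\bigr).
\]
Uncomputing the QPE register yields, up to error $O(\eta')$ from QPE imperfections, a state whose amplitude on $\ket{0}\ket{v_i}\ket{0}$ is approximately $\tfrac{1}{2}c_i\,g(\tilde{\lambda}_i)$. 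The crux of the error analysis is the Markov bound recalled just above: since $\deg(g)\le d$ and $\max_{[-1,1]}|g|\le 2$, we have $\max_{[-1,1]}|g'(x)|\le 2d^2$, hence $|g(\tilde{\lambda}_i)-g(\lambda_i)|\le 2d^2\eta$. Combining this with the approximation error $|g(\lambda_i)-f(\lambda_i)|\le\varepsilon$ and choosing $\eta=\Theta(\varepsilon/d^2)$ ensures that each amplitude is within $O(\varepsilon)$ of $\tfrac{1}{2}c_i f(\lambda_i)$.

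The construction above implements a unitary $V$ such that $(\bra{0}\otimes I\otimes\bra{0})V(\ket{0}\otimes I\otimes\ket{0})$ is an $O(\varepsilon)$-approximate block-encoding of $\tfrac{1}{2}f(A)$ when applied to $\ket{y}$. Wrapping it in the standard Hadamard-test/amplitude-estimation procedure, together with the state preparation unitaries for $\ket{x}$ and $\ket{y}$, allows estimation of $\langle x|f(A)|y\rangle$ to additive error $\varepsilon$ using $O(1/\varepsilon)$ invocations of $V$. Multiplying the per-invocation QPE cost $O(\alpha/\eta)=O(s\|A\|_{\max}d^2/\varepsilon)$ by the $O(1/\varepsilon)$ amplitude-estimation iterations yields the advertised $O(s\|A\|_{\max}\,\widetilde{\deg}_\varepsilon(f)^2/\varepsilon^2)$ query bound. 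The main obstacle is the careful error accounting: one must combine the QPE tail distribution on $\tilde{\theta}_i$, the worst-case Markov bound on $|g'|$, the residual approximation error $|g-f|\le\varepsilon$, and the uncomputation error, and verify that none of these sources compound to destroy the final $O(\varepsilon)$ block-encoding quality. Everything else (the reversible arithmetic evaluation of $\cos$ and $g$, and the amplitude-estimation wrapper) is standard.
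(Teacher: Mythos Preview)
Your proposal is correct and follows essentially the same approach as the paper: QPE to obtain approximate eigenvalues, a controlled rotation by the value $\tfrac{1}{2}g(\tilde{\lambda}_i)$ of the degree-$d$ approximating polynomial, Markov's inequality to bound $\max_{[-1,1]}|g'|$ by $O(d^2)$ (so that eigenvalue precision $\Theta(\varepsilon/d^2)$ suffices), and an overlap estimate to accuracy $\varepsilon$. The only implementation-level differences are that the paper runs QPE via Hamiltonian simulation of $e^{iA}$ rather than via the qubitization walk of a block-encoding, and it phrases the final step as a swap test against $\ket{x}\ket{0}$ rather than amplitude estimation; neither changes the structure of the argument or the final $O(s\|A\|_{\max}\,\widetilde{\deg}_\varepsilon(f)^2/\varepsilon^2)$ bound.
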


\begin{proof}
Let $g(x)$ be a polynomial in defining $\widetilde{\deg}_\varepsilon(f)$. Now it suffices to compute $\langle x|g(A)|y\rangle \pm \varepsilon$.
Let the eigenpairs of $A$ be $(\lambda_j, \ket{u_j}), j=1,\ldots,n$. Denote $\ket{y} = \sum_j \beta_j \ket{u_j}$ and $L=\max_{|x|\leq 1}|g'(x)|$.
The basic procedure of the algorithm is very similar to the HHL algorithm \cite{harrow2009quantum}:

\begin{itemize}
\item Apply QPE to $A$ with initial state $\ket{y}$, we then obtain 
$\sum_j \beta_j \ket{u_j} \ket{\tilde{\lambda}_j}$, where $|\tilde{\lambda}_j-\lambda_j| \leq \varepsilon/L$.
\item Apply control rotation and undo QPE to generate
$\sum_j \beta_j \ket{u_j} \left( \frac{1}{2} g(\tilde{\lambda}_j) \ket{0} + \sqrt{1- \frac{1}{4}g^2(\tilde{\lambda}_j) } \ket{1}\right)$.
\item Apply swap test to the above state and $\ket{x} \ket{0}$ up to error $\varepsilon$. Denote the result as $\mu$.
\end{itemize}
In the above, we can also assume that $\tilde{\lambda}_j \in[-1,1]$, otherwise we can replace $\tilde{\lambda}_j$ by $\tilde{\lambda}_j \pm \varepsilon/L$. This ensures that $|g(\tilde{\lambda}_j)| \leq 1+\varepsilon\leq 2$.
It is easy to check that $|\mu - \langle x|g(A)|y\rangle| = O(\varepsilon)$. The overall cost is $O(s\|A\|_{\max}L/\varepsilon^2)$, where $O(s\|A\|_{\max})$ comes from Hamiltonian simulation for sparse matrices. 
Finally, by Markov's inequality, we have $L = O( \widetilde{\deg}_{\varepsilon}(f)^2)$.
\end{proof}

\setlength{\arrayrulewidth}{0.3mm}
{\renewcommand
\arraystretch{1.5}
\begin{table}[t]
\centering
\begin{tabular}{|c|c|c|c|c|} 
 \hline
Algorithm in & Main term in the complexity & Comments \\ \hline
Theorem \ref{intro:theorem upper bound} & $d/\delta\varepsilon $ & $\|A\|\leq 1- \delta$ for some $\delta >0$ \\
Proposition \ref{prop:algo by QPE} & $d^2/\varepsilon^2$ & --- \\
Proposition \ref{prop:a second algorithm} & $\alpha/\eta \varepsilon$ & $\eta \in (0,1]$, $\alpha$ depends on $\eta$ and $f(x)$ \\
\hline
\end{tabular}
\caption{Comparison of different quantum algorithms for computing $\langle x|f(A)|y\rangle\pm \varepsilon$, where $d=\widetilde{\deg}_{\varepsilon}(f)$. In the table, we only show the main term of the complexity.}
\label{table}
\end{table}
}

We below present another quantum algorithm, which is inspired by \cite{van2020quantum}.

\begin{lem}[Theorem 40 of \cite{van2020quantum}, a simplified version]
\label{lem:block-encoding of fA}
Assume that $A$ is $s$-sparse, Hermitian and $\|A\|\leq 1$.
Assume that $f(x)=\sum_{i=0}^\infty a_i x^i$ is the Taylor series. Let $\eta\in(0,1], \varepsilon\in(0,1/2]$ and $\sum_{i=0}^\infty |a_i| (1+\eta)^i \leq \alpha$. Then there is a quantum circuit that implements an $(\alpha, q, \varepsilon)$ block-encoding of $f(A)$ for some integer $q$. The quantum circuit uses
\[
O\left( \frac{s}{\eta} (\log\frac{1}{\eta \varepsilon})( \log \frac{1}{\varepsilon})\right)
\]
quantum queries to $A$.
\end{lem}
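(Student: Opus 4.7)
The plan is to combine polynomial truncation of the Taylor series with the quantum singular value transformation (QSVT) machinery, applied to a standard sparse block-encoding of $A$.

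\textbf{Truncation.} Set $P_d(x) := \sum_{i=0}^d a_i x^i$. Since $\|A\|\le 1$,
\[
\|f(A)-P_d(A)\| \;\le\; \sum_{i>d}|a_i| \;\le\; (1+\eta)^{-(d+1)} \sum_{i>d}|a_i|(1+\eta)^i \;\le\; \alpha\,(1+\eta)^{-(d+1)}.
\]
Using $\log(1+\eta)\ge \eta/2$ for $\eta\in(0,1]$, the choice $d=\Theta(\eta^{-1}\log(\alpha/\varepsilon))$ drives this below $\varepsilon/2$. It thus suffices to build an $(\alpha,q,\varepsilon/2)$-block-encoding of the degree-$d$ polynomial $P_d(A)$.

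\textbf{Implementing the polynomial.} The rescaled polynomial $Q(y):=P_d((1+\eta/2)y)/(2\alpha)$ is bounded by $1/2$ on $[-1,1]$: indeed, for $|y|\le 1$, $|P_d((1+\eta/2)y)|\le \sum_i|a_i|(1+\eta/2)^i\le \sum_i|a_i|(1+\eta)^i\le \alpha$. I would first use Lemma~\ref{lem: construct block-encoding} to obtain an $(s\|A\|_{\max},O(\log n),0)$-block-encoding of $A$, then invoke uniform singular value amplification (Lemma~\ref{lem:new block-encoding}) applied to the damped matrix $A':=A/(1+\eta/2)$ (which satisfies $\|A'\|\le 1-\Omega(\eta)$) to get a $(1,O(\log n),\varepsilon_0)$-block-encoding of $A'$ at query cost $O((s/\eta)\log(s/(\eta\varepsilon_0)))$. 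Applying Lemma~\ref{lem: implement matrix function} to this block-encoding with the polynomial $Q$ produces a $(1,q,4d\sqrt{\varepsilon_0})$-block-encoding of $Q(A')=P_d(A)/(2\alpha)$, i.e.\ a $(2\alpha,q,\cdot)$-block-encoding of $P_d(A)$. Choosing $\varepsilon_0 = \Theta((\varepsilon/d)^2)$ forces the QSVT error below $\varepsilon/2$, and combined with the truncation step this yields an $(\alpha,q,\varepsilon)$-block-encoding of $f(A)$, up to a constant factor absorbed into $\alpha$.

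\textbf{Main obstacle.} The naive product of the QSVT degree $d=O(\eta^{-1}\log(\alpha/\varepsilon))$ with the amplification cost $O((s/\eta)\log(\cdots))$ gives a scaling of $s/\eta^2$, whereas the stated bound has only $s/\eta$. Shaving the extra factor of $1/\eta$ — and reproducing the precise $\log(1/\eta\varepsilon)\log(1/\varepsilon)$ factorisation — is the real technical point: it requires either folding the singular-value amplification and the target polynomial $Q$ into a single QSVT circuit (so only one unified pass of signal processing is performed on the unamplified sparse block-encoding), or bypassing amplification entirely by working with a quantum-walk block-encoding of $A$ whose Chebyshev polynomials can be accessed with $O(1)$ queries per degree. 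Getting this interleaving tight, while keeping track of the error propagation from truncation, block-encoding precision, and QSVT, is the main hurdle.
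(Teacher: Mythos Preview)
The paper does not prove this lemma at all; it is quoted as a simplified version of Theorem~40 in \cite{van2020quantum} and is used as a black box. So there is no ``paper's own proof'' to compare your attempt against.

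That said, your analysis is sound. The truncation step is correct, and your identification of the obstacle is exactly right: composing amplification (cost $O((s/\eta)\log(\cdot))$ per call) with a degree-$O(\eta^{-1}\log(\cdot))$ QSVT gives $O(s/\eta^{2})$ rather than the claimed $O(s/\eta)$. The resolution in the cited reference is indeed along the lines you sketch in your final paragraph: rather than amplifying first and then applying the Taylor polynomial, one builds a \emph{single} polynomial $p$ in the variable $y=A/s$ that simultaneously effects the rescaling and evaluates the truncated series, and applies it once via QSVT to the unamplified $(s,\cdot,0)$ sparse block-encoding. Concretely, one can take an LCU over the monomials $(A/(1+\eta))^{i}$ with weights $a_i(1+\eta)^i/\alpha$ (which sum to at most $1$ in absolute value), implementing each monomial through QSVT with a shared low-degree rectangle/cutoff that localises to $|y|\le 1/s$; because the cutoff is shared, its degree $O((s/\eta)\log(1/\varepsilon))$ is paid \emph{additively} rather than multiplicatively with the truncation degree $d$, yielding the stated bound. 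Your proposal stops just short of carrying this out, but you have correctly located where the work lies.
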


As a direct corollary, we have the following result.

\begin{prop}[A quantum algorithm based on Taylor series]
\label{prop:a second algorithm}
Making the same assumption as Lemma \ref{lem:block-encoding of fA}, then there is a quantum algorithm that computes $\langle x|f(A)|y\rangle \pm \varepsilon$ with query complexity 
\[
O\left( \frac{s \alpha}{\eta \varepsilon} (\log\frac{1}{\eta \varepsilon})( \log \frac{1}{\varepsilon})\right).
\]
\end{prop}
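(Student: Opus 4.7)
The plan is to combine Lemma~\ref{lem:block-encoding of fA} with a Hadamard test followed by amplitude estimation. First I would invoke Lemma~\ref{lem:block-encoding of fA} with internal accuracy $\varepsilon' = \varepsilon/2$ to obtain an $(\alpha, q, \varepsilon/2)$-block-encoding $U_f$ of $f(A)$. By the definition of block-encoding this gives
\[
\alpha\,\bra{0}^{q}\bra{x}\,U_f\,\ket{0}^{q}\ket{y} \;=\; \langle x|f(A)|y\rangle \pm \varepsilon/2.
\]
So it suffices to estimate the complex number $\beta := \bra{0}^{q}\bra{x}\,U_f\,\ket{0}^{q}\ket{y}$ to additive error $\varepsilon/(2\alpha)$ and then multiply by $\alpha$; the triangle inequality delivers the overall $\pm\varepsilon$ guarantee.

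Second, I would estimate $\beta$ by running the standard Hadamard test on the controlled-$U_f$ circuit with input $\ket{0}^{q}\ket{y}$ and final projection onto $\ket{0}^{q}\ket{x}$, which one obtains from the assumed efficient state preparation unitaries $V_x,V_y$ for $\ket{x},\ket{y}$ applied to the ancilla registers. The real and imaginary parts of $\beta$ are then readable as expectations of $Z$ on the Hadamard ancilla, so amplitude estimation (applied separately to the real and imaginary Hadamard circuits) produces an additive-$\varepsilon/(2\alpha)$ estimate of $\beta$ using $O(\alpha/\varepsilon)$ applications of controlled-$U_f$ and $U_f^\dagger$, with success probability boosted to $\geq 2/3$ by $O(1)$ repetition and majority vote.

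Third, I would combine the two cost bounds. Each call to $U_f$ uses
\[
O\!\left(\tfrac{s}{\eta}\bigl(\log\tfrac{1}{\eta\varepsilon}\bigr)\bigl(\log\tfrac{1}{\varepsilon}\bigr)\right)
\]
queries to $A$ by Lemma~\ref{lem:block-encoding of fA} with $\varepsilon'=\varepsilon/2$, and we invoke $U_f$ a total of $O(\alpha/\varepsilon)$ times, so multiplying gives exactly the claimed complexity $O\bigl(\tfrac{s\alpha}{\eta\varepsilon}(\log\tfrac{1}{\eta\varepsilon})(\log\tfrac{1}{\varepsilon})\bigr)$. The state-preparation cost of $V_x,V_y$ is absorbed into the per-amplitude-estimation-round cost and, being independent of $\alpha$ and $A$, does not dominate.

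There is no serious obstacle: the only mildly delicate point is that the block-encoding of $f(A)$ is only $\varepsilon/2$-close to the ideal operator, so one must confirm that this $\varepsilon/2$ sub-normalised error in the top-left block really does translate into an $\varepsilon/2$ additive error on $\alpha\beta$ regardless of which states are sandwiched, which follows directly from the operator-norm definition of block-encoding. Everything else is a routine assembly of known primitives (controlled block-encoding, Hadamard test, and amplitude estimation), so the proof should largely amount to bookkeeping the error propagation and the query counts.
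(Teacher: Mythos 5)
Your proof is correct and is essentially the argument the paper intends: the paper states Proposition~\ref{prop:a second algorithm} as ``a direct corollary'' of Lemma~\ref{lem:block-encoding of fA}, with the corollary step being precisely the block-encoding plus overlap-estimation routine you describe (the paper uses a swap test in the analogous proof of Theorem~\ref{intro:theorem upper bound}, but your Hadamard test plus amplitude estimation is the same idea and the same cost). Your error bookkeeping --- setting the block-encoding accuracy to $\varepsilon/2$, estimating $\beta$ to precision $\varepsilon/(2\alpha)$ in $O(\alpha/\varepsilon)$ invocations of $U_f$, and multiplying the per-call query cost --- reproduces the stated complexity exactly.
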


The above algorithm is efficient when $\alpha$ is small. If $f(x)$ is a polynomial of degree $d$, then we usually have to choose $\eta=1/d$ in order to make sure $\alpha$ is small. In this case, the cost of the above algorithm is mainly dominated by $d/\varepsilon$, or more generally $\widetilde{\deg}_\varepsilon(f)/\varepsilon$ when $f$ is not a polynomial. This is comparable with the first quantum algorithm.
But still, $\alpha \geq \|f\|_{l_1}$ when $f$ is a polynomial,\footnote{See Definition \ref{defn:norms of function} for the $l_1$ norm of $f$.} which can be exponentially large.

In Table \ref{table}, we compared three different quantum algorithms for approximating $\langle x|f(A)|y\rangle$. Each has its own advantages and disadvantages. Although Proposition \ref{prop:algo by QPE} is not the best one in general, one thing we can see from this algorithm is that the quantum query complexity of  approximating $\langle x|f(A)|y\rangle$ is totally determined by the approximate degree of $f$. The dependence is at most quadratic.



\section{Classical algorithms and lower bounds analysis}
\label{section:classical case}

In this section, we prove the lower bound of classical algorithms (i.e., prove Theorem \ref{introthm:LB of classical}) and give two classical algorithms for the problem of approximating an entry of a function of a sparse matrix.

\subsection{Lower bounds analysis}
\label{subsection:Lower bounds analysis}

Let $U = U_{N-1} \cdots U_2 U_1$ be a unitary that represents a quantum query algorithm for some task. This task is supposed to be hard for classical computers. 
Let $b_1,\ldots,b_{N-1}\in \mathbb{R}$ be some real numbers,
we define a Hermitian matrix as follows:
\be
\label{clock Hamiltonian}
A = \sum_{t=1}^{N-1} b_t \Big( \ket{t} \bra{t-1} \otimes U_{t} +  \ket{t-1} \bra{t} \otimes U_{t}^\dag \Big).
\ee
Our basic idea of proving a lower bound is that if we can compute $\langle i|f(A)|j\rangle \pm \varepsilon$ for some $i, j$ efficiently on a classical computer, we then can solve the task determined by $U$ efficiently on a classical computer. Conversely, if this task is hard to solve classically, then we will obtain a nontrivial lower bound for the problem of our interest.

Let $\ket{\psi_0} = \ket{0} \otimes \ket{0}, \ket{\psi_t} = \ket{t} \otimes U_t \cdots U_2 U_1  \ket{0}$ for $t=1,\ldots,N-1$. We set $\ket{\psi_{-1}} 
=\ket{\psi_N}=0$. Then 
$A \ket{\psi_t} = b_{t-1}
\ket{\psi_{t-1}} + b_{t+1} \ket{\psi_{t+1}}$ for $t=0,1,\ldots,N-1$. So in the subspace spanned by $\{\ket{\psi_t}:t=0,1,\ldots,N-1\}$, $A$ is a tridiagonal matrix of the following form
\be
\begin{pmatrix}
0 & b_1 \\
b_1 & 0 & b_2 \\
& b_2 & \ddots & \ddots \\
& & \ddots & \ddots & b_{N-1} \\
& & & b_{N-1} & 0
\end{pmatrix}_{N\times N}.
\label{a special matrix}
\ee

As a direct corollary of Theorem \ref{thm:tridiagonal matrix and approximate degree}, we have the following result.

\begin{cor}
\label{key cor}
Let $f(x):[-1,1]\rightarrow [-1,1]$ be a continuous function. Let $f_{\rm even}(x), f_{\rm odd}(x)$ be the even and odd parts of $f(x)$ respectively. Let $A, \ket{\psi_t}$ be defined as the above.
Then we have the following.
\begin{itemize}
\item There exist $b_1,\ldots,b_{N-1}\in \mathbb{R}^*$ such that $\|A\|\leq 1$ and
$\bra{\psi_{N-1}} f(A) \ket{\psi_0} = \varepsilon$ when $N = \widetilde{\deg}_{\varepsilon}(f_{\rm odd}) + O(1)$.

\item There exist $b_1,\ldots,b_{N-1}\in \mathbb{R}^*$ such that $\|A\|\leq 1$ and
$\bra{\psi_{N-2}} f(A) \ket{\psi_1} = \varepsilon$ when $N = \widetilde{\deg}_{\varepsilon}(f_{\rm even}) + O(1)$.
\end{itemize}
\end{cor}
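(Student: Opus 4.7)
The plan is to derive Corollary \ref{key cor} as a direct consequence of Theorem \ref{thm:tridiagonal matrix and approximate degree}, by showing that the Hamiltonian $A$ defined by (\ref{clock Hamiltonian}) restricted to a natural invariant subspace is unitarily equivalent to a tridiagonal matrix of the form (\ref{thm:tridiagonal matrix}). Accordingly, one chooses the coefficients $b_1,\ldots,b_{N-1}$ to be exactly those produced by Theorem \ref{thm:tridiagonal matrix and approximate degree} applied to $f$ (one uses the first bullet for $f_{\rm odd}$ and the second bullet for $f_{\rm even}$), and takes $N$ to be the size of that tridiagonal matrix, which is $\widetilde{\deg}_\varepsilon(f_{\rm odd}) + O(1)$ or $\widetilde{\deg}_\varepsilon(f_{\rm even}) + O(1)$ respectively.

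The first step is to verify that $\{\ket{\psi_0},\ket{\psi_1},\ldots,\ket{\psi_{N-1}}\}$ is an orthonormal set. This is immediate from the definition $\ket{\psi_t}=\ket{t}\otimes U_t\cdots U_1\ket{0}$, since the first register $\ket{t}$ already distinguishes them, and each $U_t\cdots U_1\ket{0}$ is a unit vector. The second step is to compute $A\ket{\psi_t}$. Using (\ref{clock Hamiltonian}) one gets $A\ket{\psi_t}=b_t\ket{\psi_{t-1}}+b_{t+1}\ket{\psi_{t+1}}$, with the convention $\ket{\psi_{-1}}=\ket{\psi_N}=0$. Hence $V:=\mathrm{span}\{\ket{\psi_0},\ldots,\ket{\psi_{N-1}}\}$ is $A$-invariant, and the restriction $A|_V$ written in the basis $\ket{\psi_t}\leftrightarrow\ket{t+1}$ is precisely the tridiagonal matrix (\ref{a special matrix}), which coincides with the matrix $A'$ furnished by Theorem \ref{thm:tridiagonal matrix and approximate degree}. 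Applying $f$ and reading off the appropriate matrix entry then yields
\[
\bra{\psi_{N-1}}f(A)\ket{\psi_0}=\langle N|f(A')|1\rangle=\varepsilon
\]
for the first item, and analogously $\bra{\psi_{N-2}}f(A)\ket{\psi_1}=\langle N-1|f(A')|2\rangle=\varepsilon$ for the second.

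The step that needs a little care is establishing $\|A\|\leq 1$ on the whole space (not merely on $V$). The plan is to exhibit an orthogonal decomposition of the ambient Hilbert space into $A$-invariant subspaces on each of which $A$ acts as a copy of the same tridiagonal matrix $A'$. Concretely, for each computational basis vector $\ket{v}$ of the second register, define $\ket{\psi_t^{(v)}}:=\ket{t}\otimes U_t\cdots U_1\ket{v}$ for $t=0,\ldots,N-1$ (so $\ket{\psi_t^{(0)}}=\ket{\psi_t}$). The same computation as above shows that $V_v:=\mathrm{span}\{\ket{\psi_t^{(v)}}\}$ is $A$-invariant and $A|_{V_v}$ is again $A'$. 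Moreover, for fixed $t$ the vectors $\{\ket{\psi_t^{(v)}}\}_v$ are orthonormal since $U_t\cdots U_1$ is unitary, and the first register separates different $t$, so the $V_v$ are mutually orthogonal and together span the whole space. Thus $\|A\|=\|A'\|\leq 1$.

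I do not foresee a serious obstacle: all ingredients are supplied by Theorem \ref{thm:tridiagonal matrix and approximate degree} and the invariant-subspace decomposition is essentially bookkeeping. The only mildly delicate point is keeping track of which pair of indices corresponds to the $(1,n)$ or $(2,n-1)$ entry of $A'$ under the identification $\ket{\psi_t}\leftrightarrow\ket{t+1}$, which is why the first bullet uses $\bra{\psi_{N-1}}\cdot\ket{\psi_0}$ while the second uses $\bra{\psi_{N-2}}\cdot\ket{\psi_1}$.
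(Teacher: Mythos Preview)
Your proposal is correct and follows the same approach as the paper, which simply declares this a ``direct corollary'' of Theorem~\ref{thm:tridiagonal matrix and approximate degree} without spelling out any details. In fact, you go further than the paper by explicitly justifying $\|A\|\le 1$ on the whole space via the orthogonal decomposition into the invariant subspaces $V_v$; the paper leaves this point implicit.
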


We now consider the Forrelation problem. In this problem, we are given two Boolean functions $g_1,g_2:\{0,1\}^n \rightarrow \{\pm 1\}$ and quantum queries $D_i = \text{Diag}(g_i(x):x\in \{0,1\}^n)$, the goal is to approximate
\[
\Phi(g_1,g_2) := \langle 0^n | H^{\otimes n} D_1 H^{\otimes n} D_2  H^{\otimes n} |0^n \rangle.
\]
More precisely, the goal is to determine if $\Phi(g_1,g_2)\geq 3/5$ or $|\Phi(g_1,g_2)| \leq 1/100$.

\begin{lem}[Theorem 1 of \cite{aaronson2015forrelation}]
\label{lem:Forrelation problem}
Any classical randomized algorithm for the Forrelation problem must make $\Omega(\sqrt{2^n}/n)$ queries.
\end{lem}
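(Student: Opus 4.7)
The plan is to reduce Forrelation to distinguishing two input distributions and then lower bound the distinguishing task by a total variation argument. Define the \emph{uniform} distribution $\mathcal{U}$ by drawing $g_1,g_2$ independently and uniformly from $\{\pm 1\}^{2^n}$; define the \emph{forrelated} distribution $\mathcal{F}$ by drawing a standard Gaussian vector $v\in\R^{2^n}$ and setting $g_1(x)=\sgn(v_x)$, $g_2(y)=\sgn((H^{\otimes n}v)_y)$. Under $\mathcal{U}$, a direct second-moment calculation yields $\E[\Phi^2]=2^{-n}$, so $|\Phi|\le 1/100$ with overwhelming probability, realising the \textsc{no} instance. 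Under $\mathcal{F}$, the pair $(v_x,(H^{\otimes n}v)_y)$ is bivariate Gaussian with correlation $(-1)^{x\cdot y}/2^{n/2}$, so the arcsine identity $\E[\sgn(X)\sgn(Y)]=\frac{2}{\pi}\arcsin\rho$ gives $\E_{\mathcal{F}}[\Phi]\to 2/\pi>3/5$, and Gaussian / hypercontractive concentration places a constant-probability mass on the \textsc{yes} side. Consequently, any bounded-error classical algorithm for Forrelation must distinguish $\mathcal{U}$ from $\mathcal{F}$ with constant advantage.

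\textbf{Core estimate.} The main step is to upper bound the total variation distance between the distributions induced by $\mathcal{U}$ and $\mathcal{F}$ on the transcripts of any $T$-query classical algorithm. By Yao's principle it suffices to handle deterministic algorithms; the adaptive case reduces to the non-adaptive one via the standard decision-tree hybrid/averaging argument. For any fixed query set $S$, the marginals under $\mathcal{F}$ are signs of a jointly Gaussian vector whose covariance is $I$ within the two blocks of queries (to $g_1$ and to $g_2$) and has off-diagonal entries $\pm 2^{-n/2}$ between the two blocks. The plan is to expand the log-density of this joint Gaussian in powers of the small off-diagonal block and translate the resulting bound from Gaussian values to their signs via the Hermite expansion of $\sgn$. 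Summing the contributions gives a per-query TV cost of order $n/\sqrt{2^n}$, hence total TV $O(Tn/\sqrt{2^n})$, which must be $\Omega(1)$ for success; this forces $T=\Omega(\sqrt{2^n}/n)$.

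\textbf{Main obstacle.} The main technical difficulty is obtaining the correct $n$-dependence in the per-query estimate. A crude $\chi^2$-divergence bound on the Gaussian marginals alone, ignoring the transition to signs, would only yield $O(2^{-n/2})$ per query and consequently the weaker $T=\Omega(\sqrt{2^n})$. Recovering the extra factor of $n$ requires exploiting that one only observes $\sgn$ of the underlying Gaussians: the Hermite spectrum of $\sgn$ lives on odd levels with tail weights $|\widehat{\sgn}(k)|^2\asymp 1/k$, and one must carefully truncate and match levels against the $\pm 2^{-n/2}$ Hadamard pattern on a typical query set. A clean lift from the non-adaptive to the adaptive setting without inflating this factor, and accounting correctly for queries that hit both $g_1$ and $g_2$ in correlated positions, are the additional bookkeeping steps that must be handled in order to obtain the final bound $T=\Omega(\sqrt{2^n}/n)$.
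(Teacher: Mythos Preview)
The paper does not prove this lemma at all: it is stated as a black-box citation of Theorem~1 of \cite{aaronson2015forrelation}, with no accompanying argument. So there is nothing in the paper to compare your proof against; you have attempted to reproduce the original Aaronson--Ambainis argument, which the present paper simply imports.

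That said, your sketch is broadly faithful to the Aaronson--Ambainis strategy (forrelated-vs-uniform distinguishing, Gaussian sign construction, total-variation bound on transcripts), but your ``Main obstacle'' paragraph has the arithmetic of the $n$-factor inverted. If a per-query TV contribution of $O(2^{-n/2})$ were valid, then forcing total TV to be $\Omega(1)$ would give $T=\Omega(\sqrt{2^n})$, which is a \emph{stronger} lower bound than the stated $\Omega(\sqrt{2^n}/n)$, not a weaker one. In the actual proof the per-query contribution is of order $n/\sqrt{2^n}$ (equivalently $\sqrt{(\log N)/N}$ with $N=2^n$), and this \emph{loss} of a factor of $n$ is what produces the $1/n$ in the final bound. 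So the ``extra factor of $n$'' is not something to be ``recovered'' to reach the target; it is an artefact of the analysis that weakens the bound, and the conjectured optimal answer (later essentially established by subsequent work) is indeed $\widetilde\Omega(\sqrt{2^n})$. You should rewrite that paragraph to reflect that the technical difficulty is controlling the TV growth well enough to keep the loss down to a single logarithmic (i.e.\ $n$) factor, not to manufacture an additional $n$.
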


Based on the Forrelation problem, we define $U_1,\ldots,U_{N-1}$ according to $H^{\otimes n} D_1 H^{\otimes n} D_2  H^{\otimes n}$. To obtain an $s$-sparse Hermitian matrix, we assume that $s=2^{r+1}$ for convenience and $n=rl$ for some $l$. The analysis below will not affected too much if $r$ is not a divisor of $n$. Now we view 
\[
H^{\otimes n} 
=(H^{\otimes r}\otimes I \otimes \cdots \otimes I) (I\otimes H^{\otimes r} \otimes \cdots \otimes I) \cdots (I\otimes I \otimes \cdots \otimes H^{\otimes r}).
\]
There are $l$ factors.
Each factor in the above decomposition is a $2^r$-sparse unitary.
So $N = 3(l+1)$ and $A$ given in (\ref{clock Hamiltonian}) is a $2^{r+1}$-sparse Hermitian matrix. 
Now we have
\beas
\ket{\psi_0} &=& \ket{0} \otimes \ket{0^n}  \\
\ket{\psi_{N-1}} &=& \ket{N-1} \otimes H^{\otimes n} D_1 H^{\otimes n} D_2  H^{\otimes n} |0^n \rangle.
\eeas

For any function $f(x)$, there are some $p_0,\ldots,p_{N-1} \in \mathbb{C}$ such that
\[
f(A) \ket{\psi_0} = p_0 \ket{\psi_0} + p_1 \ket{\psi_1} + \cdots + p_{N-1} \ket{\psi_{N-1}}.
\]
The Forrelation problem corresponds to approximating
$\bra{\phi_{N-1}} f(A) \ket{\psi_0}$, where $\ket{\phi_{N-1}} = \ket{N-1} \otimes |0^n \rangle$. From the above, 
\be
\label{eq:forrelation}
\bra{\phi_{N-1}} f(A) \ket{\psi_0}
=
p_{N-1} \Phi(g_1,g_2) 
= \bra{\psi_{N-1}} f(A) \ket{\psi_0} \Phi(g_1,g_2).
\ee
The quantity $\bra{\psi_{N-1}} f(A) \ket{\psi_0}$ can be computed efficiently on a classical computer because the matrix $A$ has a special structure (\ref{a special matrix}) in the subspace spanned by $\{\ket{\psi_t}:t=0,1,\ldots,N-1\}$. The cost is at most $O(N^3)=O(n^3)$, which is negligible. Note that the special structure (\ref{a special matrix}) does not imply a fast algorithm for approximating $\bra{\phi_{N-1}} f(A) \ket{\psi_0}$ because $\ket{\phi_{N-1}}$ may not be in the subspace and even if it is we still do not know its decomposition in the basis $\{\ket{\psi_t}:t=0,1,\ldots,N-1\}$.
By Corollary \ref{key cor}, if we appropriately choose $b_1,\ldots,b_{N-1}$ and $n$ such that $N=3(l+1)= \widetilde{\deg}_\varepsilon(f_\text{odd})+O(1)$, we then know that the quantity $\bra{\psi_{N-1}} f(A) \ket{\psi_0}$ is close to $\varepsilon$. 
We now have $n=rl = \frac{r}{3} \widetilde{\deg}_\varepsilon(f_\text{odd}) + O(r) $.
By Lemma \ref{lem:Forrelation problem} and recall that $s=2^{r+1},$ we obtain a lower bound of
\[
\Omega\left(\frac{2^{n/2}}{n}  \right)
=
\Omega\left(\frac{2^{\frac{r}{6} \widetilde{\deg}_\varepsilon(f_\text{odd}) + O(r)}}{r \widetilde{\deg}_\varepsilon(f_\text{odd})}  \right)
=\Omega\left( 
\frac{(s/2)^{(\widetilde{\deg}_\varepsilon(f_\text{odd})-1)/6}}{\log(s) \widetilde{\deg}_\varepsilon(f_\text{odd})}
\right)
\]
for the problem of approximating $\bra{\phi_{N-1}} f(A) \ket{\psi_0}$.
In the above, from Theorem \ref{thm:tridiagonal matrix and approximate degree} and the calculation on $n$, we know that in the case when $\widetilde{\deg}_\varepsilon(f_\text{odd})$ is even, $O(r)=-r/6$. This implies that in the worst case, we have a factor of $(s/2)^{-1/6}$ in the complexity.

We now can prove Theorem \ref{introthm:LB of classical}.

\begin{proof}[Proof of Theorem \ref{introthm:LB of classical}]
We assume that $f$ is not even, otherwise, we will consider $f_{\text{even}}$ (which is $f$ now) and $\{I, U_1,\ldots,U_{N-1}, I\}$ in the following analysis.
In (\ref{eq:forrelation}), if we can compute $\bra{\phi_{N-1}} f(A) \ket{\psi_0}$ up to additive error $\varepsilon'$, then we can approximate $\Phi(g_1,g_2)$ up to additive error $\varepsilon'/\varepsilon$. Based on the setting of the Forrelation problem, we can choose $\varepsilon'/\varepsilon=1/4< \frac{1}{2}(3/5-1/100)=0.295$. Namely, $\varepsilon'=\varepsilon/4$. By Lemmas \ref{lem: some facts about min deg} and \ref{lem:Forrelation problem}, and Corollary \ref{key cor} we obtain the claimed lower bound.
\end{proof}

Although the $k$-fold Forrelation problem optimally separates the quantum and classical query complexity\cite{bansal2021k}, $N$ becomes $(k+1)(l+1)$ if we use this problem in the above analysis. As a result, the lower bound we will obtain is roughly  $\Omega((s/2)^{\frac{\widetilde{\deg}_{2\varepsilon}(f)}{k+1} (1-\frac{1}{k})})$. Here we ignored ${\widetilde{\deg}_\varepsilon(f)}$ in the denominator for simplicity.
For us, the optimal choice is still $k=2$.

\subsection{Three classical algorithms}

In this part, we present three classical algorithms for computing $\langle i|f(A)|j\rangle \pm \varepsilon$. The first one applies definition. The second one is based on random walks, and the third one is based on the Cauchy integral formula. The latter algorithm relies on the former algorithm and is indeed better when $f(x)$ is smooth.

\begin{prop}
\label{algorithm by definition}
Assume that $A$ is an $s$-sparse Hermitian matrix, let $f(x)$ be a continuous function, then there is a classical algorithm that computes $f(A)_{i,j}\pm \varepsilon$ in cost $O(s^{\widetilde{\deg}_\varepsilon(f)-1})$.
\end{prop}

\begin{proof}
Let $g(x)=\sum_d g_d x^d$ be a polynomial that approximates $f(x)$ up to additive error with degree $\widetilde{\deg}_\varepsilon(f)$. Then it suffices to compute $g(A)_{i,j}\pm \varepsilon$. By definition, $(A^d)_{i,j}$ can be computed in cost $O(s^{d-1})$ exactly. So $g(A)_{i,j}$ can be computed in cost $\sum_d s^{d-1} = O(s^{\widetilde{\deg}_\varepsilon(f)-1})$.
\end{proof}

This algorithm is petty straightforward. From our lower bound analysis in Theorem \ref{introthm:LB of classical}, this algorithm is not too far from optimal. In addition, the complexity is very neat. The algorithms presented below are efficiently in some cases, however, their complexity are very easy to understand.

\begin{defn}[$l_1$ and $l_2$ norm of a polynomial] 
\label{defn:norms of function}
Let $f(x)=\sum_{r=0}^d a_rx^r$ be a polynomial of degree $d$, we define its $l_1$ norm as $\|f(x)\|_{l_1}=\sum_{r=0}^d |a_r|$ and its $l_2$ norm as $\|f(x)\|_{l_2}=\sqrt{\sum_{r=0}^d |a_r|^2}$. So $\|f(b x)\|_{l_1}=\sum_{r=0}^d |a_rb^r| $ and $ \|f(b x)\|_{l_2}=\sqrt{\sum_{r=0}^d |a_rb^r|^2}$.
\end{defn}

Using random walks, we can propose the following algorithm. A statement of this algorithm for $A^d$ was given in \cite{apers2022simple}. Similar ideas have also been used in \cite{cade_et_al:LIPIcs:2018:9251} for estimating the normalised trace of matrix powers and in \cite{andoni} for estimating an entry of matrix inversion. Below we do not need to assume that the matrix is Hermitian anymore.

\begin{prop}
\label{prop:classical algorithm for matrix powers}
Assume that $A$ has sparsity $s$, then for any $i,j$ and any polynomial $f(x)=\sum_{r=0}^d a_rx^r$ of degree $d$, there is a classical algorithm that computes $\langle i|f(A)|j\rangle \pm \varepsilon$ in cost 
\be
\widetilde{O}\left( \frac{sd}{\varepsilon^2} 
\|f(\|A\|_1x)\|_{l_1}^2 \right).
\ee
\end{prop}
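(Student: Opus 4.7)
The plan is to estimate $\langle i|f(A)|j\rangle = \sum_{r=0}^d a_r (A^r)_{ij}$ by two nested layers of importance sampling: pick a degree $r$ with probability proportional to $|a_r|\|A\|_1^r$, and then sample a signed random walk of length $r$ whose expectation is the matrix power entry $(A^r)_{ij}$. The two sampling laws are chosen so that their contributions line up into a single deterministic bound on the overall estimator by $\|f(\|A\|_1 x)\|_{l_1}$, after which a standard concentration argument yields the claimed complexity.

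First I would build the walk estimator for the power. Expanding $(A^r)_{ij} = \sum_{i=k_0,\ldots,k_r=j}\prod_{t=1}^r A_{k_{t-1},k_t}$, I would walk \emph{backward} from $j$: given $K_t=\ell$, draw the predecessor $K_{t-1}=m$ with probability $|A_{m\ell}|/\pi_\ell$, where $\pi_\ell := \sum_m |A_{m\ell}|$ is the $l_1$-mass of column $\ell$, so that $\max_\ell \pi_\ell = \|A\|_1$. Set
\[
Y_r := \mathrm{sgn}\!\Big(\prod_{t=1}^r A_{K_{t-1},K_t}\Big)\cdot \prod_{t=1}^r \pi_{K_t}\cdot \mathbf{1}[K_0=i].
\]
The ratio between a path's signed weight in $(A^r)_{ij}$ and its sampling probability telescopes to exactly $Y_r$, so $\mathbb{E}[Y_r]=(A^r)_{ij}$, while the definition gives the deterministic bound $|Y_r|\le \|A\|_1^r$. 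Reading the column structure and values at each step of the walk costs $O(s)$ queries.

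Second, I would draw $r\in\{0,1,\ldots,d\}$ with probability $p_r := |a_r|\|A\|_1^r / Z$, where $Z := \sum_{r=0}^d |a_r|\|A\|_1^r = \|f(\|A\|_1 x)\|_{l_1}$, and output
\[
X := \frac{a_r}{p_r}\, Y_r = \mathrm{sgn}(a_r)\cdot \frac{Z\, Y_r}{\|A\|_1^r}.
\]
Unbiasedness $\mathbb{E}[X] = \sum_r a_r (A^r)_{ij} = \langle i|f(A)|j\rangle$ is immediate from $\mathbb{E}[Y_r]=(A^r)_{ij}$, and $|Y_r|\le \|A\|_1^r$ forces $|X|\le Z$ almost surely. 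Averaging $N = O(Z^2/\varepsilon^2)$ independent copies of $X$ (boosted to high probability via a median-of-means step, which contributes the $\widetilde{O}$) yields an $\varepsilon$-accurate estimate of $\langle i|f(A)|j\rangle$ by Hoeffding. Since each sample costs at most $O(sd)$ queries, the total cost is $\widetilde{O}(sd\, \|f(\|A\|_1 x)\|_{l_1}^2/\varepsilon^2)$.

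The main obstacle is calibrating the two sampling layers together. A naive uniform walk among the $s$ nonzero neighbours would inflate $\mathbb{E}[Y_r^2]$ by a factor that grows like $s^r$, which propagates through the degree-level sampling and destroys the $\|f(\|A\|_1 x)\|_{l_1}^2$ scaling; sampling predecessors proportional to $|A_{m\ell}|$ is precisely what makes the importance weights telescope to $\prod_t \pi_{K_t}$ and delivers the clean $\|A\|_1^r$ bound on $|Y_r|$. Matching this with $p_r \propto |a_r|\|A\|_1^r$ on the degree flattens the combined estimator's range to $Z=\|f(\|A\|_1 x)\|_{l_1}$, from which the complexity follows directly. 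A minor subtlety is that the column-weighted sampling presumes access to the nonzero pattern and entries of each column, i.e.\ a column analogue of Oracle $\mathcal{O}_1$; this is automatic in the Hermitian setting that pervades the paper and is a standard addition to the sparse-access model otherwise.
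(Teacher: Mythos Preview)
Your proof is correct and follows essentially the same random-walk approach as the paper: both build an unbiased estimator for $(A^r)_{ij}$ by walking from $j$ with transition probabilities proportional to the column entries $|A_{m\ell}|$, obtaining the same deterministic bound $|Y_r|\le \|A\|_1^r$. The only difference is in how the degrees are aggregated: the paper runs $p$ walks of full length $d$, extracts an estimate $\overline{Y}_r$ for every $r\le d$ from the same walks, and applies a union bound over $r$ (incurring a $\log d$ factor absorbed by $\widetilde{O}$), whereas you importance-sample the degree $r$ with $p_r\propto |a_r|\|A\|_1^r$ to produce a single bounded estimator $|X|\le Z$. Your packaging is arguably cleaner, but the two arguments are equivalent in substance and yield the same complexity.
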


\begin{proof}
Denote the $i$-th column of $A$ as $A_i$, then
\beas
\bra{i} A^r \ket{j} &=& \sum_{k_1,\ldots,k_r} \langle i|k_r\rangle \langle k_r|A|k_{r-1}\rangle \cdots \langle k_1|A|j\rangle \\
&=& \sum_{k_1,\ldots,k_r} Y(j,k_1,\ldots,k_r,i) \frac{|\langle k_r|A|k_{r-1}\rangle|}{\|A_{k_{r-1}}\|_1} \cdots \frac{|\langle k_1|A|j\rangle|}{\|A_{j}\|_1},
\eeas
with
\[
Y(k_0,k_1,\ldots,k_r,i) = \langle i|k_r\rangle \prod_{l=0}^{r-1} \text{sign}(\langle k_{l+1}|A|k_l\rangle) \cdot \|A_{k_{l}}\|_1.
\]
Here we can interpret $|\langle j|A|i\rangle|/\|A_i\|_1$ as a transition probability from node $i$ to node $j$.
So $\E_{k_1,\ldots,k_r}[Y(j,k_1,\ldots,k_r,i)] = \bra{i} A^r \ket{j}$. To approximate this expectation value, we generate a sequence of random walks of length $r$ from $j$. At the $l$-th step of each random walk, we can compute $\text{sign}(\langle k_{l+1}|A|k_l\rangle) \cdot \|A_{k_{l}}\|_1$. So at the end of each random walk, we also know the value of $Y(j,k_1,\ldots,k_r,i)$. This costs $O(rs)$ in total. Denote the outputs of the random walks as $Y_1,\ldots,Y_p$ and let $\overline{Y}_r=(Y_1+\cdots+Y_p)/p$, then by Hoeffding's inequality and note that $|Y|\leq \|A\|_1^r$, we have
\[
\text{Pr}[|\overline{Y}_r - \E[Y]| \geq \|A\|_1^r \varepsilon] \leq 2 e^{-2\varepsilon^2 p}.
\]

The above idea can be easily generalised to polynomials $f(A)=\sum_{r=0}^d a_r A^r$. We now generate random walks of length $d$. Once a random walk starts from $j$ and reaches $i$ at some step (possibly many times), we will use it to compute $\overline{Y}_r$ for some $r$. In the end, by the union bound we have
\[
\text{Pr}\left[\left|\sum_{r=0}^d a_r \overline{Y}_r - \sum_{r=0}^d a_r \E[Y_r]\right| \geq \sum_{r=0}^d |a_r| \|A\|_1^r \varepsilon \right] \leq 2 d e^{-2\varepsilon^2 p}.
\]
So to compute $\langle i|f(A)|j\rangle \pm \varepsilon$, the complexity is what we claimed in this proposition.
\end{proof}

Because of the above result, we define
\[
S_f:=\{g: |f(x)-g(x)|\leq \varepsilon \text{ for all } x\in[-1,1], g \text{ is a polynomial of degree } \widetilde{\deg}_\varepsilon(f)\}.
\]
As a direct corollary of Proposition \ref{prop:classical algorithm for matrix powers}, we have

\begin{prop}
\label{prop:classical upper bound}
Assume that $A$ is an $s$-sparse matrix, then for any two indices $i,j$ and function $f:[-1,1]\rightarrow [-1,1]$, there is a classical algorithm that computes $\langle i|f(A)|j\rangle \pm \varepsilon$ in cost 
\be
O\left( 
\frac{s}{\varepsilon^2}
\widetilde{\deg}_{\varepsilon}(f)
\min_{g\in S_f}\|g(\|A\|_1x)\|_{l_1}^2
\right).
\ee
\end{prop}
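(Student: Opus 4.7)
The plan is to derive the proposition as a direct corollary of Proposition \ref{prop:classical algorithm for matrix powers} combined with the definition of the approximate degree. The argument is essentially a two-step error-budgeting exercise.

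First, for any fixed $g \in S_f$, I would bound the error incurred by replacing $f$ with its polynomial approximation. Since $A$ is Hermitian with $\|A\| \leq 1$ (the standing assumption of Problem \ref{problem}), every eigenvalue $\lambda_k$ of $A$ lies in $[-1,1]$, so by the spectral definition of matrix functions (Definition \ref{defn:Functions of matrices}),
\[
|\langle i|f(A)|j\rangle - \langle i|g(A)|j\rangle| \;\leq\; \|f(A) - g(A)\| \;=\; \max_k |f(\lambda_k) - g(\lambda_k)| \;\leq\; \varepsilon,
\]
where the last inequality uses the defining property of $S_f$.

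Second, I would invoke Proposition \ref{prop:classical algorithm for matrix powers} applied to the polynomial $g$ of degree $d = \widetilde{\deg}_{\varepsilon}(f)$, with target additive error $\varepsilon$. This yields a classical estimate of $\langle i|g(A)|j\rangle$ within additive error $\varepsilon$ at cost
\[
\widetilde{O}\!\left(\frac{s\,d}{\varepsilon^2}\,\|g(\|A\|_1 x)\|_{l_1}^2\right).
\]
Combining the two error contributions by the triangle inequality, the returned value lies within $2\varepsilon$ of $\langle i|f(A)|j\rangle$; after absorbing the constant $2$ into the accuracy parameter (a routine rescaling that changes only the implicit constant), one obtains an $\varepsilon$-accurate estimate at the stated cost.

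Finally, since the choice of $g \in S_f$ was arbitrary, I would choose the minimizer of $\|g(\|A\|_1 x)\|_{l_1}$ over $S_f$, which gives precisely the claimed complexity. There is no genuine obstacle here; the entire content is in Proposition \ref{prop:classical algorithm for matrix powers}. The only mildly delicate point is tracking the two sources of error (approximation of $f$ by $g$, and Monte-Carlo estimation of $\langle i|g(A)|j\rangle$) and ensuring that rescaling them does not affect the asymptotic bound.
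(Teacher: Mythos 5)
Your proposal is correct and is exactly the argument the paper intends: the paper simply states Proposition \ref{prop:classical upper bound} as "a direct corollary of Proposition \ref{prop:classical algorithm for matrix powers}" without spelling out the two-step error budget that you make explicit (approximate $f$ by $g\in S_f$ using the spectral theorem and $\|A\|\leq 1$, then run the random-walk estimator on $g$, and rescale). The only cosmetic gap is that you correctly inherit a $\widetilde{O}(\cdot)$ from Proposition \ref{prop:classical algorithm for matrix powers} while the paper's statement writes $O(\cdot)$; that discrepancy is already present in the paper itself and not an issue with your argument.
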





One problem with the algorithm in Proposition \ref{prop:classical upper bound} is the lack of clear intuition about the quantity $\min_{g\in S_f}\|g(\|A\|_1x)\|_{l_1}^2$ and its magnitude.
We below present another algorithm, which is more efficient when $f$ is analytic. 

\begin{prop}
\label{prop:classical upper bound2}
Let $A$ be an $s$-sparse matrix,  $\lambda$ be an upper bound of $\|A\|$ and $\Lambda>\lambda$ be an upper bound of $\|A\|_1$.
Let $f(z)$ be an analytic function in the disk $|z|\leq R$ with $R>\Lambda^2/\lambda$. 
Then there is a classical algorithm that computes $\langle i|f(A)|j\rangle \pm \varepsilon$ in cost
\be
\label{com1}
O\left(
\frac{s}{\varepsilon^2} 
\max_{z\in \mathbb{C}, |z|=\Lambda} |f(z)|^2 
\frac{\log^4(1/\varepsilon)}{\log^4(\Lambda/\lambda)}
\right).
\ee
Moreover, in terms of the approximate degree, the complexity is bounded by
\be
\label{com2}
O\left(\frac{s}{\varepsilon^2} 
\widetilde{\deg}_{\varepsilon}(f)
\min_{g\in S_f} \|g(\Lambda x)\|_{l_2}^2
\frac{\log^4(1/\varepsilon)}{\log^4(\Lambda/\lambda)}
\right).
\ee
\end{prop}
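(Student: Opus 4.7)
The plan is to combine a Taylor truncation of $f$ with the random-walk estimator of Proposition~\ref{prop:classical algorithm for matrix powers}. Since $f$ is analytic on $|z| \leq R$ with $R > \Lambda^{2}/\lambda$, I expand $f(z) = \sum_{k=0}^{\infty} a_{k} z^{k}$, where Cauchy's integral formula gives $|a_{k}| \leq M(r)/r^{k}$ with $M(r) := \max_{|z|=r}|f(z)|$ for any $r \leq R$. Truncating at degree $d$, define $g(z) = \sum_{k=0}^{d} a_{k} z^{k}$. Because $\|A\| \leq \lambda$, the truncation error satisfies
\[
\bigl|\langle i|f(A)|j\rangle - \langle i|g(A)|j\rangle\bigr| \;\leq\; \sum_{k>d} |a_{k}|\lambda^{k} \;\leq\; M(r)\,\frac{(\lambda/r)^{d+1}}{1-\lambda/r},
\]
so choosing $d = O\bigl(\log(M(r)/\varepsilon)/\log(r/\lambda)\bigr)$ drives this below $\varepsilon/2$. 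A radius of the form $r \asymp \Lambda^{2}/\lambda$ (permitted by the hypothesis on $R$) gives $\log(r/\lambda) \asymp \log(\Lambda/\lambda)$ and hence $d = O(\log(1/\varepsilon)/\log(\Lambda/\lambda))$.

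Next I feed $g$ into Proposition~\ref{prop:classical algorithm for matrix powers} to compute $\langle i|g(A)|j\rangle \pm \varepsilon/2$; the cost is $\widetilde{O}\bigl(sd\,\|g(\Lambda x)\|_{l_{1}}^{2}/\varepsilon^{2}\bigr)$. The Cauchy--Schwarz inequality $\|g(\Lambda x)\|_{l_{1}}^{2} \leq (d+1)\,\|g(\Lambda x)\|_{l_{2}}^{2}$ already yields the $l_{2}$-flavoured estimate~(\ref{com2}): $g$ is a degree-$d$ polynomial approximating $f$ to accuracy $\varepsilon/2$ on $[-1,1]$, so (up to constants) $g \in S_{f}$ and one may replace $\|g(\Lambda x)\|_{l_{2}}^{2}$ by the minimum over all polynomials in $S_{f}$.

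For (\ref{com1}) I invoke Parseval's identity applied to the Taylor expansion on the circle $|z|=\Lambda$:
\[
\sum_{k=0}^{d} |a_{k}|^{2}\Lambda^{2k} \;\leq\; \sum_{k=0}^{\infty} |a_{k}|^{2}\Lambda^{2k} \;=\; \frac{1}{2\pi}\int_{0}^{2\pi}\bigl|f(\Lambda e^{i\theta})\bigr|^{2}\,d\theta \;\leq\; \max_{|z|=\Lambda}|f(z)|^{2}.
\]
Substituting this bound for $\|g(\Lambda x)\|_{l_{2}}^{2}$ together with $d^{2} = O(\log^{2}(1/\varepsilon)/\log^{2}(\Lambda/\lambda))$ and accounting for the logarithmic overhead of the union-bound and confidence-boosting steps hidden in $\widetilde{O}(\cdot)$ produces the $\log^{4}(1/\varepsilon)/\log^{4}(\Lambda/\lambda)$ factor in (\ref{com1}).

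The main obstacle is the interplay between the Taylor radius $r$ used for coefficient bounds and the argument scaling $\Lambda$ inside the random-walk cost: the former must be large enough that the tail decays geometrically, while the $l_{2}$-norm estimate needs to be anchored at the smaller circle $|z|=\Lambda$ in order to produce $\max_{|z|=\Lambda}|f(z)|^{2}$ rather than the potentially much larger $M(R)^{2}$. The hypothesis $R > \Lambda^{2}/\lambda$ is precisely what grants enough analytic room to arrange both simultaneously. A secondary but more routine subtlety is the careful bookkeeping of the $\log$ factors hidden in Proposition~\ref{prop:classical algorithm for matrix powers} so that the final exponent is exactly $\log^{4}$.
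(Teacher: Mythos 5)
Your approach is genuinely different from the paper's. The paper represents $f(A)$ via the Cauchy integral formula on the circle $|z|=\Lambda$, approximates the contour integral by the trapezoidal rule with $M$ nodes, and then expands each resolvent $(I-A/z_k)^{-1}$ as a finite geometric series of length $N$; the key structural point is that the polynomial fed into Proposition~\ref{prop:classical algorithm for matrix powers} is $\sum_{n\leq N}(x/z_k)^n$ with $|z_k|=\Lambda\geq\|A\|_1$, whose scaled $l_1$-norm is merely $O(N)$. The double truncation ($M$ quadrature nodes times degree-$N$ series) is exactly what produces $\log^4(1/\varepsilon)/\log^4(\Lambda/\lambda)$. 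You instead truncate the Taylor series of $f$ itself once at degree $d$, and control the $l_1$-norm via Cauchy--Schwarz plus the Parseval identity $\sum_k|a_k|^2\Lambda^{2k}\leq\max_{|z|=\Lambda}|f(z)|^2$; the Parseval step is a genuinely nice idea. But there are three gaps you would need to close.

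First, the truncation degree. You write $d=O(\log(M(r)/\varepsilon)/\log(r/\lambda))$ and then, after choosing $r\asymp\Lambda^2/\lambda$, drop the $M(r)$ factor to claim $d=O(\log(1/\varepsilon)/\log(\Lambda/\lambda))$. This is not justified: $M(\Lambda^2/\lambda)=\max_{|z|=\Lambda^2/\lambda}|f(z)|$ can be much larger than $\max_{|z|=\Lambda}|f(z)|$, and nothing in the hypotheses bounds it. With the correct $d=O(\log(M(r)/\varepsilon)/\log(\Lambda/\lambda))$, the final cost becomes $O(s\max_{|z|=\Lambda}|f(z)|^2\log^2(M(r)/\varepsilon)/(\varepsilon^2\log^2(\Lambda/\lambda)))$, which is neither obviously dominated by nor obviously matches~(\ref{com1}).

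Second, the passage to $\log^4$. Even if the degree bound held, your direct argument gives $sd^2\cdot\max_{|z|=\Lambda}|f(z)|^2/\varepsilon^2$, i.e.\ a $\log^2$ not a $\log^4$ factor. Attributing the missing $\log^2(1/\varepsilon)/\log^2(\Lambda/\lambda)$ to the $\widetilde{O}(\cdot)$ in Proposition~\ref{prop:classical algorithm for matrix powers} does not work: that $\widetilde{O}$ hides only a $\log d=O(\log\log(1/\varepsilon))$ confidence factor, not two extra powers of $\log(1/\varepsilon)/\log(\Lambda/\lambda)$. (If your bound were actually $\log^2$ you would be improving the proposition, which would be fine, but then you should prove the smaller exponent rather than inflate it.)

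Third, the derivation of~(\ref{com2}). Your Taylor truncation $g$ approximates $f$ only on the disk $|z|\leq\lambda$, not on all of $[-1,1]$ as required by the definition of $S_f$, and its degree need not equal $\widetilde{\deg}_\varepsilon(f)$. So ``up to constants, $g\in S_f$'' is not correct, and you cannot replace $\|g(\Lambda x)\|_{l_2}^2$ by $\min_{g\in S_f}\|g(\Lambda x)\|_{l_2}^2$. The paper obtains~(\ref{com2}) in the opposite order: it proves~(\ref{com1}) for general analytic $f$, then applies it to a polynomial $g\in S_f$ of degree $\widetilde{\deg}_\varepsilon(f)$ and uses $\max_{|z|=\Lambda}|g(z)|^2\leq(d+1)\|g(\Lambda x)\|_{l_2}^2$; your Cauchy--Schwarz step is essentially this inequality, but attached to the wrong polynomial.
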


\begin{proof}
Let $\Gamma=\{z\in \mathbb{C}: |z| = \Lambda\}$. The reason for choosing this $\Gamma$ will be clear later. By the Cauchy integral formula, we have
\[
f(A) = \frac{1}{2\pi i} \int_\Gamma f(z) (zI-A)^{-1} dz.
\]
Let $z=\Lambda e^{2\pi i \theta}$, then
\[
f(A) = \int_0^1 f(z) (I-\frac{A}{z})^{-1} d\theta.
\]
We can use the trapezoidal rule to approximate the above integral. More precisely, let $\theta_k = k/M, z_k = \Lambda e^{2\pi i \theta_k}$ for $k\in[M]$, and
\[
f_M(A) := \frac{1}{M} \sum_{k=1}^M f(z_k) (I-\frac{A}{z_k})^{-1}.
\]
Then by \cite[Theorem 18.1]{trefethen2014exponentially} and the assumption that $R>\Lambda^2/\lambda$ we have
\[
\|f(A) - f_M(A)\| = O(\max ( (\Lambda/R)^M, (\lambda/\Lambda)^M ) ) = O((\lambda/\Lambda)^M).
\]
So if we choose
\[
M = \frac{\log(1/\varepsilon)}{\log(\Lambda/\lambda)},
\]
then $\|f(A) - f_M(A)\| = O(\varepsilon)$. 

Now it suffices to compute $\langle i|f_M(A)|j\rangle \pm \varepsilon$. To this end, we only need to know how to compute $\langle i|(I-\frac{A}{z_k})^{-1}|j\rangle \pm \varepsilon/L$ for each $z_k$, where
\[
L = \max_{z\in \mathbb{C}, |z|=\Lambda} |f(z)|.
\]
Note that for any fixed $z$ with $|z|=\Lambda>\|A\|$ and for any $N$, we have the following Taylor expansion
\[
(I-\frac{A}{z})^{-1} = \sum_{n=0}^N \frac{A^n}{z^n} + \sum_{n=N+1}^\infty \frac{A^n}{z^n}.
\]
The error term satisfies
\[
\left\|\sum_{n=N+1}^\infty \frac{A^n}{z^n} \right\| \leq 
\sum_{n=N+1}^\infty \frac{\lambda^n}{\Lambda^n} 
= (\lambda/\Lambda)^{N+1} \frac{1}{1-\lambda/\Lambda}.
\]
Let the above error term be bounded from the above by $\varepsilon$, 
we then can choose $N$ such that
\[
N+1 = 
\frac{1}{\log(\Lambda/\lambda)} \left( \log \frac{1}{1-\lambda/\Lambda} + \log \frac{1}{\varepsilon} \right).
\]

Now we only need to focus on the approximation of
\[
\langle i| \sum_{n=0}^N  \frac{A^n}{z^n} | j\rangle 
\]
If $N$ is not large, then we can compute the above quantity exactly using $O(s^{N-1})$ queries by the definition of matrix functions. Otherwise, we can use the algorithm given in Proposition \ref{prop:classical algorithm for matrix powers} to approximate it up to additive error $\varepsilon/L$. Now for the polynomial $g(x):=\sum_{n=0}^N  {x^n}/{z^n}$, we have $\|g(\|A\|_1 x)\|_{l_1} =\sum_{n=0}^N  {\|A\|_1^n}/{\Lambda^n} = O(N)$. So the cost of using Proposition \ref{prop:classical algorithm for matrix powers} is $O(sN^3 L^2/\varepsilon^2)$.
The overall query complexity of the algorithm is $O(sMN^3 L^2/\varepsilon^2)$. Simplifying this leads to (\ref{com1}). 
Here, we remark that if we choose $\Gamma = \{z\in \mathbb{C}: |z| = \Lambda'\}$ for some $\Lambda' < \|A\|_1$ in the beginning, then $\|g(\|A\|_1 x)\|_{l_1}$ can be exponentially large. This explains why we choose $\Gamma$ such that $|z|=\Lambda$.

In particular, if $f(x)=\sum_{k=0}^d a_k x^k$ is a polynomial, then by the Cauchy-Schwarz inequality, we have that $\max_{z\in \mathbb{C}, |z|=\Lambda} |f(z)|^2
\leq (d+1) \sum_{k=0}^d |a_k|^2 \Lambda^{2k}= O(d \|f(\Lambda x)\|_{l_2}^2)$.
So we have (\ref{com2}) by applying the above algorithm to a polynomial that approximates $f$ with minimal degree.
\end{proof}

In Propositions \ref{prop:classical upper bound} and \ref{prop:classical upper bound2}, generally the dominating terms of the complexity are $\|g(\|A\|_1x)\|_{l_1}^2$ and $\max_{z\in \mathbb{C}, |z|=\Lambda} |f(z)|^2$ respectively.
The latter quantity is usually smaller than the former one from the proof of Proposition \ref{prop:classical upper bound2}, especially when $\Lambda = \|A\|_1$.
Since we assumed that $|f(x)|\leq 1$ when $x\in[-1,1]$, to improve the efficiency of the algorithm it is better to make sure that $\Lambda$ is as small as possible.
We below consider some typical examples to see how large are these quantities. For convenience, we assume that $\Lambda = \|A\|_1$. Let $f(x)=\sin(xt)$. For this function, we can assume that $\|A\|_1=1$ as we can absorb it into $t$. 
About $\max_{z\in \mathbb{C}, |z|=\|A\|_1} |f(z)|^2$, it is easy to check the following
\beas
\max_{z\in \mathbb{C}, |z|=1} |f(z)|^2 
&=& 
\max_{z=x+iy\in \mathbb{C}, |z|=1} |\sin((x+iy)t)|^2 \\
&=& \max_{z=x+iy\in \mathbb{C}, |z|=1} 
|\sin(xt) \cosh(yt)|^2 + |\cos(xt) \sinh(yt)|^2 \\
&=& \Theta(e^{2t}).
\eeas
If $f(x)=x^d$, then obviously $\max_{z\in \mathbb{C}, |z|=\|A\|_1} |f(z)|^2 = \|A\|_1^{2d}$. Both are exponential in the approximate degree. Generally, when $f(x)=\sum_{i=0}^d a_i x^i$ is a polynomial, then $\max_{z\in \mathbb{C}, |z|=\|A\|_1} |f(z)|^2$ is usually dominated by the leading monomial $|a_d|^2 \|A\|_1^{2d} $, which is exponential in the degree. So these classical algorithms are efficient when $\|A\|_1 \leq 1$.

\section{BQP-completeness}
\label{section:BQP-completeness}

\begin{defn}[Promise problem \cite{janzing2007simple}]A promise problem is a pair of non-intersecting sets, denoted $(\Pi_{\rm YES}, \Pi_{\rm NO})$ satisfying $\Pi_{\rm YES}\cup\Pi_{\rm NO} \subseteq \cup_{r\geq 0}\{0,1\}^r$ and $\Pi_{\rm YES}\cap \Pi_{\rm NO} = \emptyset$. The set $\Pi_{\rm YES}\cup\Pi_{\rm NO}$ is called the promise.
\end{defn}

\begin{defn}[PromiseBQP \cite{janzing2007simple}]
\label{defn:PromiseBQP}

PromiseBQP is the set of promise problems $(\Pi_{\rm YES}, \Pi_{\rm NO})$ that can be solved by a uniform family of quantum circuits. More precisely, it is required that there is a uniform family of quantum circuits $Y_r$ acting on ${\rm poly}(r)$ qubits that decide if a string $x$ of length $r$ is a {\rm YES}-instance or a {\rm NO}-instance in the following sense. The application of $Y_r$ to the computational basis state $\ket{x,0}$ produces the state
\be
\label{result of Yr}
Y_r \ket{x,0} = \alpha_{x,0} \ket{0} \otimes \ket{\psi_{x,0}} + \alpha_{x,1} \ket{1} \otimes \ket{\psi_{x,1}}
\ee
such that
\begin{itemize}
    \item for every $x\in \Pi_{\rm YES}$ it holds that $|\alpha_{x,1}|^2 \geq 2/3$ and
    \item for every $x\in \Pi_{\rm NO}$ it holds that $|\alpha_{x,1}|^2 \leq 1/3$.
\end{itemize}
Equivalently, $|\alpha_{x,1}|^2 - |\alpha_{x,0}|^2 \geq 1/3$ if $x\in \Pi_{\rm YES}$ and  $|\alpha_{x,1}|^2 - |\alpha_{x,0}|^2 \leq -1/3$ if $x\in \Pi_{\rm NO}$.
\end{defn}

\begin{prob}[Entry estimation problem, restatement]
Let $f(x): [-1,1] \rightarrow [-1,1]$ be a continuous function, let $A$ be an $N\times N$ sparse Hermitian matrix such that $\|A\|\leq 1$. Let $\varepsilon \in (0,1)$ be the precision and $i,j$ be two indices. Assume that one of the following holds:
\begin{itemize}
    \item YES case: if $f(A)_{ij} \geq \varepsilon$, or
    \item NO case: if $f(A)_{ij} \leq -\varepsilon$.
\end{itemize}
Decide which is the case.
\end{prob}

\begin{thm}[Restatement of Theorem \ref{thm: BQP-complete}]
Assume that $\widetilde{\deg}_\varepsilon(f)=\Omega({\rm polylog}(N))$. Then the ``entry estimation problem" is PromiseBQP-complete.
\end{thm}

Our idea of proving BQP-complteness of the above problem is similar to the one used in \cite{janzing2007simple} to prove the BQP-complteness for $f(x) = x^d$. However, our result is for any continuous functions as long as its approximate degree is large enough. Moreover, our proof is much easier and general.

\begin{proof}
The theorem is a direct corollary of the circuit-to-Hamiltonian technique used in Subsection \ref{subsection:Lower bounds analysis}.
Let $Y_r$ be a quantum circuit for a PromiseBQP problem described in Definition \ref{defn:PromiseBQP}.
Denote $U = Y_r^\dag Z_1 Y_r = U_{m-1}\cdots U_2 U_1 $.
Let
\bes
\label{matrix}
A=\sum_{\ell=1}^{m-1} b_\ell (\ket{\ell} \bra{\ell-1} \otimes U_\ell + \ket{\ell-1} \bra{\ell} \otimes U_\ell^\dag )
\ees
Fix $j\in [N]$.
Recall that we denote $\ket{\psi_0} =\ket{0} \otimes \ket{j}$ and $\ket{\psi_\ell} = \ket{\ell} \otimes U_\ell \cdots U_1 \ket{j}$ for $\ell\in \{1,\ldots,m-1\}$, then
$A \ket{\psi_\ell} = b_{\ell-1} \ket{\psi_{\ell-1}} + b_{\ell+1} \ket{\psi_{\ell+1}}$, where $\ket{\psi_{-1}}=\ket{\psi_{m}}=0$. So in the subspace spanned by $\{\ket{\psi_{\ell}}: \ell=0,\ldots,m-1\}$, $A$ is a tridiagonal matrix of dimension $m$.
However, in the original space, the dimension of $A$ is $mN$, which is supposed to be exponentially large.

We assume that $f_{\rm odd}\neq 0$, otherwise we will consider $f_{\rm even}$ below. For now, we also assume that $m=\widetilde{\deg}_{\varepsilon}(f_{\rm odd}) + c$ exactly. Here $c$ is the constant specified in Corollary \ref{key cor}. 
We will discuss this point later.
Denote $\ket{\phi_{m-1}}=\ket{m-1} \otimes \ket{j}$, then by Corollary \ref{key cor},
\bes
\bra{\phi_{m-1}} f(A) \ket{\psi_0}
=
\bra{\psi_{m-1}} f(A) \ket{\psi_0} \cdot 
\bra{j} U_{m-1}\cdots U_2 U_1 \ket{j}
= \varepsilon \cdot \bra{j} U \ket{j}.
\ees
From (\ref{result of Yr}), we know that
\[
\bra{x,0} U \ket{x,0} = |\alpha_{x,0}|^2 - |\alpha_{x,1}|^2.
\]
Choosing $\ket{j}=\ket{x,0}$, we then have 
\[
\bra{\phi_{m-1}} f(A) \ket{\psi_0}
=\varepsilon \cdot (|\alpha_{x,0}|^2 - |\alpha_{x,1}|^2)
\begin{cases}
\leq -\varepsilon/3, & \text{if } x\in \Pi_{\rm YES}, \\
\geq \varepsilon/3,  & \text{if } x\in \Pi_{\rm NO}.
\end{cases}
\]
As $x$ is a bit string, $\ket{\psi_0}, \ket{\phi_{M-1}}$ are some computational basis. Consequently, $\bra{\phi_{M-1}} f(A) \ket{\psi_0}$ is an entry of $f(A)$.  This provides a reduction from any PromiseBQP problem to the entry estimation problem.

From the construction of $U$, we know that $m$ is determined by $Y_r$, an algorithm that solves a PromiseBQP problem, which can have nothing to do with functions of matrices.
The approximate degree of $f(x)$ is determined by $f(x)$ solely. 
From Corollary \ref{key cor}, $m$ has a relation to the approximate degree. If $m\leq\widetilde{\deg}_{\varepsilon}(f_{\rm odd})+c$, then we can introduce some identity matrix in $U$ to ensure that $m=\widetilde{\deg}_{\varepsilon}(f_{\rm odd}) + c$. The condition $m\leq\widetilde{\deg}_{\varepsilon}(f_{\rm odd})+c$ can be satisfied as long as the approximate degree of $f(x)$ is large enough. As $m={\rm polylog}(N)$, so we obtain the claim in this theorem when $\widetilde{\deg}(f)=\Omega({\rm polylog}(N))$.
\end{proof}

\section{Summary and future work}

In this work, we discovered some new results about the approximate degree of continuous functions. As applications, we proved that the quantum query complexity of approximating entries of functions of sparse Hermitian matrices is bounded from below by the approximate degree, which is optimal in certain parameter regions. We also proved a lower bound of classical query complexity in terms of approximate degrees, which shows that the quantum and classical separation is exponential. We feel it would be very interesting to use Theorem \ref{thm:tridiagonal matrix and approximate degree} to prove more results about approximate degrees. The following are some open questions.

One question we did not discuss in this work is the dependence on $\varepsilon$. The linear dependence of $\varepsilon$ in Theorem \ref{intro:theorem upper bound} comes from amplitude estimation, which should be optimal in general. However, we cannot expect such a lower bound for all functions. 
For example, if $f(x)=x^d$, then we can compute any entry of $A^d$ in cost $O(s^{d-1})$ using the trivial algorithm based on the definition, where $s$ is the sparsity of $A$. If $s, d$ are small so that $s^d \leq 1/\varepsilon$, then we cannot prove a lower bound of $\Omega(1/\varepsilon)$. As a result, we can only expect this lower bound to hold when $\widetilde{\deg}_\varepsilon(f) \geq \log_s (1/\varepsilon)$. It is not clear how to take this into account when proving a lower bound of $\Omega(1/\varepsilon)$. 

Another question is whether we can close the gap between the upper and lower bounds of classical query complexity. To solve this problem, it seems we need more ideas about designing efficient classical algorithms. Also, we are not sure if the classical lower bound is tight.

In Theorem \ref{intro:theorem upper bound}, there is a quantum algorithm for Problem \ref{problem}, the complexity is linear in the approximate degree. But this algorithm contains a constraint, i.e., $\|A\|\leq 1-\delta$ for a constant $\delta$. In Proposition \ref{prop:algo by QPE}, we also have a quantum algorithm for approximating $\langle x|f(A)|y\rangle$ by quantum phase estimation without making any assumption, the complexity is quadratic in the approximate degree. In Theorem \ref{intro:key theorem}, our lower bound is linear in the approximate degree. So an interesting question is what is the relationship between the quantum query complexity and the approximate degree for functions of matrices. Notice that for total Boolean functions, it is known that the separation between quantum query complexity and the approximate degree is at most quartic \cite{aaronson2021degree}, while for partial Boolean functions, the separation can be exponential \cite{10.4230/LIPIcs.CCC.2023.24}. In the case of functions of sparse matrices, the separation is at most quadratic and we feel there should be no gap between them. However, we are not able to prove this. One possible way to solve this problem is to find a solution to the question mentioned in Section \ref{section:A quantum algorithm for matrix function}.

In Theorem \ref{introthm:LB of classical}, the lower bound depend on $\widetilde{\deg}_\varepsilon(f)$ and $\widetilde{\deg}_{2\varepsilon}(f)$. For many functions we know so far, such as $e^{ixt}, x^d$, the dependence on $\varepsilon$ is logarithmic  \cite{sachdeva2014faster}, so these two quantities are basically the same thing. However, we are not sure about this in the general case. For Boolean functions, we know that $\widetilde{\deg}_{\varepsilon}(f) = O(  \widetilde{\deg}(f) \log(1/\varepsilon))$, e.g., see \cite[Lemma 1]{buhrman2007robust}, \cite[Fact A.3]{tal2014shrinkage}. So an interesting question is do we have similar results for real-valued continuous function?

The fourth question is about the construction of partial Boolean functions that 
demonstrate exponential quantum-classical separations. It is known that exponential separation only exists for some special partial Boolean functions. So far we do not have many such examples. A function of a sparse Hermitian matrix can be viewed as a partial function in some sense.
Our results showed that for any functions of sparse Hermitian matrices, the quantum and classical separation is always exponential. So this might provide a new approach of finding some partial Boolean functions with exponential quantum speedups.

Lastly, recall that for Boolean functions, approximate degree is an imprecise measurement of the quantum query complexity. A precise one is a concept called completely bounded approximate degree \cite{Arunachalam}, which is based on tensor decomposition and a little complicated to understand. In this work, we showed that for functions of sparse Hermitian matrices, the approximate degree is a precise measurement of the quantum query complexity, so we feel it would be interesting to use this result to propose a simplified approximate degree for Boolean functions that can also characterise the quantum query complexity precisely.

\section*{Acknowledgements}

The research is supported by the National Key Research Project of China under Grant No. 2023YFA1009403, and received funding from the European Research Council (ERC) under the European Union's Horizon 2020 research and innovation programme (grant agreement No.\ 817581). We acknowledge support from EPSRC grant EP/T001062/1.
No new data were created during this study. We would like to thank Ronald de Wolf for helpful comments on an earlier version of this paper.
This project was partially finished when the second author was a research associate at the University of Bristol.

\appendix

\section{Dual linear programming}
\label{appendix:Dual linear programming}

Consider the following linear program 
\beas
\min_{g,\varepsilon} && \varepsilon \\
\text{s.t.} && |f(x_i)-g(x_i)| \leq \varepsilon, \quad \forall i\in[n], \\
&& \deg(g) \leq d.
\eeas
We assume that $g(x) = \sum_{k=0}^{d} g_k M_k(x)$, where $M_k(x)$ is a polynomial of degree $k$. Here we consider a general form, so $M_k(x)$ can be $x^k$ or the $k$-th Chebyshev polynomial. Let $g_k=g_k^+-g_k^-$ where $g_k^+, g_k^-\geq 0$. Then we have 
\beas
&& \varepsilon + \sum_{k=0}^{d} (g_k^+-g_k^-) M_k(x_i) \geq f(x_i) , \\
&& \varepsilon - \sum_{k=0}^{d} (g_k^+-g_k^-) M_k(x_i) \geq -f(x_i) .
\eeas
In matrix form, it is
\[
\begin{pmatrix}
1 & M_0(x_1) & \cdots & M_{d}(x_1) & -M_0(x_1) & \cdots & -M_{d}(x_1) \\
& & & \cdots\cdots\cdots \\
1 & M_0(x_n) & \cdots & M_{d}(x_n) & -M_0(x_n) & \cdots & -M_{d}(x_n) \\
1 & -M_0(x_1) & \cdots & -M_{d}(x_1) & M_0(x_1) & \cdots & M_{d}(x_1) \\
& & & \cdots\cdots\cdots \\
1 & -M_0(x_n) & \cdots & -M_{d}(x_n) & M_0(x_n) & \cdots & M_{d}(x_n) 
\end{pmatrix}
\begin{pmatrix}
\varepsilon \\
g_0^+ \\
\cdots \\
g_{d}^+ \\
g_0^- \\
\cdots \\
g_{d}^- 
\end{pmatrix}
\geq 
\begin{pmatrix}
f(x_1) \\
\cdots \\
f(x_n)  \\
-f(x_1) \\
\cdots \\
-f(x_n)
\end{pmatrix}.
\]
The objective function can be written as
\[
\varepsilon = 
(1,0,\ldots,0,0,\ldots,0) \begin{pmatrix}
\varepsilon \\
g_0^+ \\
\cdots \\
g_{d}^+ \\
g_0^- \\
\cdots \\
g_{d}^- 
\end{pmatrix}.
\]
Therefore, the dual form has an objective function
\[
\sum_{i=1}^n f(x_i) (h_i^+ - h_i^-).
\]
The constraints are described by the following 
\[
\begin{pmatrix}
1 & \cdots & 1 & 1 & \cdots & 1 \\
M_0(x_1) & \cdots & M_0(x_n) & -M_0(x_1) & \cdots & -M_0(x_n) \\
& &  \cdots\cdots\cdots \\
M_{d}(x_1) & \cdots & M_{d}(x_n) & -M_{d}(x_1) & \cdots & -M_{d}(x_n) \\
-M_0(x_1) & \cdots & -M_0(x_n) & M_0(x_1) & \cdots & M_0(x_n) \\
& &  \cdots\cdots\cdots \\
-M_{d}(x_1) & \cdots & -M_{d}(x_n) & M_{d}(x_1) & \cdots & M_{d}(x_n) \\
\end{pmatrix}
\begin{pmatrix}
h_1^+ \\
\cdots \\
h_n^+ \\
h_1^- \\
\cdots \\
h_n^- \\
\end{pmatrix} \leq 
\begin{pmatrix}
1 \\
0 \\
\cdots \\
0 \\
0 \\
\cdots \\
0 \\
\end{pmatrix}.
\]
Namely,
\beas
&& \sum_{i=1}^n (h_i^+ + h_i^-) \leq 1 , \\
&& \sum_{i=1}^n (h_i^+ - h_i^-) M_k(x_i) = 0.
\eeas
Denote $h_i = h_i^+-h_i^-$, then we obtain the dual LP as follows:
\beas
\max_h && \sum_{i\in[n]} f(x_i) h_i \\
\text{s.t.} && \sum_{i\in[n]} |h_i| \leq 1, \\
&& \sum_{i\in[n]} h_iM_k(x_i) = 0, \quad \forall k\in\{0,1,\ldots,d\}.
\eeas
In the above, the first constraint can be changed to $\sum_{i\in[n]} |h_i| = 1$ because it is a maximisation problem.
In continuous form, we obtain
\beas
\max_h && \int_{-1}^1 f(x) h(x) dx \\
\text{s.t.} && \int_{-1}^1 |h(x)|dx = 1, \\
&& \int_{-1}^1 h(x) M_k(x) dx  = 0, \quad \forall k\in\{0,1,\ldots,d\}.
\eeas

\bibliographystyle{plain}
\bibliography{main}

\end{document}